\def\baselinestretch{1.1}
\newtheorem{thm}{Theorem}[section]
\newtheorem{dfn}[thm]{Definition}
\newtheorem{prop}[thm]{Proposition}
\newtheorem{lem}[thm]{Lemma}
\newtheorem{exmpl}[thm]{Example}
\newtheorem{obs}[thm]{Remark}
\def\beq{\begin{equation}}
\def\eeq{\end{equation}}
\def\bea{\begin{eqnarray}}
\def\eea{\end{eqnarray}}
\def\beann{\begin{eqnarray*}}
\def\eeann{\end{eqnarray*}}
\def\ben{\begin{enumerate}}
\def\een{\end{enumerate}}
\def\bit{\begin{itemize}}
\def\eit{\end{itemize}}
\def\ds{\displaystyle}
\def\derpar#1#2{\frac{\partial{#1}}{\partial{#2}}}
\newcommand\restr[2]{{
  \left.\kern-\nulldelimiterspace 
  #1 
  \right|_{#2} 
}}
\newcommand{\R}{\mathbb{R}}
\renewcommand{\d}{\mathrm{d}}
\renewcommand{\L}{\mathcal{L}}
\renewcommand{\H}{\mathcal{H}}
\newcommand{\vf}{\mathfrak{X}}
\newcommand{\df}{\Omega}
\newcommand{\Cinfty}{\mathscr{C}^\infty}
\newcommand{\Tan}{\mathrm{T}}
\newcommand{\inn}{i}
\newcommand{\Lie}{\mathscr{L}}
\newcommand{\U}{\mathcal{U}}
\newcommand{\X}{\mathfrak{X}}
\newcommand{\Reeb}{\mathcal{R}}
\title{\Large\sffamily
New contributions to the Hamiltonian and Lagrangian contact formalisms\\
for dissipative mechanical systems
and their symmetries}
\author{\sffamily
$^a$Jordi Gaset, 
$^b$Xavier Gr\`acia, 
$^b$Miguel C. Mu\~noz-Lecanda,
$^b$Xavier Rivas and 
$^b$Narciso Rom\'an-Roy%
\thanks{emails:
jordi.gaset@uab.cat,
xavier.gracia@upc.edu,
miguel.carlos.munoz@upc.edu,
xavier.rivas@upc.edu,
narciso.roman@upc.edu}
\\[1ex]
\normalsize\itshape\sffamily 
$^a$Dept.\ of Physics,
Universitat Aut\`onoma de Barcelona.
\normalsize\itshape\sffamily 
Bellaterra, Catalonia, Spain
\\[0.1ex]
\normalsize\itshape\sffamily 
$^b$Dept.\ of Mathematics,
Universitat Polit\`ecnica de Catalunya.
\normalsize\itshape\sffamily 
Barcelona, Catalonia, Spain
}
\date{\sffamily 
9 June 2020}
\begin{document}
\maketitle

\kern -12mm
\begin{abstract}
\noindent
We provide new insights into the contact Hamiltonian and Lagrangian formulations
of dissipative mechanical systems.
In particular, we state a new form of the contact dynamical equations, 
and we review two recently presented Lagrangian formalisms,
studying their equivalence.
We define several kinds of symmetries
for contact dynamical systems,
as well as the notion of dissipation laws,
prove a dissipation theorem 
and give a way to construct conserved quantities.
Some well-known examples of dissipative systems 
are discussed.
\end{abstract}

\noindent\textbf{Keywords:}
contact manifold, dissipative system, Lagrangian formalism, Hamiltonian formalism, symmetry, dissipation law.

\noindent\textbf{MSC\,2010 codes:}
 37J55, 53D10, 70H05, 70H33, 70G45, 70G65, 70H03.

\medskip
\setcounter{tocdepth}{2}
{
\def\baselinestretch{0.97}
\small
\def\addvspace#1{\vskip 1pt}
\parskip 0pt plus 0.1mm
\tableofcontents
}

\newpage
\section{Introduction}

Many theories in modern physics can be formulated using the tools of differential geometry.
For instance, 
the natural geometric framework for autonomous Hamiltonian mechanical systems is symplectic geometry 
\cite{abraham1978,arn78,lib87}, 
whereas its nonautonomous counterpart can be nicely described using cosymplectic geometry
\cite{CLM91,chi94,LS-2016}.
Geometric approaches to field theory make use of
multisymplectic and
$k$-symplectic geometry,
among others
\cite{Carinena1991,LeSaVi2016,RRSV2011}.
All these methods are developed to model systems
of variational type without dissipation or damping, 
both in the Lagrangian and Hamiltonian formalisms.

In recent years there has been a growing interest in studying 
a geometric framework to describe dissipative systems 
\cite{CG-2019,Galley-2013,MPR-2018,Ra-2006}.
An especially useful tool is contact geometry \cite{BHD-2016,BGG-2017,CNY-2013,Geiges-2008}.
Contact geometry has been used to describe several types of physical systems
in thermodynamics, quantum mechanics, circuit theory, control theory, etc.\
(see for instance
\cite{Bravetti-2019,Goto-2016,KA-2013,RMS-2017}).
In recent articles 
\cite{GGMRR-2019,GGMRR-2020}
a generalization of contact geometry has been used
to describe field theories with dissipation.

In this paper we are primarily interested in mechanics.
Contact geometry has been applied to give
a Hamiltonian-type description of mechanical systems with dissipation
\cite{bravetti2017,BCT-2017,LL-2018,LIU2018,Vi-2018},
although in recent papers the Lagrangian formalism has been also considered
\cite{Ciaglia2018,DeLeon2019}.
The model introduced in \cite{DeLeon2019} is more general 
and it is inspired in the cosymplectic formulation
of non-autonomous mechanical systems;
meanwhile the other one \cite{Ciaglia2018} 
describes a more particular, but very usual, 
type of Lagrangian systems with dissipation.

Our aim is to enlarge these lines of research, 
giving new insights into the contact formulation of these systems,
both in the Hamiltonian and the Lagrangian formalisms.
While carefully reviewing these formulations,
we present a new form of the contact Hamiltonian and Lagrangian equations.
We also expand the formulation introduced in 
\cite{Ciaglia2018} 
and compare it with the one given in  
\cite{DeLeon2019}, 
and we show that they are equivalent
for the Lagrangians with dissipation studied in 
\cite{Ciaglia2018}.
It is interesting to point out that 
the first Lagrangian formulation of contact systems
was introduced by Herglotz, from a variational perspective, 
around 1930
\cite{He-1930,Her-1985}
(see also \cite{GeGu2002}).

We also introduce and study symmetries and dissipated quantities in a geometrical way.
We define different kinds of symmetries for contact Hamiltonian and Lagrangian systems 
(the so-called dynamical and contact),
depending on which structure is preserved, 
and establish the relations among them. 
We have also compared the characteristics of symmetries 
for symplectic and contact dynamical systems, 
showing that there are significant differences
between them. 
Other relevant results are 
the statements of the so-called ``dissipation theorems'', 
which are analogous to the conservation theorems of conservative systems,
and the association of dissipated and conserved quantities to these symmetries.

Some well-known physical examples 
are discussed using this geometric framework.
In particular, we propose contact Lagrangian functions 
that describe
the damped harmonic oscillator, the motion in a gravitational field with friction, and the parachute equation.

The paper is organized as follows.
Section \ref{prel} is devoted to review several preliminary concepts on
contact geometry and contact Hamiltonian systems,
including a new setting of the dynamical equations.
In Section \ref{section-contact} we study the
Lagrangian formalism for dissipative systems in detail,
and we compare the formulations given in
\cite{Ciaglia2018} and~\cite{DeLeon2019}.
In Section~\ref{sims} we analyze the concepts of 
symmetry and dissipation law for dissipative systems,
prove a dissipation theorem
and show how to obtain conserved quantities. 
Finally, Section \ref{examples} is devoted to the examples.

Throughout the paper all the manifolds and mappings are assumed to be smooth. 
Sum over crossed repeated indices is understood.

\section{Preliminaries}
\label{prel}

In this section we present some geometric structures 
that will be used to describe the Lagrangian formalism 
of dissipative dynamical systems
(see, for instance,
\cite{%
bravetti2017,
BCT-2017,
Geiges-2008,
LL-2018} 
for details).

\subsection{Contact geometry and contact Hamiltonian systems}

\begin{dfn}\label{dfn-contact-manifold}
    Let $M$ be a $(2n+1)$-dimensional  manifold. 
    A \textbf{contact form} in $M$ is a differential $1$-form
    $\eta\in\df^1(M)$ such that $\eta\wedge(\d\eta)^{\wedge n}$
    is a volume form in $M$.
    Then, $(M,\eta)$ is said to be a \textbf{contact manifold}.
\end{dfn}

As a consequence of the condition that
$\eta\wedge(\d\eta)^{\wedge n}$ is a volume form we have a decomposition of 
$\Tan M$, induced by $\eta$, in the form 
$\Tan M= \ker\d\eta\oplus\ker\eta\equiv\mathcal{D}^{\rm R}\oplus\mathcal{D}^{\rm C}$. 

\begin{prop}
Let $(M,\eta)$ be a contact manifold. 
Then there exists a unique vector field $\Reeb\in\X(M)$, 
which is called \textbf{Reeb vector field}, 
such that
\begin{equation}\label{eq-Reeb}
    \begin{cases}
        \inn(\Reeb)\d\eta = 0,\\
        \inn(\Reeb)\eta = 1.
    \end{cases}
\end{equation}
\end{prop}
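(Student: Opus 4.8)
The plan is to obtain $\Reeb$ as the $\flat$-preimage of $\eta$ under a suitable vector bundle isomorphism, which at the same time delivers smoothness and uniqueness for free. Define the bundle morphism $\flat\colon\Tan M\to\Tan^*M$ over $M$ by $\flat(v)=\inn(v)\d\eta+(\inn(v)\eta)\,\eta$ on each fibre; this is manifestly smooth and fibrewise linear, and both bundles have rank $2n+1$, so it suffices to check that $\flat$ is injective on each fibre. First I would note that if $\flat(v)=0$ at a point, contracting with $v$ gives $\inn(v)\inn(v)\d\eta+(\inn(v)\eta)^2=0$; the first term vanishes because $\d\eta$ is a $2$-form, hence $\inn(v)\eta=0$ and then $\inn(v)\d\eta=0$ too, i.e.\ $v\in\ker\eta\cap\ker\d\eta$. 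By the decomposition $\Tan M=\ker\d\eta\oplus\ker\eta\equiv\mathcal{D}^{\rm R}\oplus\mathcal{D}^{\rm C}$ recalled just above, this intersection is trivial, so $v=0$ and $\flat$ is a vector bundle isomorphism.

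Next I would set $\Reeb:=\flat^{-1}(\eta)$, which is a smooth vector field since the inverse of a smooth bundle isomorphism is smooth. By construction $\inn(\Reeb)\d\eta+(\inn(\Reeb)\eta)\,\eta=\eta$. Put $f:=\inn(\Reeb)\eta$ and contract this identity with $\Reeb$: the first term vanishes as before and one is left with $f^2=f$, so $f(p)\in\{0,1\}$ at every point $p$. The value $f(p)=0$ is impossible: it would force $\eta_p=\inn(\Reeb_p)\d\eta$, but the right-hand side annihilates $\ker\d\eta_p$ (by antisymmetry of $\d\eta$), whereas $\eta_p$ does not vanish on $\ker\d\eta_p$, since $\ker\d\eta\cap\ker\eta=\{0\}$ forces $\ker\d\eta_p\not\subseteq\ker\eta_p$, and $\ker\d\eta_p$ is one-dimensional (because $\eta_p\neq0$ makes $\dim\ker\eta_p=2n$). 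Hence $f\equiv1$, that is $\inn(\Reeb)\eta=1$, and substituting back gives $\inn(\Reeb)\d\eta=0$, so $\Reeb$ satisfies \eqref{eq-Reeb}.

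For uniqueness, if $\Reeb'$ also satisfies \eqref{eq-Reeb}, then $\flat(\Reeb')=\inn(\Reeb')\d\eta+(\inn(\Reeb')\eta)\,\eta=0+\eta=\eta=\flat(\Reeb)$, and injectivity of $\flat$ yields $\Reeb'=\Reeb$.

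I do not expect a genuine obstacle; the only points needing care are the smoothness of $\Reeb$ — which the musical-isomorphism formulation handles automatically, as opposed to the alternative argument that chooses local generators of the line distribution $\ker\d\eta$ and normalizes them by the nowhere-vanishing function $\eta(\text{generator})$ — and the small check that the normalization of the resulting field is $\inn(\Reeb)\eta=1$ rather than $0$. Everything else is fibrewise linear algebra resting on the already-stated splitting $\Tan M=\mathcal{D}^{\rm R}\oplus\mathcal{D}^{\rm C}$ and on $\d\eta$ having maximal rank $2n$.
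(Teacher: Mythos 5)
The paper states this proposition without proof, so there is no in-paper argument to compare against; your proof is correct and complete. You take the standard route via the musical map $\flat(v)=\inn(v)\d\eta+(\inn(v)\eta)\,\eta$, which is precisely the $\Cinfty(M)$-module isomorphism the paper itself introduces (also without proof) at the end of the same subsection, so your argument has the side benefit of supplying the missing justification that $\flat$ is a bundle isomorphism. Each step checks out: fibrewise injectivity reduces, after contracting $\flat(v)=0$ with $v$, to $\ker\d\eta\cap\ker\eta=\{0\}$, which is exactly the content of the splitting $\Tan M=\ker\d\eta\oplus\ker\eta$ recorded just before the statement; setting $\Reeb=\flat^{-1}(\eta)$ and $f=\inn(\Reeb)\eta$, the identity $f^2=f$ and your exclusion of the value $0$ (using that $\ker\d\eta_p$ is a line not contained in $\ker\eta_p$) correctly force $f\equiv 1$ and hence $\inn(\Reeb)\d\eta=0$; and uniqueness is immediate from injectivity of $\flat$. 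The only alternative worth mentioning is the more common textbook argument you yourself allude to — take a local nonvanishing section of the line field $\ker\d\eta$ and normalize by the nowhere-zero function $\eta(\cdot)$ — which is more elementary but requires a separate patching/uniqueness step to get a global smooth field; your formulation gets smoothness and uniqueness in one stroke.
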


This vector field generates the distribution ${\cal D}^{\rm R}$, 
which is called the {\sl Reeb distribution}.

\begin{prop}\label{prop-adapted-coord}
On a contact manifold there exist local coordinates
$(x^I;s)$
such that
$$
    \Reeb = \frac{\partial}{\partial s}
    \,,\quad
    \eta = \d s - f_I(x) \,\d x^I
    \,,
$$
where $f_I(x)$ are functions depending only on the~$x^I$.
(These are the so-called 
$\textsl{adapted coordinates}$
of the contact structure.)
\end{prop}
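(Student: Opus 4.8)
The plan is to reduce the statement to the classical flow-box (straightening) theorem for vector fields, combined with the defining equations \eqref{eq-Reeb} of the Reeb vector field. Since $\Reeb$ satisfies $\inn(\Reeb)\eta=1$, it is nowhere vanishing; hence around any point of $M$ the flow-box theorem provides local coordinates $(x^1,\dots,x^{2n},s)$ in which $\Reeb=\partial/\partial s$. This already yields the first of the two desired identities, so it only remains to determine the form of $\eta$ in these coordinates.

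Write $\eta = g\,\d s + h_I\,\d x^I$ for smooth functions $g$ and $h_I$. The condition $\inn(\Reeb)\eta=1$ forces $g=1$, so $\eta = \d s + h_I\,\d x^I$ and therefore $\d\eta = \d h_I\wedge\d x^I$. Expanding $\d h_I = (\partial h_I/\partial s)\,\d s + (\partial h_I/\partial x^J)\,\d x^J$ and taking the interior product with $\Reeb=\partial/\partial s$, the $\d x^J\wedge\d x^I$ terms drop out and one is left with $\inn(\Reeb)\d\eta = (\partial h_I/\partial s)\,\d x^I$. The remaining condition $\inn(\Reeb)\d\eta=0$ then says precisely that $\partial h_I/\partial s = 0$ for every $I$, i.e. each $h_I$ depends only on the $x^J$. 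Setting $f_I(x):=-h_I(x)$ gives $\eta = \d s - f_I(x)\,\d x^I$, as claimed.

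There is essentially no serious obstacle here: the only points requiring care are invoking the flow-box theorem correctly (which needs $\Reeb\neq 0$, guaranteed by $\inn(\Reeb)\eta=1$) and keeping track of indices and signs in the computation of $\d\eta$ and of the contraction $\inn(\Reeb)\d\eta$. One could optionally remark that the contact condition on $\eta\wedge(\d\eta)^{\wedge n}$ is automatically preserved, being coordinate-independent, and translates into a nondegeneracy condition on the functions $f_I$; but this is not needed for the statement as formulated.
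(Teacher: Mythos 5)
Your proof is correct and follows essentially the same route as the paper's: rectify $\Reeb$ via the flow-box theorem, then use the two defining conditions $\inn(\Reeb)\eta=1$ and $\inn(\Reeb)\d\eta=0$ to pin down the coefficient of $\d s$ and the $s$-independence of the remaining coefficients. Yours simply spells out the contraction computation (and the nonvanishing of $\Reeb$) that the paper leaves implicit.
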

\begin{proof}
Let $(x^I,s)$, $I=1,\ldots,n$, local coordinates in $M$ rectifying the vector field $\Reeb$, that is $\Reeb=\partial/\partial s$ in the domain of these coordinates.

Then $\eta=a\,\d s-f_I(x,s)\d x^I$. The conditions defining $\Reeb$ imply that $a=1$ and $\partial f_I/\partial s=0$, hence the result is proved.
\end{proof}

Nevertheless, one can go further and prove the existence of Darboux-type coordinates:

\begin{thm}[Darboux theorem for contact manifolds]
Let $(M,\eta)$ be a contact manifold. 
Then around each point $p\in M$ there exist a chart 
$(\U; q^i, p_i, s)$ with $1\leq i\leq n$ such that
\begin{equation*}
	\restr{\eta}{\U} = \d s - p_i\,\d q^i \  .
\end{equation*}
These are the so-called \textbf{Darboux} or \textbf{canonical coordinates} of the contact manifold $(M,\eta)$.
\end{thm}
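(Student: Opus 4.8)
The plan is to reduce the statement to the classical Darboux theorem for symplectic forms, starting from the adapted coordinates already provided by Proposition~\ref{prop-adapted-coord}. First I would apply that proposition to obtain, around $p$, coordinates $(x^I,s)$ (now with $I=1,\dots,2n$) in which $\Reeb=\partial/\partial s$ and $\eta=\d s-\theta$, where $\theta=f_I(x)\,\d x^I$ has coefficients depending on the $x^I$ alone; thus, on the chart domain $\U_0\cong V\times J$ with $V\subseteq\R^{2n}$, the form $\theta$ is the pullback of a genuine $1$-form on the base $V$.

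Next I would read off the consequences of the contact condition. Since $\theta$ and $\d\theta$ live on the $2n$-dimensional base, $\theta\wedge(\d\theta)^{\wedge n}=0$ there for dimensional reasons, so $\eta\wedge(\d\eta)^{\wedge n}=(-1)^n\,\d s\wedge(\d\theta)^{\wedge n}$; the volume condition then forces $(\d\theta)^{\wedge n}$ to be nowhere zero on $V$, i.e.\ $\omega:=\d\theta$ is a symplectic form on $V$. Invoking the symplectic Darboux theorem (and shrinking $V$ if necessary) I get coordinates $(q^i,p_i)$, $1\le i\le n$, on $V$ with $\omega=\d p_i\wedge\d q^i$. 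Regarded as functions of the $x^I$ only, the $q^i,p_i$ together with $s$ form a chart on a smaller neighbourhood of $p$.

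Then I would remove the remaining discrepancy by a shift of the $s$-coordinate. By construction $\d(\theta-p_i\,\d q^i)=\d\theta-\d p_i\wedge\d q^i=0$, so the Poincar\'e lemma (after a further shrinking) gives a function $g=g(q,p)$ with $\theta-p_i\,\d q^i=\d g$. Hence $\eta=\d s-p_i\,\d q^i-\d g=\d(s-g)-p_i\,\d q^i$. Putting $s':=s-g$, the functions $(q^i,p_i,s')$ form a coordinate system on a neighbourhood $\U$ of $p$ — the change from $(x^I,s)$ has block-triangular, invertible Jacobian, since $(q^i,p_i)$ is a chart on $V$ and $\partial s'/\partial s=1$ — and in these coordinates $\restr{\eta}{\U}=\d s'-p_i\,\d q^i$. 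Renaming $s'$ as $s$ concludes the argument.

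The only substantial ingredient is the symplectic Darboux theorem, which I would take as known; everything else is bookkeeping with pullbacks and successive shrinkings of the domain, plus the Jacobian check for the last change of coordinates. The one point that needs attention is the sign and ordering convention in the symplectic normal form: it must be chosen so that $\theta-p_i\,\d q^i$ comes out genuinely closed rather than merely closed up to sign. (Alternatively one could bypass Proposition~\ref{prop-adapted-coord} and apply the symplectic Darboux theorem on the symplectization $(M\times\R,\,\d(e^\tau\eta))$, then descend to $M$; but the route above is more self-contained given what has been established so far.)
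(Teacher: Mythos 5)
The paper states this Darboux theorem without proof (it is quoted as a classical result, with the only proved ingredient being Proposition~\ref{prop-adapted-coord}), so there is no in-paper argument to compare yours against; judged on its own, your proof is correct and is the standard reduction to the symplectic Darboux theorem. Each step checks out: the adapted coordinates give $\eta=\d s-\theta$ with $\theta$ pulled back from the $2n$-dimensional slice $V$ (and you rightly correct the index range to $I=1,\dots,2n$); the identity $\eta\wedge(\d\eta)^{\wedge n}=(-1)^n\,\d s\wedge(\d\theta)^{\wedge n}$ holds because $\theta\wedge(\d\theta)^{\wedge n}$ is a $(2n+1)$-form pulled back from a $2n$-dimensional manifold; and the final shift $s\mapsto s-g$ has the invertible block-triangular Jacobian you describe. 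The sign issue you flag is real but harmless: one can always choose the symplectic Darboux coordinates so that $\omega=\d\theta=\d p_i\wedge\d q^i$ (swapping the roles of $q^i$ and $p_i$ if needed), which makes $\theta-p_i\,\d q^i$ genuinely closed and is moreover the convention consistent with the paper's canonical model, where $\d\eta=\d q^i\wedge\d p_i=-\d p_i\wedge\d q^i$. The only external inputs are the symplectic Darboux theorem and the Poincar\'e lemma, both reasonable to assume; your alternative via the symplectization $\d(e^{\tau}\eta)$ would also work but, as you note, is less economical given that Proposition~\ref{prop-adapted-coord} is already available.
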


In Darboux coordinates, the Reeb vector field is
$\displaystyle\restr{\Reeb}{\U} = \frac{\partial}{\partial s}$.

\begin{exmpl}
\label{example-canmodel}
\rm
(Canonical contact structure).
The manifold
$M=\Tan^*Q\times \R$
has a canonical contact structure defined by the 1-form
$\eta= \d s- \theta$,
where $s$
is the cartesian coordinate of $\R$,
and 
$\theta$
is the pull-back of the canonical 1-form of $\Tan^*Q$
with respect to the projection
$M \to \Tan^*Q$.
Using coordinates $q^i$ on~$Q$ and natural coordinates
$(q^i,p_i)$ on $\Tan^*Q$,
the local expressions of the contact form is
$\eta=\d s-p_i \,\d q^i$;
from which 
$\d \eta=\d q^i\wedge\d p_i$,
and the Reeb vector field is
$\displaystyle\Reeb= \frac{\partial}{\partial s}$.
\end{exmpl}

\begin{exmpl}
\label{example2}
\rm
(Contactification of a symplectic manifold).
Let $(P,\omega)$ be a symplectic manifold such that $\omega=-\d\theta$,
and consider $M =P\times\R$.
Denoting by $s$ the cartesian coordinate of~$\R$,
and representing also by $\theta$ the pull-back of~$\theta$
to the product, we consider the 1-form
$\eta=\d s-\theta$ on~$M$.
Then $(M,\eta)$ is a contact manifold which is called the
\textbf{contactified} of~$P$.
Notice that the previous example, \ref{example-canmodel}, 
is just the contactification of $\Tan^*Q$ with its canonical symplectic structure.
\end{exmpl}

Finally, given a contact manifold $(M,\eta)$, 
we have the following $\Cinfty(M)$-module isomorphism
\begin{equation*}
\begin{array}{rccl}
    \flat\colon & \X(M) & \longrightarrow & \df^1(M) \\
    & X & \longmapsto & \inn(X)\d\eta+(\inn(X)\eta)\eta
\end{array}
\end{equation*}


\begin{thm}
\label{teo-hameqs}
    If $(M,\eta)$ is a contact manifold, for every $\H\in\Cinfty(M)$,
     there exists a unique vector field $X_\H\in\X(M)$ such that
    \begin{equation}\label{hamilton-contact-eqs}
        \begin{cases}
            \inn(X_\H)\d\eta=\d\H-(\Lie_{\Reeb}\H)\eta\\
            \inn(X_\H)\eta=-\H \ .
        \end{cases}
    \end{equation}
    Then, the integral curves of $X_\H$, ${\bf c}\colon I\subset\R\to M$,
    are the solutions to the equations
    \begin{equation}\label{hamilton-contactc-curves-eqs}
        \begin{cases}
            \inn({\bf c}')\d\eta=\left(\d\H-(\Lie_{\Reeb}\H)\eta\right)\circ{\bf c}\\
            \inn({\bf c}')\eta=-\H\circ{\bf c} \ ,
        \end{cases}
    \end{equation}
    where ${\bf c}'\colon I\subset\R\to \Tan M$ is the canonical lift of the curve
    ${\bf c}$ to the tangent bundle $\Tan M$.
\end{thm}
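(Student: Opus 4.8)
The plan is to collapse the pair of equations \eqref{hamilton-contact-eqs} into a single equation governed by the isomorphism $\flat$, for which existence and uniqueness are immediate, and then to check that the solution of that single equation actually solves both equations of the original system.

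First I would record the key feature of $\flat$: for any $X\in\X(M)$, using \eqref{eq-Reeb} and the fact that interior products anticommute,
\[
(\inn(X)\d\eta)(\Reeb) = \d\eta(X,\Reeb) = -(\inn(\Reeb)\d\eta)(X) = 0 ,
\qquad
\inn(\Reeb)\flat(X) = (\inn(X)\d\eta)(\Reeb) + (\inn(X)\eta)\,\inn(\Reeb)\eta = \inn(X)\eta ,
\]
so contracting $\flat(X)$ with the Reeb field recovers the scalar $\inn(X)\eta$. Now, if $X$ solves \eqref{hamilton-contact-eqs}, then adding the first equation to the second one multiplied by $\eta$ gives
\[
\flat(X) = \inn(X)\d\eta + (\inn(X)\eta)\,\eta = \d\H - (\Lie_{\Reeb}\H)\,\eta - \H\,\eta =: \alpha_\H .
\]
Since $\flat$ is a $\Cinfty(M)$-module isomorphism, this forces $X = \flat^{-1}(\alpha_\H)$, which proves uniqueness and identifies the only candidate.

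For existence I would set $X_\H := \flat^{-1}(\alpha_\H)$ and verify the two equations. Contracting $\flat(X_\H)=\alpha_\H$ with $\Reeb$, and using $\inn(\Reeb)\d\H=\Lie_{\Reeb}\H$ and $\inn(\Reeb)\eta=1$ together with the identity above,
\[
\inn(X_\H)\eta = \inn(\Reeb)\alpha_\H = \Lie_{\Reeb}\H - (\Lie_{\Reeb}\H) - \H = -\H ,
\]
which is the second equation. Substituting this back into $\flat(X_\H)=\alpha_\H$ yields
\[
\inn(X_\H)\d\eta = \alpha_\H - (\inn(X_\H)\eta)\,\eta = \d\H - (\Lie_{\Reeb}\H)\,\eta - \H\,\eta + \H\,\eta = \d\H - (\Lie_{\Reeb}\H)\,\eta ,
\]
the first equation. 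Finally, the claim about integral curves follows by evaluating the two tensorial identities \eqref{hamilton-contact-eqs} along a curve $\mathbf c$: since $\mathbf c$ is an integral curve of $X_\H$ precisely when $\mathbf c'(t)=X_\H(\mathbf c(t))$, substituting into $\inn(X_\H)\d\eta$ and $\inn(X_\H)\eta$ produces exactly \eqref{hamilton-contactc-curves-eqs}.

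The only point requiring care is that \eqref{hamilton-contact-eqs} is a system of two equations whereas $\flat$ a priori controls only their sum; the resolution is that the term $-(\Lie_{\Reeb}\H)\,\eta$ in the first equation is exactly what makes the compatibility condition $\inn(\Reeb)\alpha_\H=-\H$ hold, so that solving the single $\flat$-equation automatically recovers both. Everything else reduces to routine contractions with $\Reeb$.
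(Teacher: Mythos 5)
Your proof is correct and follows exactly the route the paper intends: the paper states this theorem without an explicit proof, but it introduces the isomorphism $\flat$ immediately beforehand and records the equivalent single equation $\flat(X_\H)=\d\H-(\Lie_\Reeb\H+\H)\eta$ in Proposition~\ref{prop-assertions}, which is precisely your $\alpha_\H$. Your verification that contracting with $\Reeb$ recovers both equations from the single $\flat$-equation is the right (and complete) way to fill in the details.
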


\begin{dfn}
The vector field $X_\H$ is the
\textbf{contact Hamiltonian vector field} associated to $\H$ and the equations 
\eqref{hamilton-contact-eqs} and \eqref{hamilton-contactc-curves-eqs}
are the \textbf{contact Hamiltonian equations} 
for this vector field and its integral curves, respectively.

The triple $(M,\eta,\H)$ is a \textbf{contact Hamiltonian system}.
\end{dfn}

As a consequence of the definition of $X_\H$ we have
the following relation,
which expresses the
{\sl dissipation of the Hamiltonian}:
\beq
\Lie_{X_\H}\H =
-(\Lie_{\Reeb}\H)\H \:.
\label{eq:disipenerg}
\eeq
The proof is immediate:
$$
\Lie_{X_\H}\H=-\Lie_{X_\H}\inn(X_\H)\eta=
-\inn(X_\H)\Lie_{X_\H}\eta=
\inn(X_\H)((\Lie_\Reeb\H)\eta)=-(\Lie_{\Reeb}\H)\H \ .
$$

\begin{prop}
\label{prop-assertions}
Let $(M,\eta,\H)$ be a contact Hamiltonian system.
The following assertions are equivalent:
\begin{enumerate}
\item 
$X_\H$ is a contact Hamiltonian vector field (i.e., a solution to equations \eqref{hamilton-contact-eqs}).
\item
$X_\H$ is a solution to the equations
\begin{equation*}
\begin{cases}
    \Lie_{X_\H} \eta= -(\Lie_\Reeb\H) \, \eta \:,
    \\
    \inn(X_\H)\eta = -\H \:,
\end{cases}
\end{equation*}
\item
$X_\H$ satisfies that
\begin{equation*}
\flat(X_\H) =
\d \H - (\Lie_\Reeb\H + \H) \eta
\:.
\end{equation*}
\end{enumerate}
\end{prop}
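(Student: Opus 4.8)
The plan is to prove the two equivalences $(1)\Leftrightarrow(2)$ and $(1)\Leftrightarrow(3)$ by direct manipulation, using only Cartan's formula, the defining properties of the Reeb vector field, and the definition of the isomorphism $\flat$. A recurring observation is that the equation $\inn(X_\H)\eta=-\H$ is literally the second condition in both (1) and (2), and—as I will check—is also encoded inside the single $1$-form identity (3); so in each implication I only really have to transfer the information about $\inn(X_\H)\d\eta$.

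For $(1)\Leftrightarrow(2)$: I would start from Cartan's magic formula $\Lie_{X_\H}\eta=\inn(X_\H)\d\eta+\d\big(\inn(X_\H)\eta\big)$. Since both statements include $\inn(X_\H)\eta=-\H$, the last term equals $-\d\H$, so $\Lie_{X_\H}\eta=\inn(X_\H)\d\eta-\d\H$. Hence the first equation of (1), namely $\inn(X_\H)\d\eta=\d\H-(\Lie_\Reeb\H)\eta$, holds if and only if $\Lie_{X_\H}\eta=-(\Lie_\Reeb\H)\eta$, which is precisely the first equation of (2). Combined with the common second equation, this gives $(1)\Leftrightarrow(2)$.

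For $(1)\Rightarrow(3)$ I would simply substitute (1) into the definition $\flat(X_\H)=\inn(X_\H)\d\eta+(\inn(X_\H)\eta)\,\eta$, obtaining $\flat(X_\H)=\big(\d\H-(\Lie_\Reeb\H)\eta\big)-\H\,\eta=\d\H-(\Lie_\Reeb\H+\H)\eta$. For the converse $(3)\Rightarrow(1)$, the idea is to contract (3) with the Reeb field. Using $\inn(\Reeb)\d\eta=0$ and $\inn(\Reeb)\eta=1$, the left-hand side gives $\inn(\Reeb)\flat(X_\H)=\inn(\Reeb)\inn(X_\H)\d\eta+(\inn(X_\H)\eta)\inn(\Reeb)\eta=\inn(X_\H)\eta$, while the right-hand side gives $\inn(\Reeb)\big(\d\H-(\Lie_\Reeb\H+\H)\eta\big)=\Lie_\Reeb\H-(\Lie_\Reeb\H+\H)=-\H$; this recovers the second equation of (1). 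Substituting $\inn(X_\H)\eta=-\H$ back into (3) and cancelling the common term $-\H\,\eta$ leaves $\inn(X_\H)\d\eta=\d\H-(\Lie_\Reeb\H)\eta$, the first equation of (1). (Alternatively, since $\flat$ is an isomorphism, (3) has a unique solution $X_\H$, which by $(1)\Rightarrow(3)$ must coincide with the vector field of Theorem~\ref{teo-hameqs}; but the contraction argument is more self-contained.)

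All the computations here are routine; the only point that genuinely needs an idea is the direction $(3)\Rightarrow(1)$, where one must see that a single equation between $1$-forms simultaneously encodes the two separate conditions of (1). The contraction with $\Reeb$—which kills the $\d\eta$-part and isolates the $\eta$-component—is exactly what disentangles them, so that is where I would focus the write-up.
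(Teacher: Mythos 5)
Your proof is correct in all its steps: Cartan's formula together with the shared equation $\inn(X_\H)\eta=-\H$ gives $(1)\Leftrightarrow(2)$, the substitution gives $(1)\Rightarrow(3)$, and contracting (3) with $\Reeb$ (using $\inn(\Reeb)\d\eta=0$, $\inn(\Reeb)\eta=1$) correctly isolates $\inn(X_\H)\eta=-\H$ and then the first equation of (1). Note that the paper states this proposition without any proof, so there is nothing to compare against; your argument is the standard one that the authors evidently had in mind, and it is complete.
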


In the next section we give another way of stating the equations for a contact Hamiltonian vector field.

Taking Darboux coordinates $(q^i,p_i,s)$, 
the contact Hamiltonian vector field is
$$ 
X_\H = \frac{\partial\H}{\partial p_i}\frac{\partial}{\partial q^i} - 
\left(\frac{\partial\H}{\partial q^i} + 
p_i\frac{\partial\H}{\partial s}\right)\frac{\partial}{\partial p_i} + 
\left(p_i\frac{\partial\H}{\partial p_i} - \H\right)\frac{\partial}{\partial s}\ ; 
$$
and its integral curves $\gamma(t) = (q^i(t), p_i(t), s(t))$
are solutions to the dissipative Hamilton equations  \eqref{hamilton-contactc-curves-eqs}
which read as
\begin{equation}\label{diss-ham-eqs}
\begin{dcases}
    \dot q^i = \frac{\partial\H}{\partial p_i}\ ,\\
    \dot p_i = -\left(\frac{\partial\H}{\partial q^i} + p_i\frac{\partial\H}{\partial s}\right)\ ,\\
    \dot s = p_i\frac{\partial\H}{\partial p_i} - \H\ .
\end{dcases}
\end{equation}

\begin{obs}{\rm
The case in which some of the conditions stated in Definition 
\ref{dfn-contact-manifold} do not hold
has been recently analyzed in \cite{DeLeon2019}, 
where the definitions and properties of
{\sl precontact structures} and {\sl precontact manifolds} 
are introduced and studied.
In particular, for this kind of manifolds, 
Reeb vector fields are not uniquely determined.
In these cases $(M,\eta,\H)$ is called a 
{\sl precontact Hamiltonian system}.
}\end{obs}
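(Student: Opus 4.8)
The substantive mathematical content of this remark is the assertion that, once the volume-form hypothesis of Definition~\ref{dfn-contact-manifold} is dropped, the pair of conditions \eqref{eq-Reeb} no longer determines a unique Reeb vector field. The plan is to locate exactly where the contact uniqueness argument fails and then to display the residual freedom explicitly.

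First I would revisit why uniqueness holds in the genuine contact case. The requirement that $\eta\wedge(\d\eta)^{\wedge n}$ be a volume form forces $(\d\eta)_p$ to have maximal rank $2n$ at every $p\in M$, so that $\ker(\d\eta)_p$ is one-dimensional; moreover $\eta_p$ cannot vanish on this line, for otherwise $\eta\wedge(\d\eta)^{\wedge n}$ would vanish at $p$. Hence the first equation $\inn(\Reeb)\d\eta=0$ confines $\Reeb$ to a line, and the normalization $\inn(\Reeb)\eta=1$ selects its unique generator---equivalently, $\Reeb=\flat^{-1}(\eta)$ since $\flat$ is an isomorphism.

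Next I would show that both of these features can break down in the precontact case. If $(\eta\wedge(\d\eta)^{\wedge n})_p=0$, then either $(\d\eta)_p$ has rank $\le 2n-2$, so $\dim\ker(\d\eta)_p\ge 3$, or $(\d\eta)_p$ has rank $2n$ but $\eta_p$ annihilates the kernel line. In the first situation the first equation restricts $\Reeb_p$ to a subspace of dimension at least three, while the normalization is a single affine-linear condition; a dimension count then yields the dichotomy that either $\eta_p$ vanishes on $\ker(\d\eta)_p$ and no Reeb vector exists, or it does not and the solution set is an affine subspace of dimension at least two. Non-uniqueness follows, because given one solution $\Reeb_p$ and any nonzero $Z_p\in\ker(\d\eta)_p\cap\ker\eta_p$---which exists since this intersection has dimension at least $\dim\ker(\d\eta)_p-1\ge 2$---the vector $\Reeb_p+Z_p$ is a distinct solution.

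The main obstacle is to upgrade this pointwise analysis to an honest smooth vector field, because the rank of $\d\eta$ may jump from point to point and $\ker\d\eta$ need not be a smooth distribution. To sidestep this I would not attempt a general existence statement but instead exhibit the phenomenon directly: on $\R^3$ with coordinates $(x,y,s)$ take the nowhere-vanishing form $\eta=\d s$, which is precontact since $\eta\wedge\d\eta=0$. Here $\d\eta=0$, so the first condition in \eqref{eq-Reeb} is vacuous and the second reads $\inn(\Reeb)\d s=1$; consequently every vector field $\Reeb=\partial/\partial s+a\,\partial/\partial x+b\,\partial/\partial y$ with arbitrary functions $a,b$ solves \eqref{eq-Reeb}, so the Reeb vector field is manifestly not unique. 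This example witnesses the failure of uniqueness with no smoothness pathology and matches the claim of the remark.
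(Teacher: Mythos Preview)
The statement you are addressing is a \emph{remark}, and the paper does not supply any proof for it: it simply records the terminology and the non-uniqueness of Reeb vector fields in the precontact setting, deferring entirely to the reference \cite{DeLeon2019}. So there is no ``paper's own proof'' to compare against.

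That said, your justification is mathematically correct and goes well beyond what the paper offers. Your recap of the contact-case uniqueness (one-dimensionality of $\ker(\d\eta)_p$ together with the nondegeneracy of $\eta_p$ on it) is accurate, and the dimension count in the degenerate case is sound: when $\operatorname{rank}(\d\eta)_p\le 2n-2$ the kernel has dimension at least $3$, and intersecting with $\ker\eta_p$ still leaves at least a two-dimensional space of perturbations $Z_p$ preserving both Reeb conditions. Your explicit example $\eta=\d s$ on $\R^3$ is the cleanest possible witness and removes any worry about smoothness of the kernel distribution. In short, you have supplied a self-contained argument for a claim the paper merely asserts by citation; nothing in your proposal needs correction.
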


\subsection{Equivalent form of the contact Hamiltonian equations}

In this section we present a new way of writing equations \eqref{hamilton-contact-eqs} 
without making use of the Reeb vector field $\Reeb$. 
This can be useful in the case of singular systems because, as we have pointed out, 
in a precontact manifold we do not have a uniquely determined Reeb vector field.

\begin{prop}
\label{eqset}
Let $(M, \eta, \H)$ be a contact Hamiltonian system and consider the open set $U=\{p\in M;\H(p)\neq 0\}$.
Let $\Omega$ be the 2-form defined by $\Omega = -\H\,\d\eta + \d\H\wedge\eta$ on~$U$. 
A vector field $X\in\X(U)$ is the contact Hamiltonian vector field if, and only if,
    \begin{equation}\label{hamilton-eqs-no-reeb}
        \begin{cases}
            \inn(X)\Omega = 0\,,\\
            \inn(X)\eta = -\H\,.
        \end{cases}
    \end{equation}
\end{prop}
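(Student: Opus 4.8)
The plan is to reduce \eqref{hamilton-eqs-no-reeb} to the contact Hamiltonian equations \eqref{hamilton-contact-eqs} by a direct computation of $\inn(X)\Omega$, using the Leibniz rule for the interior product and the second equation $\inn(X)\eta=-\H$ throughout. First I would record the basic identity: since $\inn(X)(\d\H\wedge\eta)=(\Lie_X\H)\,\eta-(\inn(X)\eta)\,\d\H$, on $U$ one has, for any $X$ with $\inn(X)\eta=-\H$,
$$
\inn(X)\Omega \;=\; -\H\,\inn(X)\d\eta \;+\;(\Lie_X\H)\,\eta\;+\;\H\,\d\H\,.
$$
Everything else follows from this formula together with the dissipation relation \eqref{eq:disipenerg} and Theorem \ref{teo-hameqs}.

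For the direct implication, I would substitute $X=X_\H$ and use $\inn(X_\H)\d\eta=\d\H-(\Lie_\Reeb\H)\eta$ in the boxed identity: the $\pm\,\H\,\d\H$ terms cancel and one is left with $\inn(X_\H)\Omega=\bigl(\H\,\Lie_\Reeb\H+\Lie_{X_\H}\H\bigr)\,\eta$, which vanishes by \eqref{eq:disipenerg}. Since $\inn(X_\H)\eta=-\H$ holds by definition, $X_\H$ solves \eqref{hamilton-eqs-no-reeb}.

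For the converse, assume $X$ satisfies \eqref{hamilton-eqs-no-reeb}. From $\inn(X)\Omega=0$ and the boxed identity, dividing by $\H\ne 0$ on $U$, I get $\inn(X)\d\eta=\d\H+\H^{-1}(\Lie_X\H)\,\eta$. The key step is to contract this equation with the Reeb vector field: because $\inn(\Reeb)\inn(X)\d\eta=-\inn(X)\inn(\Reeb)\d\eta=0$ while $\inn(\Reeb)\eta=1$, one obtains $0=\Lie_\Reeb\H+\H^{-1}\Lie_X\H$, i.e. $\Lie_X\H=-\H\,\Lie_\Reeb\H$. Feeding this back gives $\inn(X)\d\eta=\d\H-(\Lie_\Reeb\H)\,\eta$, so $X$ satisfies \eqref{hamilton-contact-eqs} and hence $X=X_\H$ by Theorem \ref{teo-hameqs}.

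I do not expect a genuine obstacle here: the only points needing care are that the whole argument takes place on the open set $U$ where division by $\H$ is legitimate, and that it is precisely the contraction with $\Reeb$ that reconstructs the correct coefficient $-\Lie_\Reeb\H$ without it having been assumed a priori. I would also add a remark that existence and uniqueness of a solution of \eqref{hamilton-eqs-no-reeb} on $U$ are inherited from those for \eqref{hamilton-contact-eqs}.
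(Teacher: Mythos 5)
Your proposal is correct and follows essentially the same route as the paper's own proof: the same expansion of $\inn(X)(\d\H\wedge\eta)$ using $\inn(X)\eta=-\H$, the same contraction with the Reeb vector field to identify the coefficient $-\Lie_\Reeb\H$, and the same appeal to the dissipation relation \eqref{eq:disipenerg} for the other implication. The only cosmetic difference is that you divide by $\H$ before contracting with $\Reeb$ whereas the paper contracts first and divides afterwards.
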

\begin{proof}
Suppose that $X$ satisfies equations \eqref{hamilton-eqs-no-reeb}. Then,
\begin{equation*}
    0 = \inn(X)\Omega = -\H\,\inn(X)\d\eta +(\inn(X)\d\H)\eta + \H\,\d\H\ ,
\end{equation*}
and hence
\begin{equation}\label{no-reeb-1}
    \H\,\inn(X)\d\eta =(\inn(X)\d\H)\eta + \H\,\d\H\ .
\end{equation}
Contracting with the Reeb vector field,
\begin{equation*}
    0 = \H\,\inn(\Reeb)\inn(X)\d\eta =
    (\inn(X)\d\H) \inn(\Reeb)\eta + \H\,\inn(\Reeb)\d\H\ ,
\end{equation*}
and $\inn(X)\d\H = -\H\,\inn(\Reeb)\d\H$. 
Using this in equation \eqref{no-reeb-1}, we get
\begin{equation*}
    \H\inn(X)\d\eta=\H(\d\H -(\inn(\Reeb)\d\H)\eta)=
    \H(\d\H -(\Lie_\Reeb\H)\eta)\ .
\end{equation*}
and hence $\inn(X)\d\eta=\d\H -(\Lie_\Reeb \H)\eta$.
    
Now suppose that $X$ satisfies equations  \eqref{hamilton-contact-eqs}. 
Then:
\beann
 \inn(X)\Omega&=&\inn(X)(-\H\d\eta+\d\H\wedge\eta)=
    -\H\inn(X)\d\eta+(\inn(X)\d\H)\eta+\H\d\H
    \\
 &=& \H(\Lie_\Reeb\H)\eta+(\Lie_X\H)\eta=
    (\H(\Lie_\Reeb\H)+(\Lie_X\H))\eta=0 \ ,
    \eeann
hence $\inn(X)\Omega=0$, using the dissipation of the Hamiltonian $\H$ \eqref{eq:disipenerg}.    
\end{proof}

\begin{obs}{\rm
Let $p\in M$ and suppose that $\H(p)=0$. Then the second equation in both groups \eqref{hamilton-contact-eqs} 
and \eqref{hamilton-eqs-no-reeb} gives $\inn(X_p)\eta=0$, 
hence $X_p\in \ker\eta_p$. The remaining equation in \eqref{hamilton-contact-eqs} 
is $\inn(X_p)\d\eta=\d\H(p) -(\Lie_{\Reeb_p}\H)\eta_p$,
and the corresponding one in \eqref{hamilton-eqs-no-reeb} is $\inn(X_p)\Omega=(\Lie_{X_p}H)\eta_p=0$. 
These two equations are not equivalent. In fact the first one implies the second, 
using the dissipation of the Hamiltonian, but not on the contrary.
}
\end{obs}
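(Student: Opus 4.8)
The plan is to reduce both remaining equations to scalar conditions at the point $p$ and then compare their solution sets inside $\Tan_pM$. I would first record the common starting point: since $\inn(X_p)\eta=-\H(p)=0$, we have $X_p\in\ker\eta_p$. To interpret the no-Reeb equation I would expand $\inn(X_p)\Omega$ at $p$ from $\Omega=-\H\,\d\eta+\d\H\wedge\eta$, using $\inn(X_p)(\d\H\wedge\eta)=(\inn(X_p)\d\H)\eta-(\inn(X_p)\eta)\d\H$. Because $\H(p)=0$ kills the $-\H\,\d\eta$ summand and $\inn(X_p)\eta_p=0$ kills the last term, this collapses to $\inn(X_p)\Omega=(\Lie_{X_p}\H)(p)\,\eta_p$. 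As $\eta_p\neq0$, the equation $\inn(X_p)\Omega=0$ is then equivalent to the single scalar condition $(\Lie_{X_p}\H)(p)=0$, i.e.\ $\inn(X_p)\d\H(p)=0$.

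Next I would prove the forward implication. Assuming the remaining equation of \eqref{hamilton-contact-eqs}, namely $\inn(X_p)\d\eta=\d\H(p)-(\Lie_{\Reeb_p}\H)\eta_p$, I would contract it once more with $X_p$. The left-hand side vanishes by antisymmetry of $\d\eta$, while on the right $\inn(X_p)\d\H(p)-(\Lie_{\Reeb_p}\H)\inn(X_p)\eta_p=(\Lie_{X_p}\H)(p)$ since $\inn(X_p)\eta_p=0$. Hence $(\Lie_{X_p}\H)(p)=0$, which is exactly the no-Reeb condition found above; this is the pointwise form of the dissipation identity \eqref{eq:disipenerg} at a zero of $\H$.

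For the failure of the converse I would argue by dimension. The no-Reeb equation at $p$ amounts to the two homogeneous linear conditions $\inn(X_p)\eta_p=0$ and $\inn(X_p)\d\H(p)=0$ on $X_p\in\Tan_pM$, so its solution set is the linear subspace $W=\ker\eta_p\cap\ker\d\H(p)$, of dimension at least $(2n+1)-2=2n-1\geq1$ (recall $n\geq1$). By contrast, packaging both equations of \eqref{hamilton-contact-eqs} at $p$ through the isomorphism $\flat$ gives, using $\H(p)=0$, the single relation $\flat_p(X_p)=\d\H(p)-(\Lie_{\Reeb_p}\H)\eta_p$ (cf.\ Proposition \ref{prop-assertions}), and since $\flat_p$ is a linear isomorphism this pins $X_p$ down to the unique vector $(X_\H)_p$, which lies in $W$ by the forward implication. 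As $\dim W\geq1$, I can choose $X_p\in W$ with $X_p\neq(X_\H)_p$; were it to satisfy the $\d\eta$-equation it would, together with $\inn(X_p)\eta_p=0$, coincide with $(X_\H)_p$ by the same uniqueness, a contradiction. Thus such an $X_p$ solves the no-Reeb equation but not the contact one, which establishes the strict one-way implication; the same $X_p$ can be exhibited explicitly in Darboux coordinates at a point where $\H$ vanishes.

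The only delicate point is the converse: one must be sure the enlargement of the solution set is genuine, i.e.\ that $W$ strictly contains $(X_\H)_p$. This is secured by the count $\dim W\geq2n-1\geq1$, which holds regardless of whether $\d\H(p)$ and $\eta_p$ are linearly independent (if they are proportional, or $\d\H(p)=0$, the count only improves to $\dim W=2n$). Everything else is routine linear algebra together with the pointwise invertibility of $\flat$.
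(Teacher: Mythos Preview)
Your proposal is correct. In the paper this statement is a remark rather than a theorem with a separate proof: the text simply asserts the forward implication ``using the dissipation of the Hamiltonian'' and states that the converse fails, without further justification. Your argument fills in both gaps along the lines the paper suggests. For the forward direction, your contraction of $\inn(X_p)\d\eta=\d\H(p)-(\Lie_{\Reeb_p}\H)\eta_p$ with $X_p$ is exactly the pointwise incarnation of the dissipation identity \eqref{eq:disipenerg} at a zero of $\H$, so this matches the paper's hint precisely. For the failure of the converse the paper offers nothing, and your dimension count (the no-Reeb system cuts out a linear subspace $W\subset\Tan_pM$ of dimension at least $2n-1\geq1$, while the contact system has the unique solution $(X_\H)_p$ via the isomorphism $\flat_p$) is a clean way to exhibit the strict inclusion; this goes beyond what the paper writes but is entirely in its spirit.
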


Finally, bearing in mind Theorem \ref{teo-hameqs}, we can state:

\begin{prop}
\label{eqset2}
On the open set $U=\{p\in M;\H\neq 0\}$, 
a path ${\bf c}\colon I\subset\R\to M$ 
is an integral curve of the 
contact Hamiltonian vector field $X_\H$
if, and only if, it is a solution to
$$
\begin{cases}
\inn({\bf c}')\Omega = 0 \:,
\\
\inn({\bf c}')\eta = - \H\circ{\bf c} \:.
\end{cases}
$$
\end{prop}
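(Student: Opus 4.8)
The plan is to deduce Proposition~\ref{eqset2} from Proposition~\ref{eqset} together with Theorem~\ref{teo-hameqs}, exploiting the fact that every equation occurring in these statements is, at each point of the curve, a linear condition on the velocity ${\bf c}'(t)\in\Tan_{{\bf c}(t)}M$. So the first step is to record the fibrewise version of Proposition~\ref{eqset}: fixing a point $p\in U$ and a tangent vector $v\in\Tan_pM$, the pair of conditions $\inn(v)\Omega_p=0$, $\inn(v)\eta_p=-\H(p)$ holds if and only if $\inn(v)(\d\eta)_p=(\d\H)_p-(\Lie_\Reeb\H)(p)\,\eta_p$ together with $\inn(v)\eta_p=-\H(p)$. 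This is precisely the computation carried out in the proof of Proposition~\ref{eqset}, now performed at the single point $p$ with $v$ playing the role of the vector field $X$: in one direction one contracts $\inn(v)\Omega_p=0$ with the Reeb vector $\Reeb_p$ (using $\inn(\Reeb)\d\eta=0$) to extract $\inn(v)\d\H|_p=-\H(p)(\Lie_\Reeb\H)(p)$, where the hypothesis $\H(p)\neq0$ is needed, and substitutes back; in the other direction one expands $\inn(v)\Omega_p=-\H(p)\,\inn(v)\d\eta+(\inn(v)\d\H)\,\eta-(\inn(v)\eta)\,\d\H$ at $p$ and uses the same identity (which follows by contracting the first equation with $v$ itself, or again with $\Reeb_p$) to see that the result vanishes.

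Next I would apply this equivalence along the path: evaluating it at $p={\bf c}(t)$, $v={\bf c}'(t)$ for every $t\in I$ shows that the system $\inn({\bf c}')\Omega=0$, $\inn({\bf c}')\eta=-\H\circ{\bf c}$ is equivalent to $\inn({\bf c}')\d\eta=\bigl(\d\H-(\Lie_\Reeb\H)\eta\bigr)\circ{\bf c}$, $\inn({\bf c}')\eta=-\H\circ{\bf c}$, which is exactly the system~\eqref{hamilton-contactc-curves-eqs}. By Theorem~\ref{teo-hameqs} the solutions of the latter are precisely the integral curves of $X_\H$, and this finishes the proof. Equivalently, and somewhat more directly: if ${\bf c}$ is an integral curve of $X_\H$ then ${\bf c}'=X_\H\circ{\bf c}$ and the two displayed equations are obtained from Proposition~\ref{eqset} by composition with ${\bf c}$; conversely, the fibrewise equivalence above forces ${\bf c}'(t)$ to satisfy the algebraic equations~\eqref{hamilton-contact-eqs} at ${\bf c}(t)$, whose unique solution is $X_\H({\bf c}(t))$, so ${\bf c}'=X_\H\circ{\bf c}$.

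I do not expect a genuine obstacle here; the only point that deserves a moment's care is checking that the manipulations in the proof of Proposition~\ref{eqset} really descend from a vector field to a single tangent vector. They do: the only steps that might seem to require $X$ to be a vector field involve the Lie derivative $\Lie_X\H$, but this equals the pointwise contraction $\inn(X)\d\H$, so the entire argument is purely algebraic at each point and transfers verbatim. Everything else is the bookkeeping already contained in Proposition~\ref{eqset} and Theorem~\ref{teo-hameqs}.
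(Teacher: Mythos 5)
Your proposal is correct and follows essentially the route the paper intends: the paper gives no written proof of this proposition, merely invoking Theorem~\ref{teo-hameqs} together with Proposition~\ref{eqset}, and your contribution is to make explicit the (genuinely needed) observation that the equivalence in Proposition~\ref{eqset} is purely algebraic fibre by fibre, so it applies to the single tangent vector ${\bf c}'(t)$ at each point of the curve. The only inessential slip is the parenthetical claim that the identity $\inn(v)\d\H=-\H\,(\Lie_\Reeb\H)$ can also be obtained by contracting the first Reeb-based equation with $\Reeb_p$ (that contraction yields only $0=0$); contracting with $v$ itself, as you state first, is the correct derivation.
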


\section{Lagrangian formalism for dissipative dynamical systems}
\label{section-contact}

\subsection{Lagrangian phase space and geometric structures}
\label{kLagForm}

If $Q$ is an $n$-dimensional manifold,
consider the product manifold 
$\Tan Q\times\R$ 
with canonical projections
$$
s\colon \Tan Q\times\R\to\R 
\quad , \quad
\tau_1\colon \Tan Q\times\R\to\Tan Q
 \quad , \quad
\tau_0\colon \Tan Q\times\R\to Q\times\R\ .
$$
Notice that 
$\tau_1$ and $\tau_0$ 
are the projection maps of two vector bundle structures. 
We will usually have the second one in mind;
indeed, 
with this structure
$\Tan Q \times \R$
is the pull-back of the tangent bundle $\Tan Q$ 
with respect the map $Q \times \R \to Q$.
Natural coordinates in $\Tan Q\times\R$ will be denoted $(q^i,v^i, s)$.

In order to develop a contact Lagrangian formalism 
we need to extend the usual geometric structures of Lagrangian mechanics
to the Lagrangian phase space.
Notice that, as a product manifold, we can write
$\Tan(\Tan Q \times \R) =
(\Tan(\Tan Q)) \times \R) \oplus (\Tan Q \times \Tan\R)
$,
so any operation that can act on tangent vectors to $\Tan Q$
can act on tangent vectors to $\Tan Q \times \R$.
In particular,
the vertical endomorphism 
of $\Tan(\Tan Q)$ yields a
\textbf{vertical endomorphism}
${\cal J} \colon \Tan (\Tan Q\times\R) \to \Tan (\Tan Q\times\R)$.
Similarly,
the Liouville vector field on $\Tan Q$
yields a \textbf{Liouville vector field}
$\Delta \in \vf(\Tan Q\times\R)$;
indeed, 
this is the Liouville vector field 
of the vector bundle structure defined by $\tau_0$.

In natural coordinates,
the local expressions of these objects are
$$
{\cal J} =
\frac{\partial}{\partial v^i} \otimes \d q^i
\,,\quad
\Delta = 
v^i\, \frac{\partial}{\partial v^i}
\,.
$$

\begin{dfn}
\label{de652}
Let 
${\bf c} \colon\R \rightarrow Q\times\R$ 
be a path,
with 
${\bf c} = (\mathbf{c}_1,\mathbf{c}_2)$.
The \textbf{prolongation} 
of ${\bf c}$ 
to $\Tan Q\times\R$ 
is the path
$$
{\bf \tilde c} = 
(\mathbf{c}_1',\mathbf{c}_2)
\colon 
\R \longrightarrow \Tan Q \times \R  \,,
$$
where $\mathbf{c}_1'$ is the velocity of~$\mathbf{c}_1$.
The path ${\bf \tilde c}$ is said to be 
\textbf{holonomic}.
\\
A vector field 
$\Gamma \in \X(\Tan Q \times \R)$ 
is said to satisfy the 
\textbf{second-order condition}
(for short:  is a {\sc sode}) when all of its integral curves 
are holonomic. 
\end{dfn}

This definition can be equivalently expressed in terms of the canonical structures defined above:

\begin{prop}
A vector field 
$\Gamma  \in \X(\Tan Q\times\R)$ 
is a \textsc{sode} if,
and only if,
${\cal J} \circ \Gamma = \Delta$.
\end{prop}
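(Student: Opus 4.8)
The plan is to reduce everything to the natural coordinates $(q^i,v^i,s)$ on $\Tan Q\times\R$, in which all the objects involved have the explicit local expressions recalled above. First I would write an arbitrary vector field as
$\Gamma = A^i\,\dfrac{\partial}{\partial q^i} + B^i\,\dfrac{\partial}{\partial v^i} + C\,\dfrac{\partial}{\partial s}$
with $A^i,B^i,C\in\Cinfty(\Tan Q\times\R)$, and compute $\mathcal{J}\circ\Gamma$. Since ${\cal J} = \dfrac{\partial}{\partial v^i}\otimes\d q^i$ acts on a tangent vector $X$ by ${\cal J}(X) = \d q^i(X)\,\dfrac{\partial}{\partial v^i}$, we get ${\cal J}(\Gamma) = A^i\,\dfrac{\partial}{\partial v^i}$. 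Comparing with $\Delta = v^i\,\dfrac{\partial}{\partial v^i}$, the equation ${\cal J}\circ\Gamma = \Delta$ holds if and only if $A^i = v^i$ for every $i$.

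Next I would characterize the {\sc sode} condition in the same coordinates. An integral curve of $\Gamma$ is a path $\mathbf{c}(t) = (q^i(t),v^i(t),s(t))$ satisfying $\dot q^i = A^i\circ\mathbf{c}$, $\dot v^i = B^i\circ\mathbf{c}$, $\dot s = C\circ\mathbf{c}$. Its projection to $Q\times\R$ is $\tau_0\circ\mathbf{c} = (q^i(t),s(t))$, whose prolongation, according to Definition \ref{de652}, is $(q^i(t),\dot q^i(t),s(t))$. Hence $\mathbf{c}$ is holonomic precisely when $v^i(t) = \dot q^i(t)$ for all $t$; together with $\dot q^i = A^i\circ\mathbf{c}$, this amounts to $(v^i - A^i)\circ\mathbf{c} = 0$ along the curve.

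Then I would assemble the equivalence. If $A^i = v^i$ identically, every integral curve automatically satisfies $\dot q^i = v^i$, so $\Gamma$ is a {\sc sode}. Conversely, if $\Gamma$ is a {\sc sode}, then through each point $p\in\Tan Q\times\R$ there passes an integral curve (existence of solutions of ODEs for smooth vector fields), and the holonomy condition forces $(v^i - A^i)(p) = 0$; since $p$ is arbitrary, $A^i = v^i$ on all of $\Tan Q\times\R$, i.e.\ ${\cal J}\circ\Gamma = \Delta$.

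The argument is essentially routine; the only point deserving a word of care is the implication ``{\sc sode} $\Rightarrow$ ${\cal J}\circ\Gamma = \Delta$'', where one must invoke the existence of an integral curve through every point in order to upgrade the pointwise holonomy condition to the identity $A^i = v^i$ as functions. I would also remark that the statement is intrinsic in nature: ${\cal J}$ and $\Delta$ are the pull-backs to $\Tan Q\times\R$ of the vertical endomorphism and the Liouville vector field of $\Tan Q$, so the result is exactly the classical one on $\Tan Q$, with the $\R$-factor playing no role.
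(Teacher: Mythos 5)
Your proof is correct: the coordinate computation ${\cal J}(\Gamma)=A^i\,\partial/\partial v^i$ versus $\Delta=v^i\,\partial/\partial v^i$, together with the existence of an integral curve through every point to upgrade the pointwise holonomy condition $v^i(p)=A^i(p)$ to the identity $A^i=v^i$, is exactly the standard argument. The paper states this proposition without proof (it is the direct transfer of the classical $\Tan Q$ result to $\Tan Q\times\R$, consistent with the local expression \eqref{localsode2} given immediately afterwards), so your write-up simply supplies the routine verification the authors omit.
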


In coordinates, if 
${\bf c}(t)=(c^i(t), s(t))$, 
then
$$
{\bf \tilde c}(t) =
\left( c^i(t),\frac{d c^i}{d t}(t), s(t) \right) \:.
$$
The local expression of a {\sc sode} is
\begin{equation}
\label{localsode2}
\Gamma= 
v^i \frac{\partial}{\partial q^i} +
f^i \frac{\partial}{\partial v^i} + 
g\,\frac{\partial}{\partial s}
\:.
\end{equation}
So, in coordinates a {\sc sode} defines a system of
differential equations of the form
$$
\frac{\d^2 q^i}{\d t^2}=f^i(q,\dot q,s) \:, \quad  
\frac{d s}{d t}=g(q,\dot q,s)  \:.
$$

\subsection{Contact Lagrangian systems}
\label{sec-conLagsys}

A general setting of the Lagrangian formalism for dissipative mechanical systems
has been recently developed by 
Le\'on and Lainz-Valc\'azar \cite{DeLeon2019}.
Next we review this formulation and give new additional features.

\begin{dfn}
\label{lagrangean}
A \textbf{Lagrangian function} 
is a function $\L \colon\Tan Q\times\R\to\R$.
\\    
The \textbf{Lagrangian energy}
associated with $\L$ is the function $E_\L := \Delta(\L)-\L\in\Cinfty(\Tan Q\times\R)$.
\\    
The \textbf{Cartan forms}
associated with $\L$ are defined as
\begin{equation}
\label{eq:thetaL}
\theta_\L = 
{}^t{\cal J} \circ \d \L 
\in \Omega^1(\Tan Q\times\R) 
\:,\quad
\omega_\L = 
-\d \theta_\L
\in \Omega^2(\Tan Q\times\R) 
\:.
\end{equation}
The \textbf{contact Lagrangian form} is
$$
\eta_\L=\d s-\theta_\L\in\Omega^1(\Tan Q\times\R)
\:;
$$
it satisfies that $\d\eta_\L=\omega_\L$.
\\
The couple $(\Tan Q\times\R,\L)$ is a \textbf{contact Lagrangian system}.
\end{dfn}

Taking natural coordinates $(q^i, v^i, s)$ in $\Tan Q\times\R$, 
the form $\eta_\L$ is written as
\begin{equation}
\label{eq:etaL}
\eta_\L = \d s - \frac{\partial\L}{\partial v^i} \,\d q^i \:,
\end{equation}
and consequently
\begin{equation*}
\d\eta_\L = 
-\frac{\partial^2\L}{\partial s\partial v^i}\d s\wedge\d q^i 
-\frac{\partial^2\L}{\partial q^j\partial v^i}\d q^j\wedge\d q^i 
-\frac{\partial^2\L}{\partial v^j\partial v^i}\d v^j\wedge\d q^i\ .
\end{equation*}

The next structure to be defined is the Legendre map.
Before, remember that,
given a (not necessarily linear) bundle map $f \colon E \to F$
between two vector bundles over a manifold~$B$,
the fibre derivative of~$f$ is the map
$\mathcal{F}f \colon 
E \to \mathrm{Hom}(E,F) \approx F \otimes E^*$
obtained by
restricting $f$ to the fibres, 
$f_b \colon E_b \to F_b$,
and computing the usual derivative
of a map between two vector spaces:
$\mathcal{F}f(e_b) = \mathrm{D} f_b(e_b)$
---see \cite{Gracia2000} for a detailed account.
This applies in particular 
when the second vector bundle is trivial of rank~1,
that is,
for a function
$f \colon E \to \R$;
then
$\mathcal{F}f \colon E \to E^*$.
This map also has a fibre derivative
$\mathcal{F}^2 f \colon E \to E^* \otimes E^*$,
which can be called the fibre hessian of~$f$;
for every $e_b \in E$,
$\mathcal{F}^2 f(e_b)$ can be considered as a
symmetric bilinear form on $E_b$.
It is easy to check that
$\mathcal{F}f$ is a local diffeomorphism 
at a point $e \in E$
if and only if
the Hessian $\mathcal{F}^2f(e)$ is non-degenerate.

\begin{dfn}
Given a Lagrangian 
$\L \colon \Tan Q\times\R \to \R$, 
its \textbf{Legendre map}
is the fibre derivative of~$\L$,
considered as a function on the vector bundle
$\tau_0 \colon \Tan Q\times\R \to Q \times \R$;
that is, the map
${\cal F}\L \colon \Tan Q \times\R \to \Tan^*Q \times \R$ 
given by
$$
{\cal F}\L (v_q,s) = \left( \strut {\cal F}\L(\cdot,s) (v_q),s \right)
\,,
$$
where $\L(\cdot,s)$ is the Lagrangian with $s$ freezed.
\end{dfn}

\begin{obs}\rm
The Cartan forms can also be defined as
$$
\theta_\L={\cal FL}^{\;*}(\pi_1^*\theta)
\,,\quad
\omega_\L={\cal FL}^{\;*}(\pi_1^*\omega)
\,.
$$
\end{obs}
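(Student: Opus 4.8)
The plan is to reduce both identities to a one-line computation in natural coordinates, after which the statement for $\omega_\L$ will follow formally from the one for $\theta_\L$. Write $\pi_1\colon\Tan^*Q\times\R\to\Tan^*Q$ for the projection and $(q^i,p_i)$ for the natural coordinates on $\Tan^*Q$, so that the tautological $1$-form is $\theta=p_i\,\d q^i$ and hence $\pi_1^*\theta=p_i\,\d q^i$ as a form on $\Tan^*Q\times\R$ (writing $q^i,p_i$ also for their pullbacks by $\pi_1$), while $\omega=-\d\theta$ is the canonical symplectic form on $\Tan^*Q$.

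First I would record the coordinate expression of the Legendre map. Since ${\cal FL}$ is by definition the fibre derivative of $\L$ for the vector bundle $\tau_0\colon\Tan Q\times\R\to Q\times\R$, freezing $q$ and $s$ and differentiating along the fibre gives ${\cal FL}(q^i,v^i,s)=\bigl(q^i,\partial\L/\partial v^i,s\bigr)$; in particular ${\cal FL}$ covers the identity on $Q\times\R$, and ${\cal FL}^{*}q^i=q^i$, ${\cal FL}^{*}p_i=\partial\L/\partial v^i$, ${\cal FL}^{*}s=s$. Therefore
$$
{\cal FL}^{*}(\pi_1^*\theta)={\cal FL}^{*}\!\left(p_i\,\d q^i\right)=\frac{\partial\L}{\partial v^i}\,\d q^i\,.
$$
On the other hand, from $\eta_\L=\d s-\theta_\L$ together with the local expression \eqref{eq:etaL} one reads off $\theta_\L=\frac{\partial\L}{\partial v^i}\,\d q^i$ (equivalently, this is immediate from $\theta_\L={}^t{\cal J}\circ\d\L$ and ${\cal J}=\frac{\partial}{\partial v^i}\otimes\d q^i$). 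Comparing the two expressions proves the first identity. The second then follows formally, since pullback commutes with the exterior derivative:
$$
\omega_\L=-\d\theta_\L=-\d\bigl({\cal FL}^{*}\pi_1^*\theta\bigr)={\cal FL}^{*}\pi_1^*(-\d\theta)={\cal FL}^{*}\bigl(\pi_1^*\omega\bigr)\,.
$$

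A coordinate-free alternative for the first identity is to combine the intrinsic definition $\theta_\L={}^t{\cal J}\circ\d\L$ with the classical fact that on $\Tan Q$ the Poincar\'e--Cartan $1$-form is the pullback of the tautological $1$-form by the ordinary Legendre map; the only extra thing to verify is that the $\R$ factor is inert, which is exactly what the product splitting $\Tan(\Tan Q\times\R)=(\Tan(\Tan Q)\times\R)\oplus(\Tan Q\times\Tan\R)$ — used to define ${\cal J}$ in the first place — guarantees.

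There is no genuinely hard step here; the argument is a direct verification. The only point that deserves care is the bookkeeping: one must pin down the codomain $\Tan^*Q\times\R$ of ${\cal FL}$ and the compatible coordinates on it, so that $\pi_1^*\theta$ and its pullback along ${\cal FL}$ are computed in matching charts. Once that is fixed, everything amounts to the substitution $p_i\mapsto\partial\L/\partial v^i$, $q^i\mapsto q^i$.
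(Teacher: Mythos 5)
Your verification is correct and follows exactly the route the paper intends (the remark is stated without proof, but the coordinate expressions of ${\cal FL}$ and $\theta_\L$ given in the surrounding text make the check immediate): the substitution $p_i\mapsto\partial\L/\partial v^i$ yields $\theta_\L={\cal FL}^{\;*}(\pi_1^*\theta)$, and the identity for $\omega_\L$ follows since pullback commutes with $\d$. Nothing is missing.
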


\begin{prop}
\label{Prop-regLag}
For a Lagrangian function $\L$ the following conditions are equivalent:
\begin{enumerate}
\item
The Legendre map
${\cal FL}$ is a local diffeomorphism.
\item
The fibre Hesian
$
{\cal F}^2\L \colon
\Tan Q\times\R \longrightarrow (\Tan^*Q\times\R)\otimes (\Tan^*Q\times\R)
$
of~$\L$ is everywhere nondegenerate.
(The tensor product is of vector bundles over $Q \times \R$.)
\item
$(\Tan Q\times\R,\eta_\L)$
is a contact manifold.
\end{enumerate}
\end{prop}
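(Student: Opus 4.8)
The plan is to prove the chain $(1)\Leftrightarrow(2)\Leftrightarrow(3)$ by working in natural coordinates $(q^i,v^i,s)$, using the expressions already recorded for $\mathcal{FL}$, $\eta_\L$ and $\d\eta_\L$, together with the general fact recalled just before the statement: the fibre derivative $\mathcal{F}f$ of a function $f$ on a vector bundle is a local diffeomorphism at a point exactly when its fibre hessian $\mathcal{F}^2 f$ is nondegenerate there.

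The equivalence $(1)\Leftrightarrow(2)$ is then essentially immediate, since $\mathcal{FL}$ is by definition the fibre derivative of $\L$ for the bundle $\tau_0\colon\Tan Q\times\R\to Q\times\R$ and $\mathcal{F}^2\L$ is its fibre derivative, so the quoted principle applies verbatim, pointwise and hence globally. Concretely, $\mathcal{FL}(q^i,v^i,s)=(q^i,\partial\L/\partial v^i,s)$, so its Jacobian is block triangular with diagonal blocks $\mathrm{Id}_n$, the symmetric matrix $(\partial^2\L/\partial v^i\partial v^j)$, and $1$; it is invertible at a point precisely when that middle block --- which is the local matrix of $\mathcal{F}^2\L$ --- is invertible there.

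For $(2)\Leftrightarrow(3)$ I would compute the top-degree form $\eta_\L\wedge(\d\eta_\L)^{\wedge n}$ in these coordinates. Since $\Tan Q\times\R$ has dimension $2n+1$, any nonvanishing $(2n+1)$-form must contain each of $\d q^1,\dots,\d q^n,\d v^1,\dots,\d v^n,\d s$ exactly once. In the displayed formula for $\d\eta_\L$ the only summand carrying any $\d v$ is $-\frac{\partial^2\L}{\partial v^j\partial v^i}\,\d v^j\wedge\d q^i$, so in $(\d\eta_\L)^{\wedge n}$ all $n$ factors must be taken from this summand in order to produce the $n$ differentials $\d v^i$, and the remaining $\d s$ must then come from the $\d s$ term of $\eta_\L$; every other choice of factors vanishes for degree reasons. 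A short standard computation (for a $2$-form $\sum_{i,j}a_{ij}\,\d v^i\wedge\d q^j$ with all $\d q^i,\d v^j$ independent, one has $\left(\sum_{i,j}a_{ij}\,\d v^i\wedge\d q^j\right)^{\wedge n}=n!\det(a)\,\d v^1\wedge\d q^1\wedge\cdots\wedge\d v^n\wedge\d q^n$) then gives
\[
\eta_\L\wedge(\d\eta_\L)^{\wedge n}
= c\,\det\!\left(\frac{\partial^2\L}{\partial v^i\partial v^j}\right)
\d q^1\wedge\cdots\wedge\d q^n\wedge\d v^1\wedge\cdots\wedge\d v^n\wedge\d s,
\]
with $c$ a nonzero numerical constant (of the form $\pm\,n!$). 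Since the local matrix of $\mathcal{F}^2\L$ is exactly $(\partial^2\L/\partial v^i\partial v^j)$, the left-hand side is a volume form at a point iff $\mathcal{F}^2\L$ is nondegenerate there; globally, $\eta_\L$ is a contact form iff condition $(2)$ holds, which is $(2)\Leftrightarrow(3)$.

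I expect the only slightly delicate point to be the combinatorial bookkeeping in this wedge-product computation: checking that all terms other than the ``Hessian'' one drop out by degree count, and keeping track of the overall sign of $c$. Alternatively, for $(2)\Leftrightarrow(3)$ one could avoid coordinates by using $\omega_\L=\mathcal{FL}^{*}(\pi_1^*\omega)$ from the remark above and transferring nondegeneracy of the canonical symplectic structure along $\mathcal{FL}$, but the coordinate argument is the most direct. Everything else is a routine translation among the three coordinate descriptions, requiring no tools beyond those already in the text.
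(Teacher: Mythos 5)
Your proposal is correct and follows essentially the same route as the paper: the paper's proof also works in natural coordinates, writes ${\cal FL}(q^i,v^i,s)=(q^i,\partial\L/\partial v^i,s)$ and ${\cal F}^2\L=(W_{ij})$ with $W_{ij}=\partial^2\L/\partial v^i\partial v^j$, and observes that all three conditions amount to the matrix $W$ being everywhere nonsingular. You simply spell out the details the paper leaves implicit (the block-triangular Jacobian for $(1)\Leftrightarrow(2)$ and the wedge computation $\eta_\L\wedge(\d\eta_\L)^{\wedge n}=\pm n!\det(W)\,\d q\wedge\d v\wedge\d s$ for $(2)\Leftrightarrow(3)$), and both steps are sound.
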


The proof of this result can be easily done 
using natural coordinates, where
$$
{\cal FL}:(q^i,v^i, s)  \longrightarrow  \left(q^i,
\frac{\displaystyle\partial\L}{\displaystyle\partial v^i}, s
\right)\,,
$$
$$
{\cal F}^2 \L(q^i,v^i,s) = (q^i,W_{ij},s)
\,,\quad
\hbox{with}\ 
W_{ij} = 
\left( \frac{\partial^2\L}{\partial v^i\partial v^j}\right)
\,.
$$
Then the conditions in the proposition 
mean that the matrix 
$W= (W_{ij})$ 
is everywhere nonsingular.

\begin{dfn}
A Lagrangian function $\L$ is said to be \textbf{regular} if the equivalent
conditions in Proposition \ref{Prop-regLag} hold.
Otherwise $\L$ is called a \textbf{singular} Lagrangian.
In particular, 
$\L$ is said to be \textbf{hyperregular} 
if ${\cal FL}$ is a global diffeomorphism.
\end{dfn}

\begin{obs}{\rm
As a result of the preceding definitions and results,
every \emph{regular} contact Lagrangian system 
has associated the contact Hamiltonian system
$(\Tan Q\times\R, \eta_\L, E_\L)$.
}
\end{obs}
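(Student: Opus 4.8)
The plan is to recognize that this remark is a direct synthesis of the immediately preceding material, so the task reduces to verifying that the triple $(\Tan Q\times\R, \eta_\L, E_\L)$ literally satisfies the definition of a contact Hamiltonian system introduced after Theorem \ref{teo-hameqs}. That definition requires exactly two ingredients: a contact manifold $(M,\eta)$ in the sense of Definition \ref{dfn-contact-manifold}, and a function $\H\in\Cinfty(M)$. I would therefore supply these two ingredients with $M=\Tan Q\times\R$, $\eta=\eta_\L$, and $\H=E_\L$, and then appeal to Theorem \ref{teo-hameqs} to produce the associated dynamics.

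For the first ingredient, I would invoke Proposition \ref{Prop-regLag} directly: by hypothesis $\L$ is regular, which by the very definition of regularity means the equivalent conditions of that proposition hold; in particular condition~(3) asserts that $(\Tan Q\times\R,\eta_\L)$ is a contact manifold. A brief dimensional remark makes the statement consistent with Definition \ref{dfn-contact-manifold}: writing $n=\dim Q$, the phase space $\Tan Q\times\R$ has dimension $2n+1$, which is odd, and regularity is precisely what guarantees that $\eta_\L\wedge(\d\eta_\L)^{\wedge n}$ is a volume form (equivalently, that the fibre Hessian matrix $W=(\partial^2\L/\partial v^i\partial v^j)$ is everywhere nonsingular).

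For the second ingredient, the smoothness of $E_\L$ is immediate from its definition in Definition \ref{lagrangean}, namely $E_\L=\Delta(\L)-\L$: since $\Delta\in\vf(\Tan Q\times\R)$ is the (smooth) Liouville vector field and $\L$ is a smooth function, we have $E_\L\in\Cinfty(\Tan Q\times\R)$. With both ingredients in hand, $(\Tan Q\times\R,\eta_\L,E_\L)$ is a contact Hamiltonian system in the stated sense, and Theorem \ref{teo-hameqs} applied to this data yields the unique contact Hamiltonian vector field $X_{E_\L}$ solving the contact Hamiltonian equations \eqref{hamilton-contact-eqs}, together with its integral curves; this is the \emph{associated} Hamiltonian dynamics referred to in the remark.

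I do not expect a genuine obstacle here: the entire content of the remark is already packaged inside the equivalence of Proposition \ref{Prop-regLag}, so the proof is essentially a matter of matching hypotheses to the definition of a contact Hamiltonian system. The only point requiring any care is the identification of the contact condition with regularity, and this is exactly the equivalence (1)$\Leftrightarrow$(3) established just above; everything else (smoothness of $E_\L$, existence and uniqueness of $X_{E_\L}$) is automatic from earlier definitions and from Theorem \ref{teo-hameqs}.
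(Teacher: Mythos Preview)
Your proposal is correct and matches the paper's own treatment: the paper offers no proof beyond the phrase ``As a result of the preceding definitions and results,'' and your unpacking---invoking Proposition~\ref{Prop-regLag}(3) for the contact structure and Definition~\ref{lagrangean} for the smoothness of $E_\L$---is precisely the intended justification.
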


Given a regular contact Lagrangian system $(\Tan Q\times\R,\L)$,
from \eqref{eq-Reeb} we have that
the \textbf{Reeb vector field} 
$\Reeb_\L\in\X(\Tan Q\times\R)$ 
for this system is uniquely determined by the relations
\begin{equation}\label{eq:reeb}
\begin{cases}
    \inn(\Reeb_\L)\d\eta_\L=0\ ,\\
    \inn(\Reeb_\L)\eta_\L=1 \ .
\end{cases}
\end{equation}
Its local expression is
\begin{equation}
\label{coorReeb}
\Reeb_\L=\frac{\partial}{\partial s}-W^{ji}
\frac{\partial^2\L}{\partial s \partial v^j}\,\frac{\partial}{\partial v^i} \ ,
\end{equation}
where $(W^{ij})$ is the inverse of the Hessian matrix,
namely 
$W^{ij} W_{jk} = \delta^i_{\,k}$.

Notice that the Reeb vector field does not appear
in the simplest form $\partial/\partial s$.
This is due to the fact that the natural coordinates in $\Tan Q \times \R$
are \emph{not} adapted coordinates for $\eta_\L$.

\subsection{The contact Euler--Lagrange equations}

\begin{dfn}
\label{def-lageqs}
Let $(\Tan Q\times\R,\L)$ be a regular contact Lagrangian system.
\\
The \textbf{contact Euler--Lagrange equations} for a holonomic curve
${\bf\tilde c}\colon I\subset\R \to\Tan Q\times\R$ are
\begin{equation}
\begin{cases}
\inn({\bf\tilde c}')\d\eta_\L = 
\Big(\d E_\L - (\Lie_{\Reeb_\L}E_\L)\eta_\L\Big)\circ{\bf\tilde c} 
\:,\\
\inn({\bf\tilde c}')\eta_\L = - E_\L\circ{\bf\tilde c} 
\:,
\end{cases}
\label{hec}
\end{equation}
where ${\bf\tilde c}'\colon I\subset\R\to\Tan(\Tan Q\times\R)$ denotes the
canonical lifting of ${\bf\tilde c}$ to $\Tan(\Tan Q\times\R)$.
\\
The \textbf{contact Lagrangian equations} for a vector field $X_\L\in\X(\Tan Q\times\R)$ are 
\begin{equation}
\label{eq-E-L-contact1}
\begin{cases}
    \inn(X_\L)\d \eta_\L=\d E_\L-(\Lie_{\Reeb_\L}E_\L)\eta_\L\ ,
\\
    \inn(X_\L)\eta_\L=-E_\L \ .
\end{cases}
\end{equation}
A vector field which is a solution to these equations is called a
\textbf{contact Lagrangian vector field}
(it is a contact Hamiltonian vector field for the function $E_\L$).
\end{dfn}

\begin{obs}{\rm
Now, taking into account Propositions \ref{eqset} and \ref{eqset2},
in the open set $U=\{p\in M;\,E_\L(p)\neq 0\}$,
the above equations can be stated equivalently as
\beq\label{hamilton-eqs-no-reeb-lag2}
 \begin{cases}
\inn({\bf \tilde c}')\Omega_\L = 0 \ ,\\
\inn({\bf \tilde c}')\eta_\L = - E_\L\circ{\bf\tilde c} \ ,
\end{cases}
\eeq
and
    \begin{equation}\label{hamilton-eqs-no-reeb-lag}
        \begin{cases}
           \inn(X_\L)\Omega_\L = 0\ ,\\
            \inn(X_\L)\eta_\L = -E_\L\ ,
        \end{cases}
    \end{equation}
where $\Omega_{\cal L}=-E_{\cal L}\,\d\eta_{\cal L} + \d E_{\cal L}\wedge\eta_{\cal L}$.
}\end{obs}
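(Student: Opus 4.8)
The plan is to deduce the statement directly from Propositions~\ref{eqset} and~\ref{eqset2}, which are phrased for an abstract contact Hamiltonian system $(M,\eta,\H)$. First I would record the dictionary: since $\L$ is regular, $(\Tan Q\times\R,\eta_\L)$ is a contact manifold (Proposition~\ref{Prop-regLag}), with Reeb vector field $\Reeb_\L$, and $E_\L$ is a smooth function on it, so $(\Tan Q\times\R,\eta_\L,E_\L)$ is a contact Hamiltonian system. Comparing Definition~\ref{def-lageqs} with~\eqref{hamilton-contact-eqs}, the contact Lagrangian vector field $X_\L$ is exactly the contact Hamiltonian vector field of this system and equations~\eqref{eq-E-L-contact1} are~\eqref{hamilton-contact-eqs} with $(\eta,\H)$ replaced by $(\eta_\L,E_\L)$; likewise the 2-form $\Omega_\L=-E_\L\,\d\eta_\L+\d E_\L\wedge\eta_\L$ is precisely the 2-form $\Omega$ of Proposition~\ref{eqset} for this data, and the open set in that proposition is our $U=\{p:E_\L(p)\neq 0\}$.

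With this dictionary in hand there is essentially nothing left to do. Proposition~\ref{eqset}, applied verbatim to $(\Tan Q\times\R,\eta_\L,E_\L)$ on $U$, is exactly the asserted equivalence between the contact Lagrangian equations~\eqref{eq-E-L-contact1} and~\eqref{hamilton-eqs-no-reeb-lag}. For the curve version, note that the contact Euler--Lagrange equations~\eqref{hec} are nothing but the contact Hamiltonian curve equations~\eqref{hamilton-contactc-curves-eqs} for the function $E_\L$, imposed on a holonomic path; hence, by Theorem~\ref{teo-hameqs}, they single out the integral curves of $X_\L$ that happen to be holonomic. Proposition~\ref{eqset2} states that, on $U$, the integral curves of $X_\L$ are exactly the solutions of $\inn({\bf c}')\Omega_\L=0$, $\inn({\bf c}')\eta_\L=-E_\L\circ{\bf c}$; intersecting both descriptions with the class of holonomic paths yields the equivalence of~\eqref{hec} with~\eqref{hamilton-eqs-no-reeb-lag2}. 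One may add that, $\L$ being regular, $X_\L$ is a \textsc{sode}, so on $U$ every solution of~\eqref{hamilton-eqs-no-reeb-lag2} is automatically holonomic and the holonomy condition is not an extra restriction.

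There is no real obstacle here: the statement is a repackaging of results already proved. The only points that genuinely need to be checked are that the hypotheses of Propositions~\ref{eqset} and~\ref{eqset2} hold in the Lagrangian setting --- which is exactly the earlier Remark that a regular contact Lagrangian system gives rise to the contact Hamiltonian system $(\Tan Q\times\R,\eta_\L,E_\L)$ --- and that the $\Omega_\L$ written here coincides with the $\Omega$ of Proposition~\ref{eqset}, which is immediate from the definitions. It is worth keeping in mind that the restriction to $U=\{E_\L\neq 0\}$ is essential: as the Remark following Proposition~\ref{eqset} shows, at points where the Hamiltonian (here $E_\L$) vanishes the $\Omega$-form equations are strictly weaker, so the equivalence genuinely fails there.
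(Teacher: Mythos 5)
Your proposal is correct and takes essentially the same route as the paper: the remark is stated there without proof precisely as a direct application of Propositions \ref{eqset} and \ref{eqset2} to the contact Hamiltonian system $(\Tan Q\times\R,\eta_\L,E_\L)$, which is exactly the dictionary you set up. Your added checks (regularity of $\L$ guaranteeing the contact structure, the identification of $\Omega_\L$ with the $\Omega$ of Proposition \ref{eqset}, the holonomy point, and the necessity of restricting to $U=\{E_\L\neq 0\}$) merely make explicit what the paper leaves implicit.
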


In natural coordinates, for a holonomic curve
${\bf\tilde c}(t)=(q^i(t),\dot q^i(t),s(t))$,
equations \eqref{hec} are
\bea
\label{ELeqs2}
\dot s&=&\L \ ,
 \\
\label{ELeqs3}
\displaystyle\frac{\partial^2\L}{\partial v^j \partial v^i}\,
\ddot q^j +\displaystyle\frac{\partial^2\L}{\partial q^j \partial v^i} \,\dot q^j  +
\displaystyle \frac{\partial^2\L}{\partial s \partial v^i}\, \dot s
-\displaystyle\frac{\partial\L}{ \partial q^i}=
\frac{d}{dt}\left(\derpar{\L}{v^i}\right)-
\displaystyle\frac{\partial\L}{\partial q^i}&=&
\displaystyle\frac{\partial\L}{\partial s}\displaystyle\frac{\partial\L}{\partial v^i} \ ,
\eea
(which coincide with the so-called {\sl generalized Euler-Lagrange equations} stated in  \cite{He-1930});
meanwhile, for a vector field
$X_\L=\displaystyle f^i\,\frac{\partial}{\partial q^i}+F^i\,\frac{\partial}{\partial v^i} +
g\,\frac{\partial}{\partial s}$,
equations \eqref{eq-E-L-contact1} are
\bea
\displaystyle
\left( f^j-v^j \right)
\frac{\partial^2\L}{\partial v^j \partial s}
&=&0 \,,
\label{A-E-L-eqs2}
\\
\left( f^j-v^j \right)
\frac{\partial^2\L}{\partial v^i \partial v^j}
&=&0
\label{A-E-L-eqs1} \,,
\\
\left( f^j-v^j \right)
\frac{\partial^2\L}{\partial q^i \partial v^j}
+\frac{\partial\L}{\partial q^i}
-\frac{\partial^2\L}{\partial s \partial v^i}g
-\frac{\partial^2\L}{\partial q^j \partial v^i}f^j
-\frac{\partial^2\L}{\partial v^j \partial v^i}F^j
+\frac{\partial\L}{\partial s}
\frac{\partial\L}{\partial v^i}
&=& 0\,,
\label{A-E-L-eqs3}
\\
\L + 
\frac{\partial\L}{\partial v^i} (f^i-v^i) -g 
&=& 0\,;
\label{A-E-L-eqs4}
\eea
For these computations it is useful to use the relation
\beq
\label{ReebLag}
\Lie_{\Reeb_\L} E_\L = - \derpar{\L}{s}
\,,
\eeq
which is easily proved in coordinates.

\begin{prop}
\label{ELeq-teor}
If $\L$ is a regular Lagrangian, then $X_\L$ is a {\sc sode} and
the equations \eqref{A-E-L-eqs4} and \eqref{A-E-L-eqs3} become
\bea
\label{ELeqs0}
g&=&\L \ ,
\\
\label{ELeqs1}
\displaystyle\frac{\partial^2\L}{\partial v^j \partial v^i}\,
F^j +\displaystyle\frac{\partial^2\L}{\partial q^j \partial v^i} \,v^j  +
\displaystyle \frac{\partial^2\L}{\partial s \partial v^i}\,\L
-\displaystyle\frac{\partial\L}{ \partial q^i}&=&
\displaystyle\frac{\partial\L}{\partial s}\displaystyle\frac{\partial\L}{\partial v^i} \ ,
\eea
which, for the integral curves of $X_\L$, are the Euler--Lagrange equations
\eqref{ELeqs2} and \eqref{ELeqs3}.

This {\sc sode} $X_\L\equiv\Gamma_\L$ is called the \textbf{Euler--Lagrange vector field} 
associated with the Lagrangian function $\L$.
\end{prop}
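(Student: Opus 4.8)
The plan is to work entirely in the natural coordinates $(q^i,v^i,s)$ of $\Tan Q\times\R$, taking as data the system \eqref{A-E-L-eqs2}--\eqref{A-E-L-eqs4} already obtained for the components $f^i,F^i,g$ of $X_\L$, and to feed in the regularity hypothesis in the only form it is needed here: by Proposition \ref{Prop-regLag}, $\L$ regular means the Hessian matrix $W=(W_{ij})$ with $W_{ij}=\partial^2\L/\partial v^i\partial v^j$ is everywhere nonsingular.

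First I would read \eqref{A-E-L-eqs1} as the $i$-indexed linear system $W_{ij}\,(f^j-v^j)=0$; nonsingularity of $W$ forces $f^j=v^j$ for all $j$. By the coordinate form \eqref{localsode2} of the second-order condition --- equivalently, by ${\cal J}\circ X_\L=\Delta$ --- this means $X_\L$ is a {\sc sode}; write $\Gamma_\L:=X_\L$. Observe that once $f^j=v^j$, equation \eqref{A-E-L-eqs2} holds identically, so nothing more is extracted from it: the full nonsingularity of $W$ is all that is used.

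Next I would substitute $f^j=v^j$ into the two remaining equations. In \eqref{A-E-L-eqs4} the term $\tfrac{\partial\L}{\partial v^i}(f^i-v^i)$ vanishes, leaving $g=\L$, which is \eqref{ELeqs0}. Putting both $f^j=v^j$ and $g=\L$ into \eqref{A-E-L-eqs3}, the first summand drops out and a trivial rearrangement of the rest gives exactly \eqref{ELeqs1}. Finally, to pass to the integral curves of $\Gamma_\L$: being a {\sc sode}, its integral curves are holonomic, ${\bf\tilde c}(t)=(q^i(t),\dot q^i(t),s(t))$, so along them $v^i=\dot q^i$ and $F^i=\dot v^i=\ddot q^i$; then $g=\L$ becomes $\dot s=\L$, which is \eqref{ELeqs2}, and the same substitution in \eqref{ELeqs1} yields \eqref{ELeqs3}, the middle expression $\tfrac{d}{dt}(\partial\L/\partial v^i)$ being just the chain-rule expansion of its first three terms along ${\bf\tilde c}$ (using \eqref{ReebLag} is not even required, since the $\Reeb_\L$-terms have already been folded into \eqref{A-E-L-eqs3}).

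I do not expect a real obstacle: the argument is a direct computation, and the single place the hypothesis genuinely intervenes is the passage from \eqref{A-E-L-eqs1} to $f^j=v^j$. As an alternative one could prove the {\sc sode} property intrinsically, applying the vertical endomorphism ${\cal J}$ to the first equation in \eqref{eq-E-L-contact1} and using $\theta_\L={}^t{\cal J}\circ\d\L$ together with ${\cal J}\Delta=0$ and the defining relations \eqref{eq:reeb} of $\Reeb_\L$ to conclude ${\cal J}(X_\L)=\Delta$; but since the coordinate equations \eqref{A-E-L-eqs2}--\eqref{A-E-L-eqs4} are already on the table, the computational route above is the most economical and is presumably the intended one.
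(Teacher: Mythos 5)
Your argument is correct and is essentially the paper's own proof: both deduce $f^j=v^j$ from \eqref{A-E-L-eqs1} via the nonsingularity of the Hessian, note that \eqref{A-E-L-eqs2} then holds identically, and substitute into \eqref{A-E-L-eqs4} and \eqref{A-E-L-eqs3} to obtain \eqref{ELeqs0} and \eqref{ELeqs1}, passing to holonomic integral curves at the end. Your extra remarks (that only the invertibility of $W$ is used, and the intrinsic alternative via ${\cal J}$) are sound but not needed.
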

\proof
It follows from the coordinate expressions.
If $\L$ is a regular Lagrangian, equations \eqref{A-E-L-eqs1}
lead to $v^i=f^i$, which
are the {\sc sode} condition for the vector field $X_\L$.
Then, \eqref{A-E-L-eqs2} holds identically, and
\eqref{A-E-L-eqs4} and \eqref{A-E-L-eqs3} give the equations \eqref{ELeqs0} and \eqref{ELeqs1}
or, equivalently, for the integral curves of $X_\L$, the Euler--Lagrange equations \eqref{ELeqs2} and \eqref{ELeqs3}.
\qed

In this way, the local expression of this Euler--Lagrange vector field is
\beq
\label{sode-coor}
\Gamma_\L=
\L\,\frac{\partial}{\partial s}
+v^i\,\frac{\partial}{\partial q^i}
+W^{ik}
\left(
\frac{\partial\L}{ \partial q^k} 
- \frac{\partial^2\L}{\partial q^j \partial v^k} \,v^j
- \L\frac{\partial^2\L}{\partial s \partial v^k} 
+\frac{\partial\L}{\partial s}
 \frac{\partial\L}{\partial v^k} 
\right)
\frac{\partial}{\partial v^i} \,.
\eeq

\begin{obs}{\rm
It is interesting to point out how, in the Lagrangian formalism of dissipative systems,
the expression in coordinates \eqref{ELeqs2} of
the second Lagrangian equation \eqref{eq-E-L-contact1} relates the variation 
of the ``dissipation coordinate'' $s$ to the Lagrangian function and, from here, 
we can identify this coordinate with the Lagrangian action, $\displaystyle s=\int{\cal L}\,dt$.
}
\end{obs}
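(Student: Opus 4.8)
The plan is to read equation \eqref{ELeqs2} directly off the second contact Euler--Lagrange equation in \eqref{eq-E-L-contact1} when it is evaluated on a holonomic curve, and then simply integrate the resulting scalar relation. First I would take a holonomic curve $\tilde{\mathbf{c}}(t)=(q^i(t),\dot q^i(t),s(t))$, whose canonical lift to $\Tan(\Tan Q\times\R)$ is $\tilde{\mathbf{c}}'=\dot q^i\,\partial/\partial q^i+\ddot q^i\,\partial/\partial v^i+\dot s\,\partial/\partial s$, and contract it with the contact Lagrangian form $\eta_\L=\d s-(\partial\L/\partial v^i)\,\d q^i$ from \eqref{eq:etaL}. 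Since $v^i=\dot q^i$ holds along a holonomic curve, this contraction yields $\inn(\tilde{\mathbf{c}}')\eta_\L=\dot s-(\partial\L/\partial v^i)\dot q^i$.

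Next I would evaluate $E_\L=\Delta(\L)-\L$ along the same curve. As $\Delta=v^i\,\partial/\partial v^i$, one has $E_\L=v^i(\partial\L/\partial v^i)-\L$, which on the holonomic curve reads $\dot q^i(\partial\L/\partial v^i)-\L$. Substituting both expressions into the second equation $\inn(\tilde{\mathbf{c}}')\eta_\L=-E_\L\circ\tilde{\mathbf{c}}$, the terms $(\partial\L/\partial v^i)\dot q^i$ appearing on each side cancel, and one is left with $\dot s=\L$, which is precisely \eqref{ELeqs2}. This is the sole dynamical input needed for the remark.

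Finally, the identification of $s$ with the action is obtained by integrating this first-order scalar ODE along a solution: from $\dot s=\L$ one gets $s(t)=s(t_0)+\int_{t_0}^{t}\L\big(q(t'),\dot q(t'),s(t')\big)\,\d t'$. The point I would stress---and the only genuine subtlety, since the computation itself is immediate---is interpretive rather than technical: this identification is valid only \emph{along the integral curves of the dynamics}, and only up to the additive constant $s(t_0)$ fixed by the initial condition. It is not a kinematic identity on $\Tan Q\times\R$ but a direct consequence of imposing the second contact Euler--Lagrange equation. Thus, up to that integration constant, the dissipation coordinate $s$ coincides with the Lagrangian action $\int\L\,\d t$, as asserted.
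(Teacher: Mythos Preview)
Your derivation is correct and matches the paper's implicit reasoning: the paper states this as a remark without formal proof, since equation \eqref{ELeqs2}, $\dot s=\L$, has already been recorded as the coordinate form of the second equation in \eqref{hec} (equivalently \eqref{eq-E-L-contact1}); the remark then simply reads off the identification $s=\int\L\,\d t$ by integration along solutions. Your computation of $\inn(\tilde{\mathbf{c}}')\eta_\L$ and $E_\L$ along a holonomic curve, the cancellation of the $(\partial\L/\partial v^i)\dot q^i$ terms, and your caveat that the identification holds only along integral curves up to an additive constant are all exactly the content of the remark.
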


\begin{obs}{\rm
If $\L$ is singular, although $(\Tan Q\times\R,\eta_\L)$ is not
a contact manifold, but a pre-contact one, and hence the Reeb vector field
is not uniquely defined, it can be proved that
the Lagrangian equations \eqref{eq-E-L-contact1} 
are independent on the Reeb vector field used  \cite{DeLeon2019}.
Alternativelly, Proposition \ref{eqset} holds also in this case and,
hence, the Reeb-independent equations \eqref{hamilton-eqs-no-reeb-lag}
can be used instead.
In any case, solutions to the Lagrangian equations
are not necessarily {\sc sode} and,
in order to obtain the Euler--Lagrange equations \eqref{ELeqs1}
(or \eqref{ELeqs3}), 
the condition ${\cal J}(X_\L)=\Delta$ must be added to the above Lagrangian equations.
Furthermore, these equations are not necessarily compatible everywhere on $\Tan Q\times\R$ 
and a suitable {\sl constraint algorithm} must be implemented in order to find 
a {\sl final constraint submanifold} $S_f\hookrightarrow\Tan Q\times\R$
(if it exists) where there are {\sc sode} vector fields $X_{\cal L}\in\X(\Tan Q\times\R)$,
tangent to $S_f$, which are (not necessarily unique) solutions to the above equations on $S_f$.
All these problems have been studied in detail in  \cite{DeLeon2019}.
}
\end{obs}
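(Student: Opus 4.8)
The plan is to establish each assertion of the remark by reading the coordinate equations \eqref{A-E-L-eqs2}--\eqref{A-E-L-eqs4} against the intrinsic criteria of the previous subsections, the only genuinely substantial point --- solvability after stabilisation --- being the one deferred to \cite{DeLeon2019}. First I would settle the precontact character and the non-uniqueness of the Reeb field. By Proposition \ref{Prop-regLag}, $(\Tan Q\times\R,\eta_\L)$ is a contact manifold exactly when $\L$ is regular; hence for a singular $\L$ the form $\eta_\L\wedge(\d\eta_\L)^{\wedge n}$ degenerates at the points where $W=(W_{ij})$ does, so $\eta_\L$ is at best precontact, while the non-vanishing $\d s$ term in $\eta_\L=\d s-\theta_\L$ guarantees $\eta_\L$ is nowhere zero. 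For the Reeb field I would inspect $\ker\d\eta_\L$: the coordinate expression of $\d\eta_\L$ shows that every vertical vector $A^i\,\partial/\partial v^i$ with $W_{ij}A^i=0$ lies in $\ker\d\eta_\L\cap\ker\eta_\L$, so whenever a solution $\Reeb_\L$ of \eqref{eq:reeb} exists the whole affine set $\Reeb_\L+(\ker\d\eta_\L\cap\ker\eta_\L)$ solves \eqref{eq:reeb}, and this set has positive dimension precisely when $\ker W\neq0$.

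Next I would treat the two ways of writing the equations and their agreement. The key observation is that the proof of Proposition \ref{eqset} uses only the defining relations of a Reeb field, not its uniqueness, so it carries over verbatim to the singular case wherever an admissible Reeb field exists; since the $\Omega_\L$-equations \eqref{hamilton-eqs-no-reeb-lag}, with $\Omega_\L=-E_\L\,\d\eta_\L+\d E_\L\wedge\eta_\L$, make no reference to $\Reeb_\L$, their equivalence with \eqref{eq-E-L-contact1} for every admissible choice is itself the asserted Reeb-independence. The underlying reason, which I would record, is that the first equation in \eqref{eq-E-L-contact1} is solvable only if its right-hand side annihilates $\ker\d\eta_\L$; contracting with $Z\in\ker\d\eta_\L\cap\ker\eta_\L$ forces $\Lie_Z E_\L=0$, and since two admissible Reeb fields differ exactly by such a $Z$, the offending term $(\Lie_Z E_\L)\,\eta_\L$ vanishes wherever solutions exist. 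I would then adopt \eqref{hamilton-eqs-no-reeb-lag} as the robust, manifestly Reeb-free form, deferring the detailed comparison to \cite{DeLeon2019}.

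I would then explain the failure of the {\sc sode} property straight from the coordinate equations. Equation \eqref{A-E-L-eqs1} reads $(f^j-v^j)W_{ij}=0$, that is $f-v\in\ker W$; in the regular case $\ker W=0$ forces $f^i=v^i$ and the holonomic condition holds automatically, whereas in the singular case $f-v$ may be any section of the non-trivial $\ker W$, so a solution of \eqref{eq-E-L-contact1} need not be a {\sc sode}. To recover the Euler--Lagrange equations \eqref{ELeqs1} one must therefore append the condition ${\cal J}(X_\L)=\Delta$ (i.e. $f^i=v^i$) by hand; feeding $f=v$ into \eqref{A-E-L-eqs4} and \eqref{A-E-L-eqs3} reproduces $g=\L$ and \eqref{ELeqs1}, but now as imposed rather than forced conditions.

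The hard part, and the reason a constraint algorithm is unavoidable, is compatibility. After imposing $f=v$ and $g=\L$, equation \eqref{A-E-L-eqs3} becomes a linear system $W_{ij}F^j=\beta_i$ for the accelerations, whose right-hand side $\beta_i$ need not lie in the image of $W$; contracting with each $A\in\ker W$ produces functions whose vanishing is a family of primary constraints that generically cut out a proper submanifold of $\Tan Q\times\R$. I would then run the usual Gotay--Nester-type stabilisation: restrict to the constraint locus, demand that the {\sc sode} $X_\L$ be tangent to it, collect the new constraints this tangency produces, and iterate. The substantive claims to be proved this way --- that the procedure terminates in a final constraint submanifold $S_f$, and that on $S_f$ there exist {\sc sode} vector fields tangent to $S_f$ and solving the equations, the residual $\ker W$ freedom accounting for their non-uniqueness --- constitute the precontact analogue of presymplectic constraint theory and are exactly what is carried out in detail in \cite{DeLeon2019}; the obstruction identified above, $\beta\notin\operatorname{im}W$, is the concrete engine behind it.
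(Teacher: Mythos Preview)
The paper does not prove this statement: it is an \texttt{obs} (remark) that asserts a list of facts and defers the details entirely to \cite{DeLeon2019}. There is no argument in the paper to compare against beyond the bare assertions and the citation.

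Your proposal is correct and supplies substantially more than the paper does. The precontact/non-uniqueness discussion via $\ker W$ is right; the direct Reeb-independence argument (contract the first equation of \eqref{eq-E-L-contact1} with $Z\in\ker\d\eta_\L\cap\ker\eta_\L$ to force $\Lie_Z E_\L=0$ on the solution locus, hence two admissible Reeb fields give the same equations there) is exactly the mechanism behind the claim, and it is good that you give it, because your transplant of Proposition~\ref{eqset} only yields Reeb-independence on $\{E_\L\neq 0\}$. Your reading of \eqref{A-E-L-eqs1} as $f-v\in\ker W$ and the consequent need to impose ${\cal J}(X_\L)=\Delta$ by hand is precisely how the paper arrives at \eqref{ELeqs0}--\eqref{ELeqs1} in the regular case, and your identification of the primary constraints as the conditions $\beta\in\operatorname{im}W$ (with $\beta$ the right-hand side of \eqref{A-E-L-eqs3} after the {\sc sode} substitution) is the correct starting point for the Gotay--Nester stabilisation. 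The existence and tangency claims on $S_f$ you rightly leave to \cite{DeLeon2019}, as the paper does.
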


\subsection{The canonical Hamiltonian formalism for contact Lagrangian systems}
\label{Legmap}

In the (hyper)regular case we have a diffeomorphism between $(\Tan Q\times\R,\eta_\L)$ and
$(\Tan^*Q\times\R,\eta)$,
where ${\cal FL}^{\;*}\eta=\eta_\L$.
Furthermore, there exists (maybe locally) a function $\H\in\Cinfty(\Tan^* Q\times\R)$ 
such that ${\cal FL}^{\;*}\H=E_\L$; then we have the
contact Hamiltonian system $(\Tan^*Q\times\R,\eta,\H)$,
for which ${\cal FL}_*{\Reeb}_\L={\Reeb}$.
Then, if $X_\H\in\X(\Tan^*Q\times\R)$ is the
contact Hamiltonian vector field associated with $\H$,
we have that ${\cal FL}_*\Gamma_\L=X_\H$.

For singular Lagrangians, following \cite{got79} we define:

\begin{dfn}
A singular Lagrangian $\L$ is said to be \textbf{almost-regular} if 
$\mathcal{P}:= {\cal FL}(\Tan Q)$ is
a closed submanifold of $\Tan^*Q\times\R$, 
the Legendre map ${\cal FL}$ is a submersion onto its image, and
the fibres ${\cal FL}^{-1}({\cal FL}(v_q, s))$, 
for every $(v_q, s)\in \Tan Q\times\R$, are
connected submanifolds of $\Tan Q\times\R$.
\end{dfn}

In these cases, we have $({\cal P},\eta_{\cal P})$, where
$\eta_{\cal P}=j_{\cal P}^*\eta\in\Omega^1({\cal P})$,
and $j_{\cal P}\colon{\cal P}\hookrightarrow\Tan^*Q\times\R$
is the natural embedding.
Furthermore, the Lagrangian energy function $E_\L$ is 
${\cal FL}$-projectable; 
i.e., there is a unique $\H_{\cal P}\in\Cinfty({\cal P})$
such that $E_\L={\cal FL}_o^*\,\H_{\cal P}$,
where ${\cal FL}_o\colon\Tan Q\times\R\to{\cal P}$
is the restriction of  ${\cal FL}$ to ${\cal P}$,
defined by  ${\cal FL}=j_{\cal P}\circ{\cal FL}_o$.
Then, there exists a Hamiltonian formalism associated with the original Lagrangian system, 
which is developed on the submanifold ${\cal P}$,
and the contact Hamiltonian equations for
$X_{\H_{\cal P}}\in\X({\cal P})$ are \eqref{hamilton-contact-eqs}
adapted to this situation or, equivalently,
  \begin{equation}\label{HeqsP}
        \begin{cases}
        \inn(X_{\H_{\cal P}})\Omega_{\cal P} = 0\ ,\\
        \inn(X_{\H_{\cal P}})\eta_{\cal P} = -\H_{\cal P}\ ,
        \end{cases}
    \end{equation}
where $\Omega_{\cal P}=-\H_{\cal P}\,\d\eta_{\cal P} + \d \H_{\cal P}\wedge\eta_{\cal P}$.
As in the Lagrangian formalism, these equations are not necessarily compatible everywhere on ${\cal P}$ 
and a suitable {\sl constraint algorithm} must be implemented in order to find 
a {\sl final constraint submanifold} $P_f\hookrightarrow{\cal P}$
(if it exists) where there are vector fields $X_{\H_{\cal P}}\in\X({\cal P})$,
tangent to $P_f$, which are (not necessarily unique) solutions to \eqref{HeqsP} on $P_f$.
(See \cite{DeLeon2019} for a detailed analysis on all these topics).

\subsection{Lagrangians with holonomic dissipation term}
\label{dislag}

In a recent paper by Ciaglia, Cruz, and Marmo
\cite{Ciaglia2018}
a Lagrangian description for some systems with dissipation was given
using a modification of Lagrangian formalism 
inspired by the contact Hamiltonian formalism.
In this section we will see that this description
coincides with the general formalism studied in 
Section \ref{sec-conLagsys}
when applied to a particular class of contact Lagrangians.

\begin{dfn}
A \textbf{Lagrangian with holonomic dissipation term} in $\Tan Q\times\R$ 
is a function $\L=L+\phi\in\Cinfty(\Tan Q\times\R)$, 
where $L=\tau_1^{\,*}L_o$, for a Lagrangian function
$L_o\in\Cinfty(\Tan Q)$, and $\phi=\tau_0^{\,*}\phi_o$, for $\phi_o\in\Cinfty(Q\times\R)$.
\label{simple}
\end{dfn}

In coordinates, $\L(q^i,v^i,s)=L(q^i,v^i)+\phi(q^i,s)$.
Observe that this implies that the momenta defined by the Legendre transformation are independent of the coordinate~$s$. 
In addition, for these Lagrangians the conditions
$\displaystyle\frac{\partial^2\L}{\partial v^i\partial s}=0$ hold. 
This motivates the name given in the definition.

\begin{obs}{\rm
The Lagrangian formalism presented in \cite{Ciaglia2018} 
is a little less general than the one we examine here, 
since there only the case $\phi=\phi(s)$ is considered.
}\end{obs}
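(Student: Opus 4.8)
The claim really amounts to two things: that the modified Lagrangian formalism of \cite{Ciaglia2018} is recovered from the general contact formalism of Section~\ref{sec-conLagsys} precisely on the subclass of Lagrangians with holonomic dissipation term for which $\phi = \phi(s)$, and that this subclass sits strictly inside the class of Definition~\ref{simple}. The second assertion is elementary, since $\Cinfty(\R)\subsetneq\Cinfty(Q\times\R)$ as soon as $\dim Q\geq 1$, so the content is in the first, and the plan is to specialize the machinery of Section~\ref{sec-conLagsys} to $\L = L + \phi$ and read off the equations of motion. First I would record how the structures of Definition~\ref{lagrangean} simplify: since $\phi = \tau_0^{\,*}\phi_o$ is a pullback from $Q\times\R$, the vertical endomorphism $\cal J$ kills $\d\phi$, so $\theta_\L = {}^t{\cal J}\circ\d L$ is built from $L$ alone (in coordinates $\theta_\L = (\partial L/\partial v^i)\,\d q^i$); hence $\eta_\L = \d s - \theta_\L$ and $\d\eta_\L = \omega_\L$ involve only $L$, the Legendre map ${\cal FL}$ is independent of $s$, and $\L$ is regular iff $L_o$ is. Moreover, because $\partial^2\L/\partial v^i\partial s = 0$, formula~\eqref{coorReeb} collapses to $\Reeb_\L = \partial/\partial s$, so the natural coordinates $(q^i,v^i,s)$ are themselves adapted coordinates for $\eta_\L$; by \eqref{ReebLag}, $\Lie_{\Reeb_\L}E_\L = -\partial\phi_o/\partial s$, and $E_\L = \Delta(\L)-\L = (\Delta L - L) - \phi$.

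Feeding these simplifications into Proposition~\ref{ELeq-teor}, a regular $\L$ of this kind admits the Euler--Lagrange vector field $\Gamma_\L$, and its equations \eqref{ELeqs2}--\eqref{ELeqs3} reduce to
\begin{equation*}
\dot s = L + \phi\,,\qquad
\frac{d}{dt}\!\left(\frac{\partial L}{\partial v^i}\right) - \frac{\partial L}{\partial q^i} - \frac{\partial\phi}{\partial q^i} = \frac{\partial\phi}{\partial s}\,\frac{\partial L}{\partial v^i}\,.
\end{equation*}
Setting $\phi = \phi(s)$, the term $\partial\phi/\partial q^i$ drops out and what remains is exactly the modified Euler--Lagrange system of \cite{Ciaglia2018}, once their action-type variable is identified with $s$. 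This identification --- matching signs and checking that the ingredient playing the role of dissipation there corresponds to $\Lie_{\Reeb_\L}E_\L$ here --- is the delicate step, because \cite{Ciaglia2018} does not start from a contact manifold but from an ad hoc modification of the variational principle, so the comparison is really a translation of conventions rather than a computation.

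With the two formalisms identified on the subclass $\phi = \phi(s)$, the Remark follows: Definition~\ref{simple} allows $\phi_o$ to be an arbitrary element of $\Cinfty(Q\times\R)$, whereas \cite{Ciaglia2018} corresponds to $\phi_o\in\Cinfty(\R)$, a proper subspace; the present class is thus strictly larger, while it still satisfies $\partial^2\L/\partial v^i\partial s = 0$ --- the property that keeps the Legendre momenta $s$-independent and motivates the name. I expect the only genuine obstacle to be bookkeeping: keeping track of the pullbacks under $\tau_0$ and $\tau_1$, and reconciling the conventions of \cite{Ciaglia2018} with those fixed in Section~\ref{prel}.
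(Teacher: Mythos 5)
Your proposal is correct and follows essentially the same route as the paper: the Remark has no standalone proof, but its content is exactly the computation carried out in the surrounding Section on Lagrangians with holonomic dissipation term (the simplification of $\theta_\L$, $\eta_\L$, $E_\L$, $\Reeb_\L$ for $\L=L+\phi$, leading to the equations $\dot s=\L$ and $\frac{d}{dt}\bigl(\partial L/\partial v^i\bigr)-\partial L/\partial q^i=\partial\phi/\partial q^i+(\partial\phi/\partial s)\,\partial L/\partial v^i$, which reduce to those of the cited reference when $\phi=\phi(s)$). Your identification of the strict inclusion $\Cinfty(\R)\subsetneq\Cinfty(Q\times\R)$ as the source of the ``less general'' claim matches the paper's intent.
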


\begin{prop}
Let $\L = L+\phi$ be a Lagrangian with holonomic dissipation term.
Then
its Cartan 1-form, contact form, energy and Reeb vector field
as a contact Lagrangian
can be computed as
$$
\theta_\L = \theta_L,\quad 
\eta_\L = \d s - \theta_L,\quad  
E_\L = E_L - \phi,\quad
\Reeb_\L = \frac{\partial}{\partial s}
\,,
$$
where
$\theta_L$ is the Cartan 1-form of~$L$ considered (via  pull-back)
as a 1-form on $\Tan Q \times \R$,
and 
$E_L$ is the energy of~$L$ as a function in the same way on $\Tan Q \times \R$.

The Legendre map of~$\L$,
${\cal F\L} \colon \Tan Q\times\R \to \Tan^*Q\times\R$,
can be expressed as
${\cal F\L}={\cal F}L \times {\rm Id}_\R$,
where ${\cal F}L$ is the Legendre map of~$L$.
The Hessians are related by
${\cal F}^2 \L(v_q,s) = {\cal F}^2 L(v_q)$.
Moreover,
$\L$ is regular if, and only if, $L$ is regular.
\end{prop}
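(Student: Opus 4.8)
The plan is to carry out everything in natural coordinates, since the statement is entirely about how the canonical Lagrangian structures unwind for the split $\L = L + \phi$ with $L = \tau_1^*L_o$ and $\phi = \tau_0^*\phi_o$. The crucial observation, to be recorded first, is that $\phi$ depends only on $(q^i,s)$ and $L$ depends only on $(q^i,v^i)$; in particular $\partial\phi/\partial v^i = 0$ and $\partial L/\partial s = 0$. From $\partial\phi/\partial v^i = 0$ I immediately get $\d\L$ and $\d L$ have the same $\d v^i$-components, so applying the transpose vertical endomorphism $\theta_\L = {}^t{\cal J}\circ\d\L = (\partial\L/\partial v^i)\,\d q^i = (\partial L/\partial v^i)\,\d q^i = \theta_L$, whence $\eta_\L = \d s - \theta_L$ and $\omega_\L = \omega_L$ as well. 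For the energy, $\Delta = v^i\,\partial/\partial v^i$ annihilates $\phi$, so $E_\L = \Delta(\L) - \L = \Delta(L) - L - \phi = E_L - \phi$.

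Next I would handle the Reeb vector field. The cleanest route is to verify directly that $\partial/\partial s$ satisfies the two defining equations \eqref{eq:reeb}: since $\eta_\L = \d s - (\partial L/\partial v^i)\,\d q^i$ with the coefficient $\partial L/\partial v^i$ independent of $s$ (because $L$ is), we have $\inn(\partial/\partial s)\,\d\eta_\L = 0$ — this is exactly the computation in the proof of Proposition~\ref{prop-adapted-coord}, the point being that $\eta_\L$ is already written in adapted-coordinate form — and $\inn(\partial/\partial s)\,\eta_\L = 1$ is immediate. By uniqueness of the Reeb vector field (valid once $\L$ is known to be regular), $\Reeb_\L = \partial/\partial s$. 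Alternatively one can just invoke the coordinate formula \eqref{coorReeb}: the correction term involves $\partial^2\L/\partial s\,\partial v^j = \partial^2 L/\partial s\,\partial v^j = 0$, so it vanishes.

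For the Legendre map, by definition ${\cal F}\L(v_q,s) = ({\cal F}\L(\cdot,s)(v_q), s)$ where ${\cal F}\L(\cdot,s)$ is the fibre derivative of $\L$ with $s$ frozen; since freezing $s$ turns $\L = L + \phi$ into $L + (\text{const})$, its fibre derivative along $\tau_0$ equals that of $L$, i.e. ${\cal F}\L(\cdot,s) = {\cal F}L$, giving ${\cal F}\L = {\cal F}L \times {\rm Id}_\R$. In coordinates this is $(q^i,v^i,s)\mapsto (q^i,\partial L/\partial v^i, s)$. Differentiating once more in the fibre gives the fibre Hessian $W_{ij} = \partial^2\L/\partial v^i\partial v^j = \partial^2 L/\partial v^i\partial v^j$, so ${\cal F}^2\L(v_q,s) = {\cal F}^2 L(v_q)$; and by Proposition~\ref{Prop-regLag} regularity of $\L$ is equivalent to nondegeneracy of this Hessian everywhere, which is precisely regularity of $L$. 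I do not foresee a genuine obstacle here — the proposition is essentially a bookkeeping exercise — the only point requiring a little care is to make sure the uniqueness argument for $\Reeb_\L$ is invoked only in the regular case (in the singular case $\Reeb_\L = \partial/\partial s$ should be read as "a valid choice of Reeb vector field"), or else to sidestep uniqueness altogether by using formula \eqref{coorReeb} directly.
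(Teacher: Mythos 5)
Your proposal is correct and takes essentially the same route as the paper, whose proof is simply ``it is immediate in coordinates'' together with the observations that $\partial\L/\partial v^i=\partial L/\partial v^i$ and $\partial^2\L/\partial v^i\partial v^j=\partial^2 L/\partial v^i\partial v^j$; you have merely written out in full the coordinate bookkeeping the paper leaves implicit. Your extra care about invoking uniqueness of the Reeb vector field only in the regular case (or bypassing it via the explicit formula \eqref{coorReeb}) is a sensible refinement but not a departure from the paper's argument.
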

\begin{proof}
It is immediate in coordinates.
In particular, the assertion about the Legendre map
is a consequence of the fact that
$\partial \L / \partial v^i = \partial L / \partial v^i$.
In a similar way the relation between the Hessians 
can be expressed in coordinates as
$\displaystyle 
\frac{\partial^2\L}{\partial v^i\partial v^j}=
\frac{\partial^2L}{\partial v^i\partial v^j}$.
This shows that $\L$ is regular if, and only if, $L$ is regular.
\end{proof}

Obviously $\L$ is hyperregular if, and only if, $L$ also is.
This means that 
the Legendre map ${\cal F}\L$ is a diffeomorphism,
and the canonical Hamiltonian formalism 
for the Lagrangian with nonholonomic dissipation term
can be formulated as in Section~\ref{Legmap}.

For the contact Lagrangian system 
$(\Tan Q\times\R,\eta_\L,E_\L)$ 
the dynamical equations 
for vector fields are
\begin{equation*}
\begin{cases}
 \inn(X_\L)\d\eta_\L=
 \d E_\L-(\Lie_{\Reeb_\L}E_\L) \eta_\L\ ,\\
   \inn(X_\L)\eta_\L= -E_\L \,.
\end{cases}
\end{equation*}
In coordinates, writing
$\displaystyle 
X_\L=
g\frac{\partial}{\partial s}+f^i\frac{\partial}{\partial q^i}+F^i\frac{\partial}{\partial v^i}$,
the second Lagrangian equation 
for $X_\L$ reads, in coordinates,
\beq
\L+\frac{\partial L}{\partial v^i} (f^i-v^i) -g=0 \ ,
\label{preg}
\eeq
and this  is the equation \eqref{A-E-L-eqs4} for $\L=L+\phi$.
The first Lagrangian equation gives
\begin{equation}
(f^i-v^i)\frac{\partial^2{L}}{\partial v^j\partial v^i}=0 \ ,
\label{presode}
\end{equation}
and
\beq
\left(\frac{\partial^2{L}}{\partial q^i\partial v^j}
-\frac{\partial^2{L}}{\partial q^j\partial v^i}\right)\,f^j+
\frac{\partial^2{L}}{\partial q^i\partial v^j}\,v^j-
\frac{\partial^2{L}}{\partial v^j\partial v^i}\,F^j =
-\frac{\partial L}{\partial q^i} - \frac{\partial\phi}{\partial q^i}-
\frac{\partial\phi}{\partial s} \frac{\partial L}{\partial v^i}\ , 
\label{preEL}
\eeq
which is equation \eqref{A-E-L-eqs3} for  $\L$.
Observe that equations \eqref{A-E-L-eqs2} are identities, since
$\displaystyle \frac{\partial^2\L}{\partial v^j \partial s}=0$.

Finally, as in Proposition \ref{ELeq-teor},
if $\L$ is a regular Lagrangian
then Eqn.~(\ref{presode}) implies that $f^i = v^i$,
that is,
$X_\L$ is a {\sc sode}, 
and the equations of motion become
\bea
\label{1st}
g&=&\L \ , \\
\label{2nd}
\frac{\partial^2{L}}{\partial v^j\partial v^i}\,\ddot{q}^j+
\frac{\partial^2{L}}{\partial q^j\partial v^i}\,\dot{q}^j-
\frac{\partial L}{\partial q^i}=
 \frac{d}{dt}\left(\derpar{L}{v^i}\right)-\displaystyle\frac{\partial L}{ \partial q^i}&=&
\frac{\partial\phi}{\partial q^i}+\frac{\partial\phi}{\partial s} \frac{\partial L}{\partial v^i} \,.
\eea
These are the expression in coordinates of the contact Euler--Lagrange equations.

\section{Symmetries of contact Hamiltonian and Lagrangian systems}
\label{sims}

\subsection{Symmetries for contact Hamiltonian systems}

For a dynamical system, there are different concepts of symmetry 
which depend on which structure they preserve. 
Thus, one can consider the transformations that preserve the geometric structures of the system,
or those that preserve its solutions \cite{Gracia2002,R2019}.
Next we discuss these subjects for contact systems.
(See also \cite{DeLeon2019b} for another complementary approach on these topics).

Let $(M,\eta,\H)$ be a contact Hamiltonian system with Reeb vector field $\Reeb$,
and $X_\H$ the contact Hamiltonian vector field for this system;
that is, the solution to the Hamilton equations \eqref{hamilton-contact-eqs}.

\begin{dfn}\label{dfn:sym}
    A \textbf{dynamical symmetry} of a contact Hamiltonian system is a diffeomorphism 
    $\Phi\colon M\longrightarrow M$ such that $\Phi_*X_\H=X_\H$ 
    (it maps solutions into solutions).
    
    An \textbf{infinitesimal dynamical symmetry} of a contact Hamiltonian system is a vector field 
    $Y\in \X(M)$ whose local flow is a dynamical symmetry; that is,
    $\Lie_YX_\H=[Y,X_\H]=0$.
\end{dfn}

There are other kinds of symmetries that let
the geometric structures invariant.
They are the following:

\begin{dfn}
\label{defsym}
A \textbf{contact symmetry} of a contact Hamiltonian system is a diffeomorphism 
$\Phi\colon M\rightarrow M$ such that
$$
\Phi^*\eta=\eta
\ ,\quad 
\Phi_*\H=\H \,.
$$
An \textbf{infinitesimal contact symmetry} of a contact Hamiltonian system is a vector field 
$Y\in \X(M)$ whose local flow is a contact symmetry; that is,
$$
\Lie_Y\eta=0
\ ,\quad 
\Lie_Y\H=0 \,.
$$
\end{dfn}

Furthermore we have:

\begin{prop}
    Every (infinitesimal) contact symmetry preserves the Reeb vector field; that is, $\Phi^*\Reeb=\Reeb$
    (or $[Y,\Reeb]=0$).
\end{prop}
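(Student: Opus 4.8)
The plan is to reduce the statement to the defining properties of the Reeb vector field given in \eqref{eq-Reeb}, namely $\inn(\Reeb)\d\eta=0$ and $\inn(\Reeb)\eta=1$, together with the uniqueness of $\Reeb$ established in the proposition that follows Definition \ref{dfn-contact-manifold}. In both the finite and the infinitesimal case one shows that the transformed object satisfies the very same pair of equations and then concludes by a nondegeneracy argument.

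For a contact symmetry $\Phi$ we have $\Phi^*\eta=\eta$, and therefore $\Phi^*\d\eta=\d(\Phi^*\eta)=\d\eta$. Using the naturality of the interior product, $\inn(\Phi^*Z)(\Phi^*\alpha)=\Phi^*(\inn(Z)\alpha)$, I would compute
$\inn(\Phi^*\Reeb)\eta=\inn(\Phi^*\Reeb)(\Phi^*\eta)=\Phi^*(\inn(\Reeb)\eta)=\Phi^*(1)=1$
and likewise $\inn(\Phi^*\Reeb)\d\eta=\inn(\Phi^*\Reeb)(\Phi^*\d\eta)=\Phi^*(\inn(\Reeb)\d\eta)=0$. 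Hence $\Phi^*\Reeb$ solves \eqref{eq-Reeb}, and by uniqueness of the Reeb vector field $\Phi^*\Reeb=\Reeb$.

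For an infinitesimal contact symmetry $Y$ we have $\Lie_Y\eta=0$, so $\Lie_Y\d\eta=\d(\Lie_Y\eta)=0$. Applying the identity $\Lie_Y(\inn(Z)\alpha)=\inn([Y,Z])\alpha+\inn(Z)\Lie_Y\alpha$ with $Z=\Reeb$ gives $\inn([Y,\Reeb])\eta=\Lie_Y(\inn(\Reeb)\eta)-\inn(\Reeb)\Lie_Y\eta=\Lie_Y(1)-0=0$ and $\inn([Y,\Reeb])\d\eta=\Lie_Y(\inn(\Reeb)\d\eta)-\inn(\Reeb)\Lie_Y\d\eta=0$. Thus $[Y,\Reeb]\in\ker\eta\cap\ker\d\eta$, and since $\eta\wedge(\d\eta)^{\wedge n}$ is a volume form this intersection is trivial; equivalently, $\flat([Y,\Reeb])=\inn([Y,\Reeb])\d\eta+(\inn([Y,\Reeb])\eta)\,\eta=0$ and $\flat$ is an isomorphism, so $[Y,\Reeb]=0$.

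All the computations here are routine manipulations with $\inn$, $\d$ and $\Lie$; the only point requiring some attention is the concluding step, where one must invoke the nondegeneracy of the contact form — the decomposition $\Tan M=\ker\d\eta\oplus\ker\eta$, or the isomorphism $\flat$ — to pass from ``$[Y,\Reeb]$ annihilates both $\eta$ and $\d\eta$'' to ``$[Y,\Reeb]=0$'', the analogue in the finite case being the uniqueness clause for $\Reeb$.
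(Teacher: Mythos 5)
Your proof is correct and follows essentially the same route as the paper: the finite case is the same uniqueness argument (the paper writes $\Phi^{-1}_*\Reeb$ where you write $\Phi^*\Reeb$, the same object), and your infinitesimal case is just a worked-out version of what the paper dismisses as ``immediate from the definition.'' The only point worth keeping from your write-up is the explicit final step via $\ker\eta\cap\ker\d\eta=\{0\}$ (equivalently the isomorphism $\flat$), which the paper leaves implicit.
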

\begin{proof}
    We obtain that
    \beann
\inn(\Phi^{-1}_*\Reeb)(\Phi^*\d\eta)&=&
\Phi^*(\inn(\Reeb)\d\eta)=0 \ , \\
\inn(\Phi^{-1}_*\Reeb)(\Phi^*\eta)&=&
\Phi^*(\inn(\Reeb)\eta)=1 \ ,
    \eeann
and, as $\Phi^*\eta=\eta$ and the Reeb vector fields are unique, 
from these equalities we get $\Phi^{-1}_*\Reeb=\Reeb$.
    The proof for the infinitesimal case is immediate from the definition.
\end{proof}

Finally, as a consequence of all of this 
we obtain the relation among contact symmetries and dynamical symmetries:

\begin{prop}
(Infinitesimal) contact symmetries are (infinitesimal) dynamical symmetries.
\end{prop}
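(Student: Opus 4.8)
The plan is to use the uniqueness of the contact Hamiltonian vector field guaranteed by Theorem~\ref{teo-hameqs}, together with the fact---just established---that a contact symmetry also preserves $\d\eta$ and the Reeb vector field $\Reeb$, so that it preserves \emph{all} the data entering the defining equations \eqref{hamilton-contact-eqs} of $X_\H$.

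\textbf{Finite case.} Let $\Phi$ be a contact symmetry and set $Y:=\Phi_*X_\H$. Using the naturality identity $\inn(\Phi_*Z)\alpha=\Phi_*\big(\inn(Z)\Phi^*\alpha\big)$ and the hypotheses $\Phi^*\eta=\eta$ (hence $\Phi^*\d\eta=\d\eta$), I would compute
$$
\inn(Y)\d\eta=\Phi_*\big(\inn(X_\H)\d\eta\big)=\Phi_*\big(\d\H-(\Lie_\Reeb\H)\eta\big).
$$
Then I distribute $\Phi_*$: it commutes with $\d$, fixes $\H$ and $\eta$ by assumption, and fixes $\Reeb$ by the preceding Proposition, so $\Phi_*(\Lie_\Reeb\H)=\Lie_{\Phi_*\Reeb}(\Phi_*\H)=\Lie_\Reeb\H$; this yields $\inn(Y)\d\eta=\d\H-(\Lie_\Reeb\H)\eta$. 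The analogous (shorter) computation gives $\inn(Y)\eta=-\H$. Hence $Y$ solves \eqref{hamilton-contact-eqs}, and uniqueness forces $Y=X_\H$, that is, $\Phi_*X_\H=X_\H$, which is precisely the condition in Definition~\ref{dfn:sym}.

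\textbf{Infinitesimal case.} Let $Y$ be an infinitesimal contact symmetry; the goal is $[Y,X_\H]=0$. I would differentiate both equations of \eqref{hamilton-contact-eqs} along $Y$, using $\Lie_Y(\inn(X_\H)\alpha)=\inn([Y,X_\H])\alpha+\inn(X_\H)\Lie_Y\alpha$ together with $\Lie_Y\eta=0$ (so $\Lie_Y\d\eta=0$), $\Lie_Y\H=0$, and $[Y,\Reeb]=0$; the last one also gives $\Lie_Y(\Lie_\Reeb\H)=\Lie_{[Y,\Reeb]}\H+\Lie_\Reeb\Lie_Y\H=0$. This produces $\inn([Y,X_\H])\d\eta=0$ and $\inn([Y,X_\H])\eta=0$; applying the isomorphism $\flat$ (equivalently: the first relation places $[Y,X_\H]$ in the Reeb distribution $\ker\d\eta$ and the second annihilates its $\Reeb$-component) forces $[Y,X_\H]=0$.

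I do not expect a genuine obstacle: the argument is bookkeeping with interior products and Lie derivatives plus uniqueness. The one point deserving care is keeping the variances straight---using $\Phi^*\eta=\eta$ and $\Phi_*\H=\H$ consistently so that the naturality identities above are applied with the correct pull-backs and push-forwards---and, in the infinitesimal version, remembering that $\Lie_Y$ commutes with $\d$ and obeys the derivation rule for the interior-product pairing.
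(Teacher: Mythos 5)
Your proof is correct and takes essentially the same route as the paper: push $X_\H$ forward by $\Phi$, use $\Phi^*\eta=\eta$, $\Phi_*\H=\H$ and the invariance of $\Reeb$ to check that $\Phi_*X_\H$ satisfies the defining equations \eqref{hamilton-contact-eqs}, and conclude by uniqueness. The paper dismisses the infinitesimal case as ``immediate from the definition''; your explicit Lie-derivative computation, ending with $[Y,X_\H]\in\ker\d\eta\cap\ker\eta=\{0\}$, is a correct and welcome filling-in of that step.
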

\begin{proof}
    If $X_\H$ is the contact Lagrangian vector field, then 
    \beann
        \inn(\Phi_*X_\H)\d\eta&=&
        \inn(\Phi_*X_\H)(\Phi^*\d\eta)=
        \Phi^*(\inn(X_\H)\d\eta)
        \\ &=&
        \Phi^*(\d\H-(\Lie_\Reeb \H)\eta)=\d\H-(\Lie_\Reeb \H )\eta ,
        \\
        \inn(\Phi_*X_\H)\eta&=&
        \inn(\Phi_*X_\H)(\Phi^*\eta)=
        \Phi^*(\inn(X_\H)\eta)=
        \Phi^*(-\H)=-\H \ .
    \eeann
    The proof for the infinitesimal case is immediate from the definition.
\end{proof}


\subsection{Dissipated and conserved quantities of contact Hamiltonian systems}

Associated with symmetries of contact Hamiltonian systems are
the concepts of {\sl dissipated} and {\sl conserved quantities}:

\begin{dfn}\label{dfn:diss-quan}
    A \textbf{dissipated quantity} of a contact Hamiltonian system is a function 
    $F\in\Cinfty(M)$ satisfying that 
    \beq{}\label{eq:dissipation}
    \Lie_{X_\H}F=-(\Lie_\Reeb \H )\,F \ .
    \eeq{}
\end{dfn}

For contact Hamiltonian systems, symmetries are associated with dissipated quantities as follows:


\begin{thm} 
\label{th:dissipation}
{\rm (Dissipation theorem).} 
Let $Y$ be a vector field on~$M$.
If $Y$ is an infinitesimal dynamical symmetry 
($[Y,X_H]=0$),
then the function
$F=-\inn(Y)\eta$
is a dissipated quantity.
\end{thm}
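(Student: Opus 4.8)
The plan is to compute $\Lie_{X_\H}F$ directly, where $F = -\inn(Y)\eta$, and show it equals $-(\Lie_\Reeb\H)\,F$. The natural tool is the Cartan-type identity together with the characterization of $X_\H$ from Proposition \ref{prop-assertions}(2), namely that $\Lie_{X_\H}\eta = -(\Lie_\Reeb\H)\,\eta$ and $\inn(X_\H)\eta = -\H$. Writing $F = -\inn(Y)\eta$, I would compute
$$
\Lie_{X_\H}F = -\Lie_{X_\H}\bigl(\inn(Y)\eta\bigr)
= -\inn([X_\H,Y])\eta - \inn(Y)\Lie_{X_\H}\eta .
$$
The first term vanishes because $Y$ is an infinitesimal dynamical symmetry, so $[X_\H,Y] = -[Y,X_\H] = 0$. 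For the second term I substitute $\Lie_{X_\H}\eta = -(\Lie_\Reeb\H)\,\eta$, obtaining
$$
\Lie_{X_\H}F = (\Lie_\Reeb\H)\,\inn(Y)\eta = -(\Lie_\Reeb\H)\,F ,
$$
which is exactly the defining condition \eqref{eq:dissipation} of a dissipated quantity.

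The only subtlety is the identity $\Lie_{X}(\inn(Y)\alpha) = \inn([X,Y])\alpha + \inn(Y)\Lie_X\alpha$ for a $1$-form $\alpha$; this is standard, following from $\Lie_X$ being a derivation commuting with contractions, or equivalently from $\Lie_X\inn(Y) - \inn(Y)\Lie_X = \inn([X,Y])$. I would either cite it as a well-known formula or note it in one line. The main (and really the only) obstacle is organizing these three ingredients correctly and making sure the sign in $[X_\H,Y] = -[Y,X_\H]$ is handled — everything else is immediate once Proposition \ref{prop-assertions}(2) is invoked for the crucial substitution $\Lie_{X_\H}\eta = -(\Lie_\Reeb\H)\eta$.

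One should also observe that this argument does not even need the full strength of $X_\H$ being the contact Hamiltonian vector field: it uses only the first equation of the system in Proposition \ref{prop-assertions}(2). In particular the second condition $\inn(X_\H)\eta = -\H$ is not used in the proof, so the statement would remain true for any vector field satisfying $\Lie_{X_\H}\eta = -(\Lie_\Reeb\H)\eta$; but since the theorem is stated for the genuine contact Hamiltonian system, I would simply invoke Proposition \ref{prop-assertions} directly. The proof is therefore a short three-line computation, and I expect no genuine difficulty.
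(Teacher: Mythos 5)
Your proof is correct and follows essentially the same route as the paper's: both compute $\Lie_{X_\H}F$ via the identity $\Lie_{X}\bigl(\inn(Y)\eta\bigr)=\inn([X,Y])\eta+\inn(Y)\Lie_{X}\eta$, kill the bracket term using $[Y,X_\H]=0$, and substitute $\Lie_{X_\H}\eta=-(\Lie_\Reeb\H)\,\eta$ from the second assertion of Proposition \ref{prop-assertions}. Your closing observation that only this Lie-derivative condition (and not $\inn(X_\H)\eta=-\H$) is used is also consistent with the paper's remark following the theorem.
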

\begin{proof}
This is a consequence of
\begin{align*}
\Lie_{X_\H} F &=
-\Lie_{X_\H} \inn(Y) \eta =
- \inn(Y) \Lie_{X_\H}\eta - \inn(\Lie_{X_H}Y) \eta =
\\
&=
(\Lie_\Reeb \H) \inn(Y)\eta + \inn([Y,X_H]) \eta =
 -(\Lie_\Reeb \H) F + \inn([Y,X_H]) \eta =
 -(\Lie_\Reeb \H) F
 \,,
\end{align*}
where we have applied the second assertion of Proposition \ref{prop-assertions}.
\end{proof}

\begin{obs}{\rm
The last equality shows that, indeed,
$[Y,X_\H] \in \mathop{\mathrm{Ker}} \eta$
is a necessary and sufficient condition for $F$ 
to be a dissipated quantity.
This has been noticed in \cite{DeLeon2019b}
while this paper was under review.
Nevertheless, it should be noted that such transformations are not dynamical symmetries in the sense of Definition \ref{dfn:sym}, 
since they do not transform solutions into solutions, in general.}	
\end{obs}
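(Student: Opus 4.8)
The plan is to observe that the chain of equalities displayed in the proof of Theorem~\ref{th:dissipation} is in fact valid for an \emph{arbitrary} vector field $Y\in\X(M)$: the hypothesis $[Y,X_\H]=0$ is invoked only in the very last equality, to discard the term $\inn([Y,X_\H])\eta$. Isolating everything that precedes that final step therefore yields an unconditional identity, and both the necessity and the sufficiency asserted in the remark will be read off from it at once, with no symmetry assumption on $Y$.

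Concretely, I would first record the master identity
\[
\Lie_{X_\H}F = -(\Lie_\Reeb\H)\,F + \inn([Y,X_\H])\eta,
\qquad F:=-\inn(Y)\eta,
\]
valid for every $Y\in\X(M)$. To derive it I would expand $\Lie_{X_\H}F=-\Lie_{X_\H}\inn(Y)\eta$ using the Leibniz rule for the Lie derivative contracted against a one-form, namely $\Lie_{X_\H}\inn(Y)\eta=\inn(\Lie_{X_\H}Y)\eta+\inn(Y)\Lie_{X_\H}\eta$. The point requiring care is the bookkeeping of signs: since $\Lie_{X_\H}Y=[X_\H,Y]=-[Y,X_\H]$ and $F=-\inn(Y)\eta$, the bracket term enters with a positive coefficient $+\inn([Y,X_\H])\eta$. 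For the remaining term I would substitute $\Lie_{X_\H}\eta=-(\Lie_\Reeb\H)\eta$, which is precisely the second assertion of Proposition~\ref{prop-assertions}; this converts $-\inn(Y)\Lie_{X_\H}\eta$ into $(\Lie_\Reeb\H)\inn(Y)\eta=-(\Lie_\Reeb\H)F$, completing the identity. Note that neither regularity nor any symmetry hypothesis is used anywhere.

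With the identity in hand, the equivalence is immediate. By Definition~\ref{dfn:diss-quan}, the function $F$ is a dissipated quantity exactly when $\Lie_{X_\H}F=-(\Lie_\Reeb\H)F$, i.e.\ when \eqref{eq:dissipation} holds; comparing this with the master identity, it is equivalent to $\inn([Y,X_\H])\eta=0$, that is, to $[Y,X_\H]\in\ker\eta$. This establishes both directions simultaneously. For the closing caveat I would then stress that $[Y,X_\H]\in\ker\eta$ is strictly weaker than $[Y,X_\H]=0$: the former constrains only the $\eta$-component of the bracket, whereas the vanishing of the whole bracket is what expresses that the flow of $Y$ commutes with that of $X_\H$ and hence carries integral curves of $X_\H$ into integral curves of $X_\H$. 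Thus a vector field satisfying only $[Y,X_\H]\in\ker\eta$ need not be an infinitesimal dynamical symmetry in the sense of Definition~\ref{dfn:sym}, which is exactly the point being made. The only genuine subtlety in the whole argument is the sign bookkeeping of the first step; once those signs are fixed, the statement is a direct reading of the unconditional identity.
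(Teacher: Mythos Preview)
Your argument is correct and is exactly the paper's own reasoning: the remark has no separate proof but simply points to the penultimate equality in the proof of Theorem~\ref{th:dissipation}, namely $\Lie_{X_\H}F=-(\Lie_\Reeb\H)F+\inn([Y,X_\H])\eta$, which holds for arbitrary $Y$ and from which the equivalence is immediate. You have merely spelled out explicitly what the paper leaves to the reader.
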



In particular, as it was established in \eqref{eq:disipenerg},
the Hamiltonian vector field $X_\H$ is
trivially a symmetry and its dissipated quantity is the energy, 
$F=-\inn(X_\H)\eta=\H$;
that is shown in the following:

\begin{thm}{\rm (Energy dissipation theorem).}
    $\Lie_{X_\H}\H=-(\Lie_\Reeb\H)\H$.
    \label{disipe2}
\end{thm}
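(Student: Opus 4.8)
The plan is to recognize that the Energy dissipation theorem is nothing but the already-established relation \eqref{eq:disipenerg}, so the proof amounts to recalling that computation in a self-contained way. The key observation is that the energy $\H$ can be written as $\H = -\inn(X_\H)\eta$, which is exactly the second of the contact Hamiltonian equations \eqref{hamilton-contact-eqs}. Once this is in hand, the statement becomes a special case of the Dissipation theorem (Theorem~\ref{th:dissipation}) applied to the vector field $Y = X_\H$, which is trivially an infinitesimal dynamical symmetry of itself since $[X_\H, X_\H] = 0$.

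Concretely, I would proceed as follows. First, note that $X_\H$ is a dynamical symmetry of the contact Hamiltonian system because $\Lie_{X_\H} X_\H = [X_\H, X_\H] = 0$. Then apply Theorem~\ref{th:dissipation} with $Y = X_\H$: the associated function is $F = -\inn(X_\H)\eta = \H$, using the second contact Hamiltonian equation in \eqref{hamilton-contact-eqs}. The conclusion of that theorem is precisely that $F$ is a dissipated quantity, i.e.\ $\Lie_{X_\H} F = -(\Lie_\Reeb \H)\,F$, which upon substituting $F = \H$ reads $\Lie_{X_\H}\H = -(\Lie_\Reeb\H)\,\H$.

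Alternatively, and perhaps more transparently, I would simply reproduce the short direct computation already displayed after \eqref{eq:disipenerg}:
\begin{align*}
\Lie_{X_\H}\H
&= -\Lie_{X_\H}\bigl(\inn(X_\H)\eta\bigr)
= -\inn(X_\H)\Lie_{X_\H}\eta
= \inn(X_\H)\bigl((\Lie_\Reeb\H)\,\eta\bigr)
= -(\Lie_\Reeb\H)\,\H\,,
\end{align*}
where the first equality uses $\inn(X_\H)\eta = -\H$, the second uses $\Lie_{X_\H}\inn(X_\H) = \inn(X_\H)\Lie_{X_\H}$ on forms, the third uses $\Lie_{X_\H}\eta = -(\Lie_\Reeb\H)\,\eta$ (the second assertion of Proposition~\ref{prop-assertions}), and the last again uses $\inn(X_\H)\eta = -\H$.

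There is essentially no obstacle here: the result is a corollary of material already proven, and the only thing to be careful about is which of the equivalent characterizations of $X_\H$ one invokes — the cleanest being the form $\Lie_{X_\H}\eta = -(\Lie_\Reeb\H)\,\eta$ from Proposition~\ref{prop-assertions}. I would present the proof in the two-line form above, noting explicitly that this is the relation \eqref{eq:disipenerg} and that it exhibits $\H$ itself as the dissipated quantity associated, via Theorem~\ref{th:dissipation}, to the trivial dynamical symmetry $Y = X_\H$.
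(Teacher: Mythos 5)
Your proposal is correct and follows the paper exactly: the paper establishes this identity as equation \eqref{eq:disipenerg} via the same four-step computation you reproduce, and it likewise frames Theorem~\ref{disipe2} as the special case of Theorem~\ref{th:dissipation} with $Y=X_\H$ and $F=-\inn(X_\H)\eta=\H$. Both of your routes coincide with the paper's, and there is no circularity since \eqref{eq:disipenerg} is proved before the dissipation theorem.
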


\begin{obs}{\rm
    Observe that these are ``non conservation theorems''.
    As we are dealing with dissipative systems,
    dynamical symmetries are not associated with conserved quantities, 
    but with dissipative quantities, and then these theorems
    account for the non-conservation of these quantities associated with the symmetries.
    In particular, the energy is not a conserved quantity,
    as it was early commented in Section \ref{prel}.
}
\end{obs}


\begin{dfn}\label{dfn:con-quan}
    A \textbf{conserved quantity} is a function 
    $G\colon M\longrightarrow \R$ satisfying that 
    $$
    \Lie_{X_\H}G=0 \ .
    $$
\end{dfn}

Every dissipated quantity changes with the same rate ($-\Reeb(\H)$), 
which suggests that the quotient of two dissipated quantities should be a conserved quantity. 
Indeed:

\begin{prop}
\label{prop:disscon} 
\begin{enumerate}
    \item If $F_1$ and $F_2$ are dissipated quantities and $F_2\neq0$, then $F_1/F_2$ is a conserved quantity.
    \item If $F$ is a dissipated quantity and $G$ is a conserved quantity, then $FG$ is a dissipated quantity.
\end{enumerate}
\end{prop}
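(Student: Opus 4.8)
The plan is to exploit the fact that the Lie derivative $\Lie_{X_\H}$ along the contact Hamiltonian vector field is a derivation on $\Cinfty(M)$, so it satisfies the Leibniz rule for products and the usual quotient rule for quotients of functions; both statements are then immediate from the defining equation \eqref{eq:dissipation} of a dissipated quantity and the defining equation in Definition \ref{dfn:con-quan} of a conserved quantity.

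For the first statement, I would work on the open set where $F_2$ does not vanish (which is where $F_1/F_2$ makes sense, as assumed) and compute, using the quotient rule for the derivation $\Lie_{X_\H}$,
\[
\Lie_{X_\H}\!\left(\frac{F_1}{F_2}\right)
= \frac{(\Lie_{X_\H}F_1)\,F_2 - F_1\,(\Lie_{X_\H}F_2)}{F_2^{\,2}}
= \frac{-(\Lie_\Reeb\H)\,F_1 F_2 + (\Lie_\Reeb\H)\,F_1 F_2}{F_2^{\,2}} = 0 ,
\]
where in the middle equality I substituted $\Lie_{X_\H}F_i = -(\Lie_\Reeb\H)\,F_i$ for $i=1,2$. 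The vanishing of $\Lie_{X_\H}(F_1/F_2)$ is exactly the condition for $F_1/F_2$ to be a conserved quantity.

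For the second statement, I would apply the Leibniz rule,
\[
\Lie_{X_\H}(FG) = (\Lie_{X_\H}F)\,G + F\,(\Lie_{X_\H}G)
= -(\Lie_\Reeb\H)\,FG + F\cdot 0 = -(\Lie_\Reeb\H)\,(FG) ,
\]
using that $F$ is dissipated and $G$ is conserved; this is precisely relation \eqref{eq:dissipation} for the function $FG$, so $FG$ is a dissipated quantity.

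There is no genuine obstacle here: the proof is a one-line computation in each case, the only point requiring a word of care being the domain restriction in part~1, namely that the statement is made on the open set $\{F_2\neq 0\}$, which is already built into the hypothesis.
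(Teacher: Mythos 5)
Your proof is correct and coincides with the paper's own argument: both parts are obtained by applying the Leibniz/quotient rules for the derivation $\Lie_{X_\H}$ and substituting the defining relations of dissipated and conserved quantities. The only cosmetic difference is that the paper writes the quotient rule as $F_2^{-1}\Lie_{X_\H}F_1-F_1F_2^{-2}\Lie_{X_\H}F_2$, which is identical to your expression.
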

\begin{proof}
    In fact, we have
    \begin{align*}
    \Lie_{X_\H}(F_1/F_2)&=
    F_2^{-1}\Lie_{X_\H}{F_1}-F_1F_2^{-2}\Lie_{X_\H}{F_2}=-F_2^{-1}(\Lie_\Reeb \H )F_1+F_1F_2^{-2}(\Lie_\Reeb\H)F_2=0
     \ .
     \\
     \Lie_{X_\H}(FG)&=G\Lie_{X_\H}F+F\Lie_{X_\H}G=-(\Lie_\Reeb \H) FG.
    \end{align*}
\end{proof}

\begin{obs}{\rm
If $\H\neq 0$, it is possible to assign a conserved quantity to an infinitesimal dynamical symmetry~$Y$. 
Indeed, from Theorem \ref{th:dissipation} and Proposition \ref{prop:disscon}, 
the function $-i(Y)\eta/\H$ is a conserved quantity. 
}
\end{obs}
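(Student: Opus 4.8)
The plan is to read this as an immediate corollary of two results already in hand: the Dissipation theorem (Theorem~\ref{th:dissipation}) and the energy dissipation relation \eqref{eq:disipenerg}, combined with the quotient rule for dissipated quantities in Proposition~\ref{prop:disscon}. No new geometric input is needed; the whole point is that $-\inn(Y)\eta$ and $\H$ dissipate at exactly the same rate $-\Lie_\Reeb\H$, so their quotient is constant along $X_\H$.

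First I would invoke Theorem~\ref{th:dissipation}: since $Y$ is an infinitesimal dynamical symmetry, $[Y,X_\H]=0$, hence the function $F:=-\inn(Y)\eta$ is a dissipated quantity, i.e.\ $\Lie_{X_\H}F=-(\Lie_\Reeb\H)\,F$ in the sense of Definition~\ref{dfn:diss-quan}. Next I would observe that the energy dissipation identity \eqref{eq:disipenerg} (equivalently Theorem~\ref{disipe2}), $\Lie_{X_\H}\H=-(\Lie_\Reeb\H)\,\H$, states precisely that $\H$ is itself a dissipated quantity. Finally, using the hypothesis that $\H$ has no zeros, I would apply Proposition~\ref{prop:disscon}(1) with $F_1=F$ and $F_2=\H$ to conclude that $F/\H=-\inn(Y)\eta/\H$ is a conserved quantity, $\Lie_{X_\H}\bigl(-\inn(Y)\eta/\H\bigr)=0$.

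There is no real obstacle; the only points to be careful about are that the hypothesis ``$\H\neq0$'' must be understood pointwise (so that the quotient is a globally defined smooth function) and that one genuinely needs $Y$ to be a \emph{dynamical} symmetry — or, more precisely, that $[Y,X_\H]\in\ker\eta$ — in order for $-\inn(Y)\eta$ to dissipate; a contact symmetry is a special case and also works. Alternatively, one could bypass the citations and verify the claim directly by the Leibniz rule, $\Lie_{X_\H}(F/\H)=\H^{-1}\Lie_{X_\H}F-F\,\H^{-2}\Lie_{X_\H}\H$, substituting the two dissipation identities and checking that the two terms cancel.
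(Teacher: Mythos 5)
Your proposal is correct and follows exactly the paper's route: Theorem~\ref{th:dissipation} gives that $-\inn(Y)\eta$ is a dissipated quantity, the energy dissipation identity \eqref{eq:disipenerg} gives that $\H$ is one too, and Proposition~\ref{prop:disscon}(1) yields that their quotient is conserved wherever $\H$ is nowhere zero. Your cautionary remarks (pointwise nonvanishing of $\H$, and that $[Y,X_\H]\in\ker\eta$ is what is really needed) match observations the paper itself makes, so there is nothing to add.
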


Finally, contact symmetries can be used to generate new dissipated quantities 
from a given dissipated quantity.
In fact, as a straightforward consequence of definitions 
\ref{dfn:diss-quan} and \ref{dfn:sym} we obtain the following:

\begin{prop}
    If $\Phi\colon M\rightarrow M$ is a contact symmetry and 
    $F\colon M\rightarrow \mathbb{R}$ is a dissipated quantity, then so is $\Phi^*F$.
\end{prop}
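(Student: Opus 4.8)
The plan is to verify the defining conditions of a dissipated quantity for the pulled-back function $\Phi^*F$ directly, using the fact that $\Phi$ preserves all the relevant structures. First I would recall what we have at our disposal: since $\Phi$ is a contact symmetry, we know $\Phi^*\eta=\eta$ and $\Phi_*\H=\H$ (equivalently $\Phi^*\H=\H$), and by the previous proposition $\Phi$ also preserves the Reeb vector field, $\Phi_*\Reeb=\Reeb$, and is a dynamical symmetry, so $\Phi_*X_\H=X_\H$ (equivalently $\Phi^*(\Lie_{X_\H}G)=\Lie_{X_\H}(\Phi^*G)$ for any function $G$, since pullback intertwines with Lie derivative along a $\Phi$-invariant field).

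The key computation is then short: starting from $\Lie_{X_\H}(\Phi^*F)$, I would use $\Phi_*X_\H=X_\H$ to write this as $\Phi^*(\Lie_{X_\H}F)$; then apply the hypothesis that $F$ is a dissipated quantity, i.e.\ $\Lie_{X_\H}F=-(\Lie_\Reeb\H)F$, to get $\Phi^*\bigl(-(\Lie_\Reeb\H)F\bigr)=-\Phi^*(\Lie_\Reeb\H)\cdot\Phi^*F$; and finally use $\Phi_*\Reeb=\Reeb$ together with $\Phi^*\H=\H$ to conclude $\Phi^*(\Lie_\Reeb\H)=\Lie_\Reeb\H$, so that the whole expression equals $-(\Lie_\Reeb\H)\,\Phi^*F$, which is exactly the condition \eqref{eq:dissipation} for $\Phi^*F$ to be a dissipated quantity.

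There is essentially no obstacle here: the statement is a formal consequence of naturality of the Lie derivative under diffeomorphisms and of the three invariance properties $\Phi^*\eta=\eta$, $\Phi^*\H=\H$, $\Phi_*\Reeb=\Reeb$ established earlier. The only point requiring a line of justification is the identity $\Phi^*(\Lie_{X_\H}F)=\Lie_{\Phi^*X_\H}(\Phi^*F)$, which holds for any vector field and any function, combined with $\Phi^*X_\H=X_\H$; and similarly $\Phi^*(\Lie_\Reeb\H)=\Lie_{\Phi^*\Reeb}(\Phi^*\H)=\Lie_\Reeb\H$. I would present the argument as a single displayed chain of equalities with a brief remark citing the preceding propositions, since the authors themselves flag it as a ``straightforward consequence'' of the definitions.
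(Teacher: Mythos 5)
Your proposal is correct and follows essentially the same route as the paper: the single chain of equalities $\Lie_{X_\H}(\Phi^*F)=\Phi^*\Lie_{\Phi_*X_\H}F=\Phi^*\Lie_{X_\H}F=\Phi^*\bigl(-(\Lie_\Reeb\H)F\bigr)=-(\Lie_\Reeb\H)\,\Phi^*F$. You merely make explicit the step $\Phi^*(\Lie_\Reeb\H)=\Lie_\Reeb\H$ (via $\Phi_*\Reeb=\Reeb$ and $\Phi^*\H=\H$), which the paper leaves implicit in its last equality.
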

\begin{proof}
In fact, we have
$$
\Lie_{X_\H}(\Phi^*F)=
\Phi^*\Lie_{\Phi_*X_\H}F=
\Phi^*\Lie_{X_\H}F=
\Phi^*(-\Lie_\Reeb \H )F=
-(\Lie_\Reeb \H)(\Phi^*F).
$$
The proof for the infinitesimal case is immediate from the definition.
\end{proof}

\subsection{Symmetries for contact canonical Hamiltonian systems}

Let $M=\Tan^*Q\times\mathbb{R}$ 
be canonical contact structure with the contact form 
$\eta=\d s-p_i\d q^i$
(see the example \ref{example-canmodel}).

Remember that, if $\varphi\colon Q\to Q$ is a diffeomorphism,
we can construct the diffeomorphism 
$\Phi:=(\Tan^*\varphi, {\rm {\rm Id}_{\mathbb R}}) \colon
\Tan^*Q\times\R
\longrightarrow 
\Tan^*Q\times\R$,
where $\Tan^*\varphi\colon\Tan^*Q\to\Tan^*Q$
is the canonical lifting of $\varphi$ to $\Tan^*Q$.
Then $\Phi$ is said to be the {\sl canonical lifting} of $\varphi$ to $\Tan^*Q\times\R$.
Any transformation $\Phi$ of this kind is called a {\sl natural transformation} of $\Tan^*Q\times\R$.

In the same way, given a vector field $Z\in \X(Q)$
we can define its {\sl complete lifting}
to $\Tan^*Q\times\R$ as the vector field
$Y\in\X(\Tan^*Q\times\R)$
whose local flow is the canonical lifting of 
the local flow of $Z$ to $\Tan^*Q\times\R$; 
that is, the vector field $Y=Z^{C*}$,
where $Z^{C*}$ denotes the complete lifting of $Z$ to $\Tan^*Q$,
identified in a natural way as a vector field in $\Tan^*Q\times\R$.
Any infinitesimal  transformation $Y$ of this kind is called a {\sl natural infinitesimal transformation} of $\Tan^*Q\times\R$.

It is well-known that the canonical forms $\theta_o$ and $\omega_o=-\d\theta_o$
in $\Tan^*Q$ are invariant under the action
of canonical liftings of diffeomorphisms and vector fields from $Q$ to $\Tan^*Q$.
Then, taking into account the definition
of the contact form $\eta$ in $\Tan^*Q\times\R$,
as an immediate consequence of the above considerations we have:

\begin{prop}
If
$\Phi\in{\rm Diff}(\Tan^*Q\times\R)$ (resp. $Y\in\vf(\Tan^*Q\times\R)$) is a canonical lifting to $\Tan^*Q\times\R$ of a diffeomorphism $\varphi\in{\rm Diff}(Q)$
(resp. of $Z\in\vf(Q)$), then
\begin{enumerate}
    \item
$\Phi^*\eta=\eta$ (resp $\Lie_Y\eta=0$).
    \item
If, in addition,
$\Phi^*\H=\H$ (resp. $\Lie_Y\H=0$),
then it is a  (infinitesimal) contact symmetry.
\end{enumerate}
\end{prop}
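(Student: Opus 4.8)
The plan is to reduce everything to the classical invariance of the Liouville $1$-form on $\Tan^*Q$ under cotangent lifts, recalled in the paragraph preceding the statement, together with the fact that the canonical liftings act trivially on the $\R$-factor.

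First I would handle the finite case. Write $\eta = \d s - \theta$, where $\theta = \pi_1^*\theta_o$ is the pull-back of the canonical $1$-form $\theta_o$ of $\Tan^*Q$ along the projection $\pi_1\colon \Tan^*Q\times\R\to\Tan^*Q$, and $s$ is the cartesian coordinate on $\R$. Since $\Phi = (\Tan^*\varphi,\mathrm{Id}_\R)$, the coordinate $s$ is left untouched, so $\Phi^*(\d s)=\d s$. On the other hand $\pi_1\circ\Phi = \Tan^*\varphi\circ\pi_1$, hence $\Phi^*\theta = \Phi^*\pi_1^*\theta_o = \pi_1^*\big((\Tan^*\varphi)^*\theta_o\big) = \pi_1^*\theta_o = \theta$, using the well-known identity $(\Tan^*\varphi)^*\theta_o=\theta_o$. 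Subtracting, $\Phi^*\eta=\eta$, which is item~(1).

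Next I would do the infinitesimal case. By definition $Y$ is the complete lift $Z^{C*}$ of $Z$ to $\Tan^*Q$, regarded as a vector field on $\Tan^*Q\times\R$; concretely $Y$ is $\pi_1$-related to $Z^{C*}$ and has vanishing component along $\partial/\partial s$ (and no $s$-dependence). Therefore $\inn(Y)\d s = 0$, so $\Lie_Y\d s = \d(\inn(Y)\d s) = 0$; and $\Lie_Y\theta = \Lie_Y\pi_1^*\theta_o = \pi_1^*\Lie_{Z^{C*}}\theta_o = 0$, using $\Lie_{Z^{C*}}\theta_o=0$. Hence $\Lie_Y\eta = \Lie_Y(\d s-\theta) = 0$. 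Alternatively this follows by differentiating $\Phi_t^*\eta=\eta$ along the flow $\Phi_t$ of $Y$, which is itself the canonical lift of the flow of $Z$ and so is covered by the finite case.

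Finally, item~(2) is immediate: once $\Phi^*\eta=\eta$ is known, the extra hypothesis $\Phi^*\H=\H$ is exactly the remaining condition in Definition~\ref{defsym}, so $\Phi$ is a contact symmetry; likewise $\Lie_Y\eta=0$ together with $\Lie_Y\H=0$ makes $Y$ an infinitesimal contact symmetry. I do not expect any genuine obstacle: the only points needing a little care are the precise meaning of ``$Z^{C*}$ viewed as a vector field on the product'' (so that $\inn(Y)\d s=0$ and $\Lie_Y$ commutes with $\pi_1^*$) and the appeal to the standard invariance $(\Tan^*\varphi)^*\theta_o=\theta_o$ and $\Lie_{Z^{C*}}\theta_o=0$, which the paper has already quoted as well known.
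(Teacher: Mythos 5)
Your proof is correct and follows exactly the route the paper intends: the paper gives no separate proof but states the proposition as an immediate consequence of the well-known invariance of $\theta_o$ under canonical liftings together with the definition $\eta=\d s-\theta$, which is precisely what you have written out carefully (including the $\pi_1$-relatedness needed to pass the Lie derivative through the pull-back). No gaps.
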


In particular, we have the following:

\begin{thm}{\rm (Momentum dissipation theorem).} 
If $\displaystyle\frac{\partial \H}{\partial q^i}=0$, 
then $\displaystyle\frac{\partial}{\partial q^i}$ 
is an infinitesimal contact symmetry, 
and its associated dissipated quantity  
is the corresponding momentum $\displaystyle p_i$;
that is,
$\Lie_{X_\H}p_i=-(\Lie_\Reeb\H)p_i$.
\label{disipe3}
\end{thm}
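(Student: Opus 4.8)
The plan is to recognize this statement as a direct corollary of the machinery already assembled: the proposition asserting that infinitesimal contact symmetries are infinitesimal dynamical symmetries, together with the dissipation theorem (Theorem~\ref{th:dissipation}). Thus the only genuine work is to check that $Y=\dfrac{\partial}{\partial q^i}$ is an infinitesimal contact symmetry in the sense of Definition~\ref{defsym}, and then to identify the function $-\inn(Y)\eta$ that the dissipation theorem produces.

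First I would verify $\Lie_{\partial/\partial q^i}\eta=0$ by a one-line coordinate computation. In the canonical chart $\eta=\d s-p_j\,\d q^j$, so $\inn(\partial/\partial q^i)\eta=-p_i$ and $\d\eta=\d q^j\wedge\d p_j$; Cartan's formula gives $\Lie_{\partial/\partial q^i}\eta=\d(\inn(\partial/\partial q^i)\eta)+\inn(\partial/\partial q^i)\d\eta=-\d p_i+\d p_i=0$. Combined with the hypothesis $\Lie_{\partial/\partial q^i}\H=\partial\H/\partial q^i=0$, this is precisely the pair of conditions in Definition~\ref{defsym}, so $\partial/\partial q^i$ is an infinitesimal contact symmetry. (Alternatively, one may invoke the immediately preceding proposition on natural infinitesimal transformations: $\partial/\partial q^i$ is the complete lift of the coordinate vector field $\partial/\partial q^i$ on $Q$, hence automatically satisfies $\Lie_{\partial/\partial q^i}\eta=0$, and the extra hypothesis $\Lie_{\partial/\partial q^i}\H=0$ upgrades it to a contact symmetry.)

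Next, by the proposition stating that infinitesimal contact symmetries are infinitesimal dynamical symmetries, we get $[\partial/\partial q^i,X_\H]=0$. Theorem~\ref{th:dissipation} then applies and yields the dissipated quantity $F=-\inn(\partial/\partial q^i)\eta=-(-p_i)=p_i$. Unwinding Definition~\ref{dfn:diss-quan} of a dissipated quantity gives exactly $\Lie_{X_\H}p_i=-(\Lie_\Reeb\H)\,p_i$, which is the asserted formula.

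There is essentially no obstacle here; the statement is a worked instance of the general theory. The only points requiring care are the sign conventions---making sure it is $-\inn(Y)\eta$, not $+\inn(Y)\eta$, that the dissipation theorem delivers, so that the conserved-type object is $+p_i$---and the use of the canonical expression of $\eta$ so that $\inn(\partial/\partial q^i)\eta=-p_i$. As a consistency check one can read the result directly off the explicit system~\eqref{diss-ham-eqs}: there $\dot p_i=-\bigl(\partial\H/\partial q^i+p_i\,\partial\H/\partial s\bigr)$, which under $\partial\H/\partial q^i=0$ reduces to $\dot p_i=-(\partial\H/\partial s)\,p_i=-(\Lie_\Reeb\H)\,p_i$, since $\Lie_\Reeb\H=\partial\H/\partial s$ in canonical coordinates, confirming the theorem independently.
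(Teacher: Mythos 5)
Your proposal is correct and follows essentially the same route as the paper: verify in canonical coordinates that $\partial/\partial q^i$ preserves $\eta$ and $\H$, conclude it is an infinitesimal contact symmetry and hence a dynamical symmetry, and then read off the dissipated quantity $-\inn(\partial/\partial q^i)\eta=p_i$ from the dissipation theorem. The sign bookkeeping and the consistency check against the coordinate Hamilton equations are both sound.
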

\begin{proof}
A simple computation in local coordinates shows that 
$\displaystyle
\Lie\Big(\frac{\partial}{\partial q^i}\Big)\eta=0$ 
and 
$\displaystyle
\Lie\Big(\frac{\partial}{\partial q^i}\Big)\H=0$.
Therefore it is a contact symmetry and, in particular, a dynamical symmetry. 
The other results are a consequence of the dissipation theorem. 
\end{proof}

\subsection{Symmetries for contact Lagrangian systems}

Consider a regular contact Lagrangian system $(\Tan Q\times\R,\L)$, 
with Reeb vector field $\Reeb_\L$,
and let $X_\L$ be the contact Euler--Lagrange vector field for this system;
that is, the solution to the Lagrangian equations \eqref{eq-E-L-contact1}.

All we have said about symmetries and dissipated quantities for
contact Hamiltonian systems holds when it is applied
to the contact system $(\Tan Q\times\R,\eta_\L,E_\L)$.
Thus we have the same definitions for dynamical and contact symmetries
and the dissipation theorem states that
$-\inn(Y)\eta_\L$ is a dissipated quantity,
for every infinitesimal dynamical symmetry $Y$.
In particular, the energy dissipation theorem says that
$$
\Lie_{X_\L}E_\L=-(\Lie_{\Reeb_\L}E_\L)E_\L\ .
$$

In the same way as for $\Tan^*Q\times\R$,
if $\varphi\colon Q\to Q$ is a diffeomorphism,
we can construct the diffeomorphism 
$\Phi:=(\Tan\varphi, {\rm {\rm Id}_{\mathbb R}}) \colon
\Tan Q\times\R
\longrightarrow 
\Tan Q\times\R$,
where $\Tan\varphi\colon \Tan Q\to\Tan Q$
is the canonical lifting of $\varphi$ to $\Tan Q$.
Then $\Phi$ is said to be the {\sl canonical lifting} of $\varphi$ to $\Tan Q\times\R$.
Any transformation $\Phi$ of this kind
is called a \emph{natural transformation} of $\Tan Q\times\R$.

Moreover, given a vector field $Z\in \X(\Tan Q)$
we can define its {\sl complete lifting}
to $\Tan Q\times\R$ as the vector field
$Y\in\X(\Tan Q\times\R)$
whose local flow is the canonical lifting of 
the local flow of $Z$ to $\Tan Q\times\R$; 
that is, the vector field $Y=Z^C$,
where $Z^C$ denotes the complete lifting of $Z$ to $\Tan Q$,
identified in a natural way as a vector field in $\Tan Q\times\R$.
Any infinitesimal transformation $Y$ of this kind
is called a \emph{natural infinitesimal transformation} 
of $\Tan Q\times\R$.

It is well-known that the vertical endomorphism $J$ and the Liouville vector field $\Delta_o$
in $\Tan Q$ are invariant under the action
of canonical liftings of diffeomorphisms and vector fields from $Q$ to $\Tan Q$.
Then, taking into account the definitions
of the canonical endomorphism ${\cal J}$
and the Liouville vector field $\Delta$ in $\Tan Q\times\R$,
it can be proved that canonical liftings of diffeomorphisms and vector fields
from $Q$ to $\Tan Q\times\R$ preserve these canonical structures
as well as the Reeb vector field $\Reeb$.

Therefore, as an immediate consequence, 
we obtain a relation between 
Lagrangian-preserving natural transformations 
and contact symmetries:

\begin{prop}
If $\Phi\in{\rm Diff}(\Tan Q\times\R)$ (resp. $Y\in\vf(\Tan Q\times\R)$) is a canonical lifting
to $\Tan Q\times\R$ of a diffeomorphism $\varphi\in{\rm Diff}(Q)$
(resp.\ of a vector field $Z\in\vf(Q)$)
that leaves the Lagrangian invariant, then
it is a (infinitesimal) contact symmetry, {\it i.e.},
$$
    \Phi^*\eta_\L=\eta_\L \:,\
    \Phi^*E_\L=E_\L
    \qquad 
    ({\rm resp.}\ \Lie_Y\eta_\L=0 \:,\
    \Lie_YE_\L=0 \:)\:.
$$
As a consequence, it is a (infinitesimal) dynamical symmetry.
\end{prop}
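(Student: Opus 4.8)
The plan is to reduce the statement to the facts collected just before the proposition: that the canonical lift $\Phi=(\Tan\varphi,{\rm Id}_\R)$ of $\varphi\in{\rm Diff}(Q)$ (resp. the complete lift $Y=Z^C$ of $Z\in\vf(Q)$) preserves the vertical endomorphism ${\cal J}$, the Liouville vector field $\Delta$ and the Reeb vector field, together with the evident extra remark that $\Phi$ fixes the cartesian coordinate, $s\circ\Phi=s$ (resp. $Y(s)=0$), since the $\R$-factor is acted on trivially. Granting these, everything else is a short computation built from the definitions $\theta_\L={}^t{\cal J}\circ\d\L$, $\eta_\L=\d s-\theta_\L$ and $E_\L=\Delta(\L)-\L$.

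First I would establish $\Phi^*\theta_\L=\theta_\L$. The key observation is that a diffeomorphism commuting with ${\cal J}$ also commutes with the transpose operator ${}^t{\cal J}$ acting on $1$-forms, so $\Phi^*\theta_\L=\Phi^*({}^t{\cal J}\,\d\L)={}^t{\cal J}(\Phi^*\d\L)={}^t{\cal J}\,\d(\Phi^*\L)={}^t{\cal J}\,\d\L=\theta_\L$, where the hypothesis $\Phi^*\L=\L$ enters in the last step. Since also $\Phi^*\d s=\d(s\circ\Phi)=\d s$, this gives $\Phi^*\eta_\L=\Phi^*(\d s-\theta_\L)=\eta_\L$. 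For the energy I would use that $\Phi$ preserves $\Delta$ and $\L$, so $\Phi^*(\Delta\L)=\Delta(\Phi^*\L)=\Delta\L$ and hence $\Phi^*E_\L=\Phi^*(\Delta\L-\L)=\Delta\L-\L=E_\L$. By Definition \ref{defsym} this means $\Phi$ is a contact symmetry of the contact Hamiltonian system $(\Tan Q\times\R,\eta_\L,E_\L)$, and the proposition proved earlier (contact symmetries are dynamical symmetries) then gives that $\Phi$ is a dynamical symmetry.

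The infinitesimal case runs in parallel, differentiating the flow identities: from $\Lie_Y{\cal J}=0$ one gets $\Lie_Y({}^t{\cal J}\,\alpha)={}^t{\cal J}(\Lie_Y\alpha)$, hence $\Lie_Y\theta_\L={}^t{\cal J}\,\d(\Lie_Y\L)=0$; with $\Lie_Y\d s=\d(Ys)=0$ this yields $\Lie_Y\eta_\L=0$; and $\Lie_Y E_\L=\Lie_Y\Lie_\Delta\L-\Lie_Y\L=\Lie_{[Y,\Delta]}\L+\Lie_\Delta\Lie_Y\L-\Lie_Y\L=0$ using $[Y,\Delta]=0$ and $\Lie_Y\L=0$. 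So $Y$ is an infinitesimal contact symmetry, and therefore an infinitesimal dynamical symmetry.

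I do not expect a real obstacle here: once the invariance of ${\cal J}$, $\Delta$ and $\Reeb$ under natural lifts is in hand (asserted in the paragraph preceding the proposition), the statement is essentially bookkeeping. The only point deserving care is the interchange of pull-back (resp. Lie derivative) with the transpose ${}^t{\cal J}$ on forms --- this is exactly where the commutation of $\Tan\Phi$ with ${\cal J}$ (resp. $\Lie_Y{\cal J}=0$) is used, so it should be made explicit rather than taken for granted. A minor but genuinely necessary remark is that $\Phi$ (resp. $Y$) leaves $s$ untouched: this is immediate from the product structure of the lift, but without it $\Phi^*\d s\neq\d s$ and the identity for $\eta_\L$ would break.
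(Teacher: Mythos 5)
Your proposal is correct and follows essentially the same route as the paper, which states the result as an ``immediate consequence'' of the invariance of ${\cal J}$, $\Delta$ and $\Reeb$ under canonical lifts and omits the details. Your explicit verification of the interchange of $\Phi^*$ (resp.\ $\Lie_Y$) with ${}^t{\cal J}$, and the remark that the lift acts trivially on the $s$-factor, are precisely the bookkeeping the paper leaves implicit.
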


As a consequence we have a result similar to  the momentum dissipation theorem \ref{disipe3}: if $\displaystyle\frac{\partial \L}{\partial q^i}=0$, 
then $\displaystyle\frac{\partial}{\partial q^i}$ 
is an infinitesimal contact symmetry and it associated dissipated quantity 
is the momentum 
$\displaystyle\frac{\partial \L}{\partial v^i}$;
that is,
$$
\Lie_{X_\L}\left(\frac{\partial \L}{\partial v^i}\right)=
-(\Lie_{\Reeb_\L}E_\L)\frac{\partial \L}{\partial v^i}=
\derpar{\L}{s}\frac{\partial \L}{\partial v^i}\ .
$$

\begin{obs}\rm
A similar problem is considered in \cite{GeGu2002}, where the used dissipation factor is
$\displaystyle \derpar{\L}{s}$ which, as we have seen in
Eq.~(\ref{ReebLag}), is the same that we have obtained.
\end{obs}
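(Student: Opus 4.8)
The plan is to verify that the coefficient which governs dissipation in the present contact Lagrangian formalism is exactly the function $\derpar{\L}{s}$, and then to match it with the damping factor appearing in the Herglotz-based Noether theory of \cite{GeGu2002}.

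First I would recall what plays the role of the dissipation factor here. Applying the dissipation machinery of Definition \ref{dfn:diss-quan} to the contact Hamiltonian system $(\Tan Q\times\R,\eta_\L,E_\L)$, a function $F$ is a dissipated quantity when $\Lie_{X_\L}F=-(\Lie_{\Reeb_\L}E_\L)\,F$, so the factor multiplying $F$ on the right-hand side, namely $-\Lie_{\Reeb_\L}E_\L$, is what we call the dissipation factor of the system. The same coefficient governs the energy dissipation theorem $\Lie_{X_\L}E_\L=-(\Lie_{\Reeb_\L}E_\L)E_\L$ and the momentum dissipation statement displayed just above the remark, so identifying it suffices.

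Next I would evaluate this coefficient by invoking \eqref{ReebLag}, which gives $\Lie_{\Reeb_\L}E_\L=-\derpar{\L}{s}$ and hence $-\Lie_{\Reeb_\L}E_\L=\derpar{\L}{s}$. This identity is checked in natural coordinates from the local form \eqref{coorReeb} of $\Reeb_\L$ together with $E_\L=\Delta(\L)-\L$: writing $\derpar{E_\L}{v^i}=v^jW_{ij}$ and contracting it against the $\partial/\partial v^i$ component of $\Reeb_\L$, the relation $W^{ji}W_{ik}=\delta^j_k$ forces the two terms proportional to $v^i\,\partial^2\L/\partial s\,\partial v^i$ to cancel, leaving precisely $-\derpar{\L}{s}$.

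Finally I would compare with \cite{GeGu2002}. In the generalized variational principle of Herglotz the action variable plays the role of $s$, and the factor that multiplies the dissipated quantities in the associated Noether-type theorem is $\derpar{\L}{s}$ by construction. Since both descriptions produce the same function $\derpar{\L}{s}$, the two dissipation factors coincide, which is the assertion of the remark. The only real obstacle is conceptual rather than computational: one must reconcile the variational convention of \cite{GeGu2002} with the intrinsic contact coefficient $-\Lie_{\Reeb_\L}E_\L$ used here; once both are reduced to natural coordinates via \eqref{ReebLag}, the identification is immediate.
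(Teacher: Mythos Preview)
Your proposal is correct and follows exactly the line the paper indicates: the remark has no separate proof in the paper, it simply points to Eq.~\eqref{ReebLag} to identify the dissipation coefficient $-\Lie_{\Reeb_\L}E_\L$ with $\derpar{\L}{s}$, and you carry out precisely that identification. Your added coordinate verification of \eqref{ReebLag} is a harmless elaboration of what the paper already states was ``easily proved in coordinates''.
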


\subsection{Symmetries of a contactified system}

The dissipation theorem (Theorem \ref{th:dissipation})
yields a dissipated quantity
from a dynamical symmetry~$Y$
with no additional hypotheses.
This is in contrast to Noether symmetries,
where the generator of the symmetry has to satisfy
some additional hypotheses
in order to yield a conserved quantity.
We want to understand this different behaviour.

So, let us consider a Hamiltonian system 
$(P,\omega,H_\circ)$
on an exact symplectic manifold~$P$, 
with symplectic form
$\omega = -\d \theta$,
and Hamiltonian function $H_\circ \colon P \to \R$.
Its associated Hamiltonian vector field
$X_\circ \in \vf(P)$
is defined by
$\inn_{X_\circ} \omega = \d H_\circ$.

The contactified of $(P,\omega)$
is the contact manifold $(M,\eta)$,
where
$M = P \times \R$
is endowed with the contact form
$\eta = \d s - \theta$;
here $s$ is the Cartesian coordinate of~$\R$,
and we use $\theta$ for the pull-back of the 1-form to the product 
(see example \ref{example2}).
The pull-back of $H_\circ$ to~$M$
defines a contact Hamiltonian function 
$H = H_\circ$ on~$M$.
The corresponding contact Hamiltonian vector field 
can be written
$\ds
X = X_\circ + \ell \derpar{}{s}
$,
where 
$X_\circ$ is the Hamiltonian vector field of~$H_\circ$
as a vector field on the product manifold, and 
$
\ell = \langle \theta,X_\circ \rangle - H_\circ
$.

Now, let 
$\ds Y_\circ \in \vf(P)$
be a vector field,
and construct 
$\ds
Y = Y_\circ + b \derpar{}{s}
$,
with $b \colon P \to \R$ a function.

\begin{lem}
The vector field $Y$ is a dynamical symmetry of~$X$
($\Lie_Y X = 0$)
if, and only if,
$Y_\circ$ is a dynamical symmetry of~$X_\circ$ 
($\Lie_{Y_\circ} X_\circ = 0$)
and
$\Lie_{Y} \ell = \Lie_{X} b$.
\end{lem}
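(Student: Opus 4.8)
The plan is to compute the Lie bracket $[Y,X]$ directly using the product-manifold decomposition $M = P \times \R$ and the explicit forms $X = X_\circ + \ell\,\partial/\partial s$ and $Y = Y_\circ + b\,\partial/\partial s$. Since $\partial/\partial s$ commutes with any vector field pulled back from $P$, and since $\ell, b$ are functions on $P$ (hence annihilated by $\partial/\partial s$), the bracket splits cleanly. First I would expand
\[
[Y,X] = [Y_\circ, X_\circ] + \Big( Y_\circ(\ell) - X_\circ(b) \Big)\frac{\partial}{\partial s}
\,,
\]
using that $[Y_\circ, \ell\,\partial/\partial s] = Y_\circ(\ell)\,\partial/\partial s$, that $[b\,\partial/\partial s, X_\circ] = -X_\circ(b)\,\partial/\partial s$, and that $[b\,\partial/\partial s, \ell\,\partial/\partial s] = 0$ because both coefficients are $s$-independent. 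Here the first term $[Y_\circ, X_\circ]$ is a vector field tangent to the $P$-factor, while the second term is a multiple of $\partial/\partial s$; these two pieces live in complementary sub-bundles of $\Tan M = \Tan P \times \Tan\R$, so $[Y,X] = 0$ holds if and only if both vanish separately.

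Next I would translate the two vanishing conditions into the statement. The condition $[Y_\circ, X_\circ] = 0$ is exactly $\Lie_{Y_\circ} X_\circ = 0$, i.e.\ $Y_\circ$ is a dynamical symmetry of $X_\circ$. The condition $Y_\circ(\ell) - X_\circ(b) = 0$, rewritten with Lie-derivative notation, is $\Lie_{Y}\ell = \Lie_{X} b$: indeed $\Lie_Y \ell = Y(\ell) = Y_\circ(\ell) + b\,\partial\ell/\partial s = Y_\circ(\ell)$ since $\ell$ is $s$-independent, and similarly $\Lie_X b = X(b) = X_\circ(b)$ since $b$ is $s$-independent. So the second equation is equivalent to $Y_\circ(\ell) = X_\circ(b)$, which completes the equivalence. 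Finally I would note the mild point that $X = X_\circ + \ell\,\partial/\partial s$ is genuinely the contact Hamiltonian vector field asserted in the paragraph: $\inn_X\eta = \inn_{X_\circ}(\d s - \theta) + \ell = \langle\theta,X_\circ\rangle - \langle\theta, X_\circ\rangle \cdots$ — more simply, one checks $\inn_X\eta = -H$ and $\inn_X\d\eta = \d H - (\Lie_\Reeb H)\eta$ using $\Reeb = \partial/\partial s$, $\Lie_\Reeb H = 0$, $\d\eta = -\d\theta = \omega$; but since the paragraph already states this form of $X$, I may take it as given.

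I expect no genuine obstacle here — the lemma is essentially a bookkeeping identity for brackets on a product manifold, and the only thing requiring care is keeping track of which objects are $s$-independent (namely $\ell$, $b$, $Y_\circ$, $X_\circ$, all pulled back from $P$) so that the cross terms in the bracket collapse correctly. The one conceptual remark worth making, tying back to the motivating discussion, is that the splitting $\Tan M = \Tan P \oplus \Tan\R$ forces the symmetry condition to decouple into a "symplectic part" ($\Lie_{Y_\circ}X_\circ = 0$) and a "compatibility part" ($\Lie_Y\ell = \Lie_X b$); I would flag that this is the structural reason behind the later comparison with Noether-type symmetries, but the proof itself needs nothing beyond the bracket computation above.
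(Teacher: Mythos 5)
Your proposal is correct and follows essentially the same route as the paper, whose entire proof is the single bracket identity $[Y,X]=[Y_\circ,X_\circ]+\left(\Lie_{Y}\ell-\Lie_{X}b\right)\partial/\partial s$; you simply supply the details the paper leaves implicit (the $s$-independence of $\ell$, $b$, $Y_\circ$, $X_\circ$ that kills the cross terms, and the fact that the two summands lie in complementary sub-bundles so they must vanish separately).
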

\begin{proof}
The proof is a consequence of a direct calculation:
$$
[Y,X] =
[Y_\circ,X_\circ] + 
\left( \Lie_{Y} \ell - \Lie_{X} b \right) \derpar{}{s}
\,.
\vadjust{\kern -8mm}
$$
\end{proof}

Now let us consider the quantity
$$
G =-\inn(Y)\eta=-\inn(Y_o)\eta - b
\,.
$$
We have
\begin{align*}
\Lie_X G 
&=
-\inn(Y)\Lie_{X}\eta-\inn([X,Y])\eta
\\
&=
(\Lie_{\Reeb}H)\inn(Y)\eta
-\inn(Y)(\Lie_{X}\eta+(\Lie_\Reeb H)\eta)+\inn([Y,X])\eta
\\
&=
(\Lie_{\Reeb}H)\inn(Y)\eta
-\inn(Y)(\Lie_{X}\eta+(\Lie_\Reeb H)\eta)
+\inn([Y_\circ,X_\circ])\eta
+\left( \Lie_{Y} \ell - \Lie_{X} b \right) 
\,.
\end{align*}
Let us analyze the vanishing of these summands:
the first one is zero because $H=H_\circ$ does not depend on~$s$,
the second one also is because $X$ is the contact Hamiltonian vector field,
and, according to the lemma, 
the third and forth ones vanish 
if 
$\Lie_Y X = 0$.
So we conclude that,
if $Y$ is a dynamical symmetry of~$X$,
$G$ is a conserved quantity for the contact dynamics~$X$
(conserved rather than dissipated, 
since $H$ does not depend on~$s$).
Now, notice that
$$
\Lie_X G = 
\Lie_{X_\circ} \inn(Y_\circ)\theta 
- \Lie_X b
\,.
$$
So we arrive at the following interpretation:
when $Y_0$ is a dynamical symmetry of $X_\circ$,
the function 
$G_\circ = \inn(Y)\theta$
is not necessarily a conserved quantity of~$X_\circ$
(the hypothesis of Noether's theorem are not necessarily satisfied).
However,
in the contactified Hamiltonian system,
with $Y$ a dynamical symmetry projectable to~$Y_\circ$,
$$
\Lie_X G_\circ =
\Lie_X (G+b) =
\Lie_X b =
\Lie_Y \ell
\,,
$$
and the time-derivative of $G_\circ$ is compensated by the time-derivative of~$b$.

When is $G_\circ$ conserved by $X_\circ$?
When 
$\Lie_{Y_\circ} \ell = 0$.
This happens, for instance, when 
$Y_\circ$ is an exact Noether symmetry, i.e., when
$\Lie_{Y_\circ} \theta = 0$
and
$\Lie_{Y_\circ} H_\circ = 0$,
since
$$
\Lie_{Y_\circ} \ell = 
\Lie_{Y_\circ}
  \left( \inn(X_\circ)\theta - H_\circ \right) =
0
\,.
$$

\section{Examples}
\label{examples}

\subsection{The damped harmonic oscillator}

The Lagrangian description of the one dimensional harmonic oscillator is given by $Q= \R$ and 
$\displaystyle L=\frac{1}{2}mv^{2} -\frac{1}{2}m\omega^{2}q^{2}$. The Euler--Lagrange equation gives:
$$
\ddot{q}+\omega^{2}q=0 .
$$
Taking $M=\R\times\R$  and the contact form $\eta=\d s-v\d q$, we can introduce the Lagrangian $\L=L+\phi$, with holonomic dissipation term $\phi=-\gamma s$, and thus
$\displaystyle 
E_\L = \frac{1}{2}mv^2 + \frac{1}{2}m\omega^2q^2 + \gamma s
$. 
We have as dynamical equation:
$$
\ddot{q}=-\omega^{2}q-\gamma\dot{q} 
$$
corresponding to the dynamical equation of the damped harmonic oscillator. 
Therefore, the contact Lagrangian vector field is
$$
\Gamma_{\L}=\L\frac{\partial}{\partial s}+v\frac{\partial}{\partial q}-
\left(\omega^{2}q+\gamma v\right) \frac{\partial}{\partial v}\  .
$$

The dissipation of the energy is given by Theorem~\ref{disipe2}
$$
\Lie_{\Gamma_\L}E_\L=-\gamma E_\L\ .
$$

\subsection{Motion in a constant gravitational field with friction}

Consider the motion of a particle in a vertical plane under the action of constant gravity; then $Q=\R^2$ with coordinates $(x,y)$. 
This motion is described by the Lagrangian $L=\frac{1}{2}mv^2 - mgy$,
where $v^2 = v_x^2 + v_y^2$ in $\Tan Q$ with coordinates $(x,y,v_x,v_y)$.

To introduce air friction, we consider the Lagrangian with holonomic dissipation term $\L = L - \gamma s$ in 
$M=\Tan Q\times\R$, with coordinates $(x,y,v_y,v_x,s)$.
We have
\begin{align*}
    \theta_\L &= mv_x\d x + mv_y\d y\ ,\\
    \eta_\L &= \d s - \theta_\L = \d s - mv_x\d x - mv_y\d y\ ,\\
    \d\eta_\L &= m\d x\wedge \d v_x + m\d y\wedge\d v_y\ ,\\
    E_\L &= \frac{1}{2}mv^2 +mgy + \gamma s\ ,\\
    \Reeb_\L &= \frac{\partial}{\partial s}\ .
\end{align*}

The dynamical equations are
\begin{align}
    \label{E-Lag-1}
    i(X)\d\eta_\L &= \d E_\L - \Reeb_\L(E_\L)\eta_\L\ ,\\
    \label{E-Lag-2}
    i(X)\eta &= -E_\L\ .
\end{align}
Writing $X\in\X(\Tan Q\times\R)$ as
\begin{equation*}
X = 
a\frac{\partial}{\partial s} + b\frac{\partial}{\partial x} + 
c\frac{\partial}{\partial y} +
d\frac{\partial}{\partial v_x} + e\frac{\partial}{\partial v_y}\:,
\end{equation*}
and using that
$$\d E_\L = mv_x\d v_x + mv_y\d v_y + mg\d y + \gamma\d s$$ 
and that
$$\Reeb_\L(E_\L)\eta_\L= \gamma\d s - \gamma mv_x\d x - \gamma mv_y\d y$$,
from equation \eqref{E-Lag-2} we get that 
$a = bv_x + cv_y - E_\L$,
while equation \eqref{E-Lag-1} gives the following conditions:
\begin{equation*}
b = v_x \,,\quad
c = v_y \,,\quad
d = -\gamma v_x \,,\quad
e = -\gamma v_y - g \,.
\end{equation*}
The first two conditions imply that $a = \L$. 
Summing up, the contact Lagrangian vector field is
\begin{equation*}
    \Gamma_\L = \L\frac{\partial}{\partial s} + v_x\frac{\partial}{\partial x} + v_y\frac{\partial}{\partial y} -
    \gamma v_x\frac{\partial}{\partial v_x} - (g + \gamma v_y)\frac{\partial}{\partial v_y}\ .
\end{equation*}
Hence, we get the following system of differential equations:
\begin{equation*}
\begin{cases}
    \ddot{x} + \gamma \dot{x} = 0  \:,\\
    \ddot{y} + \gamma \dot{y} + g = 0  \:,\\
    \dot{s} = \L \:.
\end{cases}
\end{equation*}

As in the previous example, the energy dissipation is given by Theorem \ref{disipe2}: 
$$
\Lie_{\Gamma_\L}E_\L=-\gamma E_\L.
$$ 
We also have that $\displaystyle\frac{\partial \L}{\partial x}=0$, thus 
it is immediate to check that $\displaystyle\frac{\partial }{\partial x}$ is a contact symmetry. 
The associated dissipated quantity is its corresponding momentum: 
$\displaystyle p^x=\frac{\partial \L}{\partial v_x}=mv_x$.
The dissipation of this quantity is given by Theorem \ref{disipe3}
$$
\Lie_Xp^x=-\gamma p^x \ .
$$
Finally, as an application of Proposition \ref{prop:disscon} we have the following conserved quantity
$$
\frac{E_\L}{p^x}=
\frac{\frac{1}{2}mv^2 +mgy + \gamma s}{mv_x} \, .
$$

\subsection{Parachute equation}

Consider the vertical motion of a particle falling in a fluid under the action of constant gravity. If the friction is modelled by the drag equation, the force is proportional to the square of the velocity. This motion can be described as a contact dynamical system in $M=\Tan \R\times\R$, with coordinates $(y,v,s)$, by the Lagrangian function 
\begin{equation*}
\L = 
\frac12mv^2 
-\frac{mg}{2\gamma}\left(e^{2\gamma y}-1\right)
+2\gamma vs \, ,
\end{equation*}
where $\gamma$ is a friction coefficient 
depending on the density of the air, the shape of the object, etc.
Note that this is \emph{not} a Lagrangian with holonomic dissipation term.
We have:
\begin{align*}
\theta_\L &= \left(mv+2\gamma s\right)\d y\ ,\\
\eta_\L &= \d s - \theta_\L = \d s - \left(mv+2\gamma s\right)\d y\ ,\\
\d\eta_\L &= m\,\d y\wedge \d v + 2\gamma\,\d y\wedge\d s\ ,\\
E_\L &= \frac{1}{2}mv^2 +\frac{mg}{2\gamma}\left(e^{2\gamma y}-1\right)\ ,\\
\Reeb_\L &= \frac{\partial}{\partial s}-\frac{2\gamma}{m}\frac{\partial}{\partial v}\ .
\end{align*}

Now we want to write the contact dynamical equations
 \eqref{eq-E-L-contact1}.
Writing the vector field as
\begin{equation*}
    X = a\frac{\partial}{\partial s} + b\frac{\partial}{\partial y} + c\frac{\partial}{\partial v}\in\X(\Tan \R\times\R)\,,
\end{equation*}
and using 
\begin{align*}
\d E_\L &= mv\,\d v+mg\,e^{2\gamma y}\,\d y \ ,
\\
\Reeb_\L(E_\L)\eta_\L&=
-2\gamma v \left(\strut \d s - \left(mv+2\gamma s\right)\d y\right) \ ,
\end{align*}
we derive the following conditions:
\begin{equation*}
\begin{cases}
a = 
(mv+2\gamma)b - E_\L  \,, 
\\
b = 
v  \,,
\\
\displaystyle 
c = 
-g\,e^{2\gamma y} - \frac{2\gamma}{m}a + 2\gamma v^2 + \frac{4\gamma^2}{m}vs \,.
\end{cases}
\end{equation*}
The first two imply that $a = \L$. 
Using this, the third one becomes 
$c=-g + \gamma v^2$.
Summing up we have that the contact Lagrangian vector field is
\begin{equation*}
\Gamma_\L = 
v\frac{\partial}{\partial y}
+ (\gamma v^2-g) \frac{\partial}{\partial v}
+ \L\,\frac{\partial}{\partial s} 
\:.
\end{equation*}
Hence, we get the following system of differential equations
\begin{equation}
\label{eq:parachute}
\begin{cases}
\ddot{y} - \gamma \dot{y}^2 + g = 0\ ,\\
\dot{s} = \L\ .
\end{cases}
\end{equation}
As in the previous example, the energy dissipation is given by Theorem \ref{disipe2},
$$
\Lie_{\Gamma_\L}E_\L= 2\gamma v E_\L \,.
$$ 

\begin{obs}\rm
Eq.~(\ref{eq:parachute}) describes an object falling ($\dot y < 0$).
To describe an object ascending it suffices to
change $\gamma$ for $-\gamma$ in the Lagrangian.
\end{obs}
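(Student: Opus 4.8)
The plan is to prove the claim by re-running, with $\gamma$ replaced by $-\gamma$, the computation already carried out for the parachute Lagrangian, and then checking that the resulting second-order equation is the physically correct one for an ascending particle.

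First I would write out the Lagrangian obtained by the substitution,
\[
\L_- = \tfrac12 m v^2 + \frac{mg}{2\gamma}\bigl(e^{-2\gamma y}-1\bigr) - 2\gamma v s \, ,
\]
which is well defined exactly when $\gamma\neq 0$ (as is $\L$ itself, because of the factor $1/\gamma$), and which is still a \emph{regular} contact Lagrangian on $\Tan\R\times\R$ since regularity only involves the Hessian $\partial^2\L_-/\partial v^2 = m$. Next I would observe that the whole chain of identities computed in the example --- the expressions for $\theta_\L$, $\eta_\L$, $\d\eta_\L$, $E_\L$ and $\Reeb_\L$ --- uses no assumption on the sign of $\gamma$; hence the corresponding objects for $\L_-$ are obtained from those in the example by the formal replacement $\gamma\mapsto-\gamma$, e.g.\ $E_{\L_-} = \tfrac12 m v^2 - \frac{mg}{2\gamma}(e^{-2\gamma y}-1)$ and $\Reeb_{\L_-} = \partial/\partial s + (2\gamma/m)\,\partial/\partial v$.

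Then I would solve the contact Lagrangian equations \eqref{eq-E-L-contact1} for $\L_-$ exactly as before. Since every algebraic step in the example is insensitive to the sign of $\gamma$, the computation yields
\[
\Gamma_{\L_-} = v\,\frac{\partial}{\partial y} - (\gamma v^2 + g)\,\frac{\partial}{\partial v} + \L_-\,\frac{\partial}{\partial s} \, ,
\]
hence the dynamical equation $\ddot y + \gamma\dot y^2 + g = 0$ together with $\dot s = \L_-$. Equivalently, one may simply note that the equation $\ddot y - \gamma\dot y^2 + g = 0$ was derived in the example for an arbitrary nonzero parameter, so substituting $\gamma\mapsto-\gamma$ in it produces $\ddot y + \gamma\dot y^2 + g = 0$ at once. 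Its interpretation settles the claim: for an ascending particle ($\dot y>0$) the quadratic drag must point downward, i.e.\ equal $-\gamma\dot y^2$ with $\gamma>0$, which is precisely the new equation, gravity still contributing $-g$; so $\L_-$ does describe the object on the way up.

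The argument is essentially bookkeeping; the only point worth emphasizing is \emph{why} two different Lagrangians are genuinely needed rather than one. Because the drag term is quadratic in $\dot y$ it keeps a fixed sign regardless of the direction of motion, whereas a physical drag must always oppose the velocity; hence no single smooth Lagrangian of this type can cover both $\dot y<0$ and $\dot y>0$, and the two regimes are related exactly by $\gamma\mapsto-\gamma$. I expect no real obstacle beyond carefully checking that the contact and regularity conditions survive the substitution.
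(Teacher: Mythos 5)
Your verification is correct and proceeds exactly as the paper intends: the remark is justified by the formal substitution $\gamma\mapsto-\gamma$ in the Lagrangian, propagating it through the (sign-insensitive) computation of $\eta_\L$, $E_\L$, $\Reeb_\L$ and the contact Lagrangian vector field to obtain $\ddot y+\gamma\dot y^{2}+g=0$, whose drag term opposes an upward velocity. Your closing observation about why a single Lagrangian cannot cover both regimes is a correct and worthwhile addition, but the core argument matches the paper's (implicit) one.
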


\section{Conclusions and outlook}

We have studied the geometric formulation
of the Lagrangian and Hamiltonian formalisms of
dissipative mechanical systems using contact geometry.
While contact Hamiltonian systems are well-known,
contact Lagrangian systems have been introduced very recently
\cite{Ciaglia2018,DeLeon2019}. 
Within this study
we have given a new expression of the dynamical equations 
of a contact Hamiltonian system
without the use of the Reeb vector field.
This setting would be especially useful in the case of singular contact systems, 
in which we have a pre-contact structure and then Reeb vector fields are not uniquely determined.

We have reviewed the Lagrangian formalism of contact systems,
which takes place in the bundle $\Tan Q\times\R$.
The most general framework for this formalism is established in \cite{DeLeon2019},
while in \cite{Ciaglia2018} a different formalism is presented
and applied to a particular type of Lagrangians with a dissipation term, 
which is very common in physical applications.
Both formulations have been compared, 
studying their equivalence.

We have analyzed the concept of symmetry for dissipative systems
introducing different kinds of them: for contact Hamiltonian systems in general,
those preserving the dynamics and those preserving the contact structure and the Hamiltonian function,
and, for contact Lagrangian systems, the natural symmetries preserving the Lagrangian function.
The relation among them has been established.

We have introduced the notion of dissipated quantity and, as
the most relevant result, we have proved a {\sl dissipation theorem}, which establishes a relationship between infinitesimal symmetries and dissipated quantities. 
This is analogous to the Noether theorem for conservative systems.
We have also proved that the quotient of two dissipated quantities is a conserved quantity.

Finally, we have also compared the notion of symmetries for symplectic dynamical systems
and dissipative contactified systems, showing that the requirements to be a dynamical symmetry are substantially different.

As some examples we have discussed several common models in physics.
We have proposed contact Lagrangian systems
which lead to the standard dynamical equations 
for the damped harmonic oscillator, 
the motion in a gravitational field with friction, 
and the parachute motion.

Further research on these topics could include 
the study of the non-autonomous case in the Hamiltonian and Lagrangian formalisms,
the characterization of equivalent Lagrangians,
or to pursue the analysis of the geometric structures in the singular case.
On the other hand,
it would be interesting 
to develop a geometric framework 
for Lagrangian field theories with dissipation, 
expanding the results recently obtained in \cite{GGMRR-2019}.

\subsection*{Acknowledgments}
We acknowledge the financial support from the 
Spanish Ministerio de Econom\'{\i}a y Competitividad
project MTM2014--54855--P, 
the Ministerio de Ciencia, Innovaci\'on y Universidades project
PGC2018-098265-B-C33,
and the Secretary of University and Research of the Ministry of Business and Knowledge of
the Catalan Government project
2017--SGR--932.
We thank the referee for his careful reading of the manuscript and his constructive suggestions and comments.

\bibliographystyle{abbrv}
\addcontentsline{toc}{section}{References}
\itemsep 0pt plus 1pt
\small


\end{document}